\newtheorem{theorem}{Theorem}[section]
\newtheorem{corollary}[theorem]{Corollary}
\newtheorem{lemma}[theorem]{Lemma}
\newtheorem{remark}[theorem]{Remark}
\newtheorem{proposition}[theorem]{Proposition}
\numberwithin{theorem}{section}
\numberwithin{equation}{section}
\newcommand{\Real}{\mathbb R}
\newcommand{\set}[1]{\left\{#1\right\}}
\newcommand{\la}{\langle}
\newcommand{\ra}{\rangle}
\newcommand{\Comp}{\mathbb{C}}
\newcommand{\D}{\mathcal{D}}
\newcommand{\Hi}{\mathcal{H}}
\newcommand{\Le}{\mathcal{L}}
\newcommand{\M}{\mathcal{M}}
\newcommand{\Om}{\Omega}
\newcommand{\om}{\omega}
\global\long\def\tp{\mathop{\xymatrix{*+<.7ex>[o][F-]{\scriptstyle \top}}
 } }
\begin{document}

\title[Entanglement detection under group symmetry]{A universal framework for entanglement detection under group symmetry}


\author[S.-J. Park]{Sang-Jun Park}
\address{Sang-Jun Park, 
Department of Mathematical Sciences, Seoul National University, 
Gwanak-Ro 1, Gwanak-Gu, Seoul 08826, Republic of Korea}
\email{psj05071@snu.ac.kr}

\author[Y.-G. Jung]{Yeong-Gwang Jung}
\address{Yeong-Gwang Jung, 
Department of Mathematics Education, Seoul National University, 
Gwanak-Ro 1, Gwanak-Gu, Seoul 08826, Republic of Korea}
\email{wollow21@snu.ac.kr}

\author[J. Park]{Jeongeun Park}
\address{Jeongeun Park, 
Department of Mathematics Education, Seoul National University,
Gwanak-Ro 1, Gwanak-Gu, Seoul 08826, Republic of Korea}
\email{pju0013@snu.ac.kr}

\author[S.-G. Youn]{Sang-Gyun Youn}
\address{Sang-Gyun Youn, 
Department of Mathematics Education, Seoul National University, 
Gwanak-Ro 1, Gwanak-Gu, Seoul 08826, Republic of Korea}
\email{s.youn@snu.ac.kr }

\maketitle

\begin{abstract}
One of the most fundamental questions in quantum information theory is PPT-entanglement of quantum states, which is an NP-hard problem in general. In this paper, however, we prove that all PPT $(\overline{\pi}_A\otimes \pi_B)$-invariant quantum states are separable if and only if all extremal unital positive $(\pi_B,\pi_A)$-covariant maps are decomposable where $\pi_A,\pi_B$ are unitary representations of a compact group and $\pi_A$ is irreducible. Moreover, an extremal unital positive $(\pi_B,\pi_A)$-covariant map $\Le$ is decomposable if and only if $\Le$ is completely positive or completely copositive. We then apply these results to prove that all PPT quantum channels of the form $$\Phi(\rho)=a\frac{\text{Tr}(\rho)}{d}\text{Id}_d+ b\rho+c\rho^T+(1-a-b-c)\text{diag}(\rho)$$ are entanglement-breaking, and that all A-BC PPT $(U\otimes \overline{U}\otimes U)$-invariant tripartite quantum states are A-BC separable. The former strengthens some conclusions in \cite{VW01,KMS20}, and the latter provides a strong contrast to the fact that there exist PPT-entangled $(U\otimes U\otimes U)$-invariant tripartite Werner states \cite{EW01} and resolves some open questions raised in \cite{COS18}.
\end{abstract}

\section{Introduction}

Quantum entanglement is one of the most non-classical manifestations of quantum formalism and is considered a key resource for quantum communication. Indeed, quantum entanglement plays crucial roles in the existence of Bell correlations \cite{Be64,We89}, quantum cryptography \cite{Ek91,JSWWZ00,TBZG00,NPWBK00}, superdense coding \cite{BeWi92,MWKZ96}, quantum teleportation \cite{BBC93,BPMEWZ}, entanglement-assisted classical communication \cite{BSSJT}, and computational supremacy for communication complexity problems \cite{Br03,BvDHT,CDD}.

Whether a given bipartite quantum state is entangled is a fundamental question in quantum information theory(QIT). While the above question is NP-hard in general \cite{Gu03,Gh10}, there is one effective test for entanglement called the PPT (positive partial transpose) criterion, saying that non-PPT states are entangled. The PPT criterion leaves us questioning whether PPT states can be entangled. It turns out that PPT entangled states are bound entangled, which are applicable to perform non-classical tasks \cite{HHH99,VW02,Ma06} and to produce secure cryptographic key \cite{HHHO05,HHHO09,HPHH08}. Note that there are numerous results on multipartite entanglements including the classification of entanglements for GHZ states in various contexts \cite{Ka11,Gu11,HK16,HaKy16}. 

Given the difficulty of testing entanglement, it is a common technique in quantum theory to impose certain symmetry in the hope that the symmetry allows us to restrict our attention to more tractable models. In this paper, we focus on {\it invariant quantum states} with respect to {\it compact} group symmetries. More precisely, for finite-dimensional unitary representations $\pi_A:G\rightarrow B(H_A)$ and $\pi_B:G\rightarrow B(H_B)$ of a compact group $G$, a bipartite quantum state $\rho\in \mathcal{D}(H_A\otimes H_B)$ is called $(\pi_A\otimes \pi_B)$-invariant if
\begin{equation}
(\pi_A(x)\otimes \pi_B(x))\rho = \rho (\pi_A(x)\otimes \pi_B(x))
\end{equation}
for all $x\in G$. Werner states and isotropic states are standard examples of invariant quantum states with respect to the standard representations of unitary groups, and their entanglements were characterized in \cite{We89} and \cite{HoHo99}, respectively. The cases of other group symmetries can be found in \cite{VW01, EW01,KCL05,DPR07b,TG09,Al14,BCS20,SiNe21,CKKLY21}.

The dual objects of invariant quantum states are the so-called {\it$(\overline{\pi_A},\pi_B)$-covariant quantum channels}, which are completely positive trace-preserving (CPTP) maps $\mathcal{L}:B(H_A)\rightarrow B(H_B)$ satisfying
\begin{equation}
\mathcal{L}(\overline{\pi_A(x)}X\pi_A(x)^T)=\pi_B(x)\mathcal{L}(X)\pi_B(x)^*
\end{equation}
for all $X\in B(H_B)$ and $x\in G$. Indeed, for a linear map $\mathcal{L}:B(H_A)\rightarrow B(H_B)$ and its normalized Choi matrix $C_{\mathcal{L}}= \frac{1}{d_A}\sum_{i,j=1}e_{ij}\otimes \mathcal{L}(e_{ij})$, the given map $\mathcal{L}$ is $(\overline{\pi_A},\pi_B)$-covariant if and only if $C_{\mathcal{L}}$ is $(\pi_A\otimes \pi_B)$-invariant (Corollary \ref{cor-twirling}). Furthermore, a $(\overline{\pi_A},\pi_B)$-covariant quantum channel $\mathcal{L}$ is PPT non-EB(entanglement breaking) if and only if the Choi matrix $C_{\mathcal{L}}$ is a $(\pi_A\otimes \pi_B)$-invariant PPT entangled quantum state from the so-called {\it channel-state duality.}
 
Apart from the channel-state duality, there is another dual picture to understand PPT entanglement via the so-called {\it trace duality}. The well-known Horodecki criterion from \cite{HHH96} is that a bipartite quantum state $\rho\in \mathcal{D}(H_A\otimes H_B)$ is entangled if and only if $(\text{id}\otimes \mathcal{L})(\rho)\ngeq 0$ for some positive linear map $\mathcal{L}:B(H_B)\rightarrow B(H_A)$. This says that positive maps can be regarded as detectors for quantum entanglement. Moreover, if we want to witness PPT entangled states, $\Le$ should be taken to be \textit{non-decomposable} in the sense that $\mathcal{L}$ cannot be written as a sum of CP maps and CCP (completely copositive) maps. However, one technical difficulty is that all positive maps from $B(H_B)$ into $B(H_A)$ should be in our consideration, and verifying whether a linear map is positive is computationally intractable \cite{La10,NiZh16}.

One crucial aspect of this paper is that our abstract approach using group symmetries offers additional advantages on the Horodecki criterion. Indeed, it is enough to consider only positive $(\pi_B,\pi_A)$-covariant maps $\mathcal{L}$ to analyze entanglement of $(\overline{\pi_A}\otimes \pi_B)$-invariant quantum states $\rho$ (Theorem \ref{cor-EBT}). Under a mild assumption of irreducibility of $\pi_A$, we can further restrict $\Le$ to be unital, and we may even choose $\Le$ to be extremal (Theorem \ref{thm-main}). {In other words, these extremal maps (in the set of unital positive covariant maps) are sufficient to detect all entangled invariant states. This not only substantially reduces the complexity on the Horodecki criterion, but also establishes a general framework for duality between separable states and positive maps under group symmetry.}

Another key aspect is to provide a unified perspective to understand the question of PPT entanglement via the channel-state duality and the trace duality under our framework. More precisely, given the irreducibility of $\pi_A$, we can prove that the following three statements are equivalent (Corollary \ref{cor-equiv}):

\begin{itemize}
        \item All PPT $(\overline{\pi_A}\otimes \pi_B)$-invariant quantum states are separable (PPT=SEP).
        \item All PPT $(\pi_A,\pi_B)$-covariant quantum channels are entanglement-breaking (PPT=EB).
        \item All (extremal) unital positive $(\pi_B,\pi_A)$-covariant maps are decomposable (POS=DEC).
\end{itemize}
Thus, the question of PPT entanglement on the set of $(\overline{\pi_A}\otimes \pi_B)$-invariant quantum states (or on the set of all $(\pi_A,\pi_B)$-covariant quantum channels) transfers to non-decomposability on the extremal unital positive $(\pi_B,\pi_A)$-covariant maps.
Moreover, an extremal unital positive $(\pi_B,\pi_A)$-covariant map $\Le$ is decomposable if and only if $\Le$ is completely positive(CP) and completely copositive(CCP) by Theorem \ref{thm-extremal-decomposable}, and both the conditions CP and CCP are much easier to verify.

The merits of our suggested framework can be demonstrated with examples in Section \ref{sec-examples1} and \ref{sec-examples2}. In Section \ref{sec-examples1}, we prove that PPT=EB holds for any quantum channels of the form
\begin{equation}\label{eq10}
    \Phi(\rho)=a\frac{\text{Tr}(\rho)}{d}\text{Id}_d+ b\rho+c\rho^T+(1-a-b-c)\text{diag}(\rho),
\end{equation}
where $\displaystyle \text{diag}(X)=\sum_{i=1}^d X_{ii}e_{ii}$ for $X=(X_{ij})_{1\leq i,j\leq d}$. These channels are precisely the covariant channels with respect to the standard representation of the {\it signed symmetric group} (or the \textit{hyperoctahedral group}) $\mathcal{H}_d$ (Proposition \ref{lem20}). Our results strengthen some conclusions from Section 5 and 6 of \cite{KMS20} and Example 3 of \cite{VW01} to a larger class, and resolve some of open questions posed in \cite{KMS20}. {We refer Remark \ref{rmk-VW01}, \ref{rmk-KMS20}, and Appendix \ref{sec-CovMU} for details.}

In Section \ref{sec-examples2}, we focus on the question PPT=SEP for some tripartite quantum states with unitary group symmetries. In Section 5.1, we present explicit positive non-decomposable covariant linear maps $\Le:M_d(\Comp)\rightarrow M_{d^2}(\Comp)$ satisfying the covariance:
\begin{equation}
\Le(\overline{U}X U^T)=(U\otimes U)\Le(X)(U\otimes U)^*
\end{equation}
for all $d\times d$ unitary matrices $U\in \mathcal{U}_d$ and $X\in M_d(\Comp)$. This result gives another explanation of the fact PPT$\neq$SEP for {\it tripartite Werner states} \cite{EW01}, which implies the existence of PPT entangled quantum states $\rho\in M_{d^3}(\Comp)$ satisfying the following invariance,
\begin{equation}
(U\otimes U\otimes U)\rho = \rho (U\otimes U\otimes U)
\end{equation}
for all $U\in \mathcal{U}_d$. On the other hand, in Section \ref{sec:Q.orthogonal}, we show that a strong contrast PPT$=$SEP holds for {\it quantum orthogonally invariant} quantum states. More generally, we prove that any PPT tripartite quantum state $\rho\in M_{d^3}(\Comp)$ satisfying the following invariance
\begin{equation}
(U\otimes \overline{U}\otimes U)\rho = \rho (U\otimes \overline{U}\otimes U)
\end{equation}
for all unitary matrices $U\in \mathcal{U}_d$ is separable (Theorem \ref{thm-qOOO}). This resolves some open questions raised in \cite{COS18} {(see Remark \ref{rmk-COS18} for details)}.


\normalsize

\section{Preliminaries} \label{sec-pre}


\subsection{Separability and PPT property}

In this paper, we focus only on finite-dimensional complex Hilbert spaces $H=\Comp^d$, $H_A=\Comp^{d_A}$, $H_B=\Comp^{d_B}$, and their direct sums and tensor products. Recall that a {\it quantum state} $\rho\in B(H)$ is a positive matrix with $\text{Tr}(\rho)=1$ and the set of all quantum states in $B(H)$ is denoted by $\mathcal{D}(H)$. A bipartite positive operator $X\in B(H_A\otimes H_B)$ is said to be of {\it positive partial transpose (PPT)} if 
\begin{equation}\label{eq-PPT}
(\text{id}_A\otimes T_B)(X)\geq 0
\end{equation}
where $T_B$ is the transpose map on $B(H_B)$, and $X$ is called {\it separable} if there exist families of positive operators $(X^A_i)_{i=1}^n$ and $(X^B_i )_{i=1}^n$ such that
\begin{equation}
    X=\sum_{i=1}^n X^A_i\otimes X^B_i.
\end{equation}
In particular, if $\rho\in \mathcal{D}(H_A\otimes H_B)$ is a separable quantum state, then there exists a probability distribution $(p_i)_{i=1}^n$ and a family of product quantum states $(\rho_i^A\otimes \rho^B_i)_{i=1}^n$ such that 
\begin{equation}\label{eq-separability}
\rho=\sum_{i=1}^n p_i \rho^A_i\otimes \rho^B_i.
\end{equation}
It is clear that separability implies PPT property, but the converse is not true in general. More precisely, all PPT quantum states in $B(H_A\otimes H_B)$ are separable if and only if $d_A\cdot d_B \leq 6$ \cite{Pe96,HHH96,Wo76,Ch80}. Moreover, it is known that the separability question is NP-hard \cite{Gu03,Gh10}.

For $v\in H$, we define linear maps $|v\ra:\Comp\rightarrow H$ given by $\lambda\mapsto \lambda v$ and $\la v|: H\rightarrow \Comp$ given by $w\mapsto \la v | w\ra$ where $\la v | w\ra$ is the inner product of $v,w\in H$ whose first variable is the anti-linear part. In particular, $|\Omega\ra=\sum_{i=1}^d\frac{1}{\sqrt{d}}|i\ra\otimes |i\ra \in H\otimes H $ is called the \textit{maximally entangled Bell state} where $\set{|1\ra,|2\ra, \cdots, |d\ra}$ is the standard orthonormal basis of $H$. The matrix unit $|i\ra\la j|$ and the product vector $|i_1\ra\otimes |i_2\ra\otimes \cdots \otimes |i_k\ra$ are also denoted by $e_{ij}$ and $|i_1i_2\cdots i_k \ra$ respectively.

The \textit{(normalized) Choi matrix} of a linear map $\mathcal{L}:B(H_A)\rightarrow B(H_B)$ is defined by
\begin{align}
C_{\mathcal{L}}&=(\text{id}_A\otimes \Le)(|\Omega_A\ra\la \Omega_A|)=(\text{id}_A\otimes \Le)\left(\frac{1}{d_A}\sum_{i,j=1}^{d_A}e_{ij}\otimes e_{ij}\right) \nonumber \\
&=\frac{1}{d_A}\sum_{i,j=1}^{d_A}e_{ij}\otimes \Le(e_{ij})\in B(H_A\otimes H_B).
\end{align}

Recall that $\mathcal{L}$ is {\it completely positive} (CP) if and only if the Choi matrix $C_{\mathcal{L}}$ is positive, and $\mathcal{L}$ is {\it trace-preserving} (TP) if and only if $(\text{id}_A\otimes \text{Tr}_B)(C_{\mathcal{L}})=\displaystyle \frac{1}{d_A}\text{Id}_A$. In particular, if $\Phi:B(H_A)
\to B(H_B)$ is a CPTP linear map, i.e. a {\it quantum channel} in the Schr{\" o}dinger's picture, then the Choi matrix $C_{\Phi}$ is a quantum state in $\mathcal{D}(H_A\otimes H_B)$. We call this {\it channel-state duality}. 

Let $\mathcal{L}:B(H_A)\rightarrow B(H_B)$ be a linear map. Then $\mathcal{L}$ is called {\it completely copositive} (CCP) if $T_B\circ \mathcal{L}$ is completely positive, $\mathcal{L}$ is called {\it decomposable} if there exist a CP map $\Le_1$ and a CCP map $\Le_2$ such that $\Le=\Le_1+\Le_2$, and $\Le$ is called PPT if $\Le$ is both CP and CCP. Thus, $\Le$ is PPT if and only if $C_{\Le}$ is PPT.

Another important property of quantum channels is the {\it entanglement-breaking (EB) property} \cite{HSR03}. A quantum channel $\Phi:B(H_A)\rightarrow B(H_B)$ is called EB if the Choi matrix $C_{\Phi}$ is a separable quantum state. Note that any EB quantum channel is PPT, but the converse is not true in general.

\subsection{Invariance and covariance}\label{sec-rep.theory}

In this section, we introduce two important objects to discuss conservation of symmetry, namely {\it invariant operators} and {\it covariant linear maps}. Let us suppose that $G$ is a compact group throughout this paper. A continuous function $\pi:G\rightarrow \mathcal{U}(H_{\pi})$ is called a (finite-dimensional) {\it unitary representation} of $G$ if it is a group homomorphism, i.e., $\pi(xy)=\pi(x)\pi(y)$ for all $x,y\in G$. In this case, an operator $X\in B(H_{\pi})$ is called \textit{$\pi$-invariant} if 
\begin{equation}
\pi(x)X\pi(x)^*=X
\end{equation} 
for all $x\in G$. The set of all $\pi$-invariant operators, the set of all $\pi$-invariant quantum states, and the set of $\pi$-invariant PPT quantum states in $B(H_{\pi})$ are denoted by $\text{Inv}(\pi)$, $\text{InvQS}(\pi)$, and $\text{InvPPTQS}(\pi)$, respectively. 

A unitary representation $\pi:G\rightarrow B(H_{\pi})$ is called {\it irreducible} if $\text{Inv}(\pi)=\Comp\cdot \text{Id}_{H_{\pi}}$. If $\pi$ is irreducible, so is the \textit{contragredient representation} $\overline{\pi}:G\to \mathcal{U}(H_{\pi})$ of $\pi$ which is defined by $\overline{\pi}(x)=\overline{\pi(x)}$ for all $x\in G$. For unitary representations $\pi_A:G\rightarrow \mathcal{U}(H_A)$ and $\pi_B:G\rightarrow \mathcal{U}(H_B)$, the {\it tensor representation} $\pi_A\otimes \pi_B:G\rightarrow \mathcal{U}(H_A\otimes H_B)$ is given by 
\begin{equation}
(\pi_A\otimes \pi_B)(x)=\pi_A(x)\otimes \pi_B(x)
\end{equation}
for all $x\in G$. 

For unitary representations $\pi_A:G\rightarrow B(H_A)$ and $\pi_B:G\rightarrow B(H_B)$, a linear map $\Le:B(H_A)\rightarrow B(H_B)$ is called {\it $(\pi_A,\pi_B)$-covariant} if 
\begin{equation}
\Le(\pi_A(x)Y\pi_A(x)^*)=\pi_B(x)\Le(Y)\pi_B(x)^*
\end{equation}
for all $x\in G$ and $Y\in B(H_A)$,
and let us denote by $\text{Cov}(\pi_A,\pi_B)$ the space of all $(\pi_A,\pi_B)$-covariant linear maps. Some subclasses of $\text{Cov}(\pi_A,\pi_B)$ in our interest are as follows: 
\begin{itemize}
\item $\text{CovPos}(\pi_A,\pi_B)$ is the set of all $(\pi_A,\pi_B)$-covariant positive maps,
\item $\text{CovPos}_1(\pi_A,\pi_B)$ is the set of all $(\pi_A,\pi_B)$-covariant positive unital maps,
\item $\text{CovPosTP}(\pi_A,\pi_B)$ is the set of all $(\pi_A,\pi_B)$-covariant positive TP maps,
\item $\text{CovQC}(\pi_A,\pi_B)$ is the set of all $(\pi_A,\pi_B)$-covariant CPTP maps,
\item $\text{CovPPTQC}(\pi_A,\pi_B)$ is the set of all $(\pi_A,\pi_B)$-covariant PPT quantum channels.
\end{itemize}

\subsection{Twirling operation} \label{sec-twirling}

An averaging technique called the \textit{twirling operation} is a standard method to analyze invariant operators and covariant linear maps. For a unitary representation $\pi:G\to \mathcal{U}(H)$, we define a  twirling map $\mathcal{T_\pi}:B(H)\rightarrow \text{Inv}(\pi)$ by
\begin{equation}
    \mathcal{T}_{\pi}(X)=\int_G \pi(x)X\pi(x)^* dx
\end{equation}
for all $X\in B(H)$, where $dx$ denotes the \textit{normalized Haar measure} on $G$. Then $\mathcal{T}_{\pi}$ is unital CPTP, and its well-definedness, i.e. $\mathcal{T}_{\pi}(X)\in \text{Inv}(\pi)$, is thanks to the translation-invariance property of the Haar measure. Furthermore, we have $X\in \text{Inv}(\pi)$ if and only if $\mathcal{T}_{\pi}(X)=X$,
which means that $\mathcal{T}_{\pi}$ is a projection (more precisely, a \textit{conditional expectation}) onto the $*$-subalgebra $\text{Inv}(\pi)$ of $B(H)$.

For unitary representations $\pi_A:G\to \mathcal{U}(H_A)$ and $\pi_B:G\to \mathcal{U}(H_B)$, the twirling $\mathcal{T}_{\pi_A,\pi_B}\Le$ of $\mathcal{L}:B(H_A)\rightarrow B(H_B)$ is defined by
\begin{equation}
    (\mathcal{T}_{\pi_A,\pi_B}\Le)(X)=\int_G \pi_B(x)^*\Le(\pi_A(x)X\pi_A(x)^*)\pi_B(x)\,dx\
\end{equation}
for all $X\in B(H_A)$. Then similarly, the twirling operation $\mathcal{T}_{\pi_A,\pi_B}$ is a well-defined projection from $B(B(H_A),B(H_B))$ onto $\text{Cov}(\pi_A,\pi_B)$.

Let us collect some useful properties of the twirling operations for the next section.

\begin{proposition} \label{prop-twirling-preserving}
For any unitary representations $\pi_A$ and $\pi_B$ of $G$, the twirling map $\mathcal{T}_{\pi_A\otimes \pi_B}$ preserves separability and PPT property of bipartite operators. Furthermore, the twirling operation $\mathcal{T}_{\pi_A,\pi_B}$ preserves positivity, CP, TP, CCP, PPT, decomposability, and EB property of linear maps.
\end{proposition}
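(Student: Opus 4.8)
The plan is to reduce the entire statement to one principle: the Haar average $\int_G f(x)\,dx$ of a continuous family $f$ with values in a fixed closed convex cone $\K$ again lies in $\K$ — indeed $\varphi\big(\int_G f(x)\,dx\big)=\int_G\varphi(f(x))\,dx\ge 0$ for every linear functional $\varphi$ nonnegative on $\K$, so this follows from Hahn--Banach (equivalently, the integral is a limit of Riemann sums, which are genuine convex combinations since the Haar measure is a probability measure). Thus for each property $P$ in the statement it suffices to check two things: (i) $P$ cuts out a closed convex cone — or, for trace preservation, a closed affine subspace — in the relevant space of operators or maps; and (ii) each of the pointwise operations being averaged by the twirling preserves $P$. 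For $\mathcal{T}_{\pi_A\otimes\pi_B}$ these pointwise operations are the conjugations $\Psi_x\colon X\mapsto (\pi_A(x)\otimes\pi_B(x))\,X\,(\pi_A(x)\otimes\pi_B(x))^*$, and for $\mathcal{T}_{\pi_A,\pi_B}$ they are $\Theta_x\colon\Le\mapsto \pi_B(x)^*\,\Le\big(\pi_A(x)\,(\cdot)\,\pi_A(x)^*\big)\,\pi_B(x)$; in both cases $\mathcal{T}=\int_G(\text{pointwise operation})\,dx$.

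At the operator level I would argue as follows. Separability: if $X=\sum_i X_i^A\otimes X_i^B$ then $\Psi_x(X)=\sum_i(\pi_A(x)X_i^A\pi_A(x)^*)\otimes(\pi_B(x)X_i^B\pi_B(x)^*)$ is again a sum of tensor products of positive operators, so $\Psi_x$ maps the separable cone into itself, and that cone is closed in finite dimension (it is the $\RPlus$-span of the compact convex set of separable states). PPT: from $(\overline{M})^*=M^T$ one gets $T_B(\pi_B(x)\,Y\,\pi_B(x)^*)=\overline{\pi_B(x)}\,T_B(Y)\,\overline{\pi_B(x)}^*$, hence the intertwining
\begin{equation*}
(\id_A\otimes T_B)\circ\mathcal{T}_{\pi_A\otimes\pi_B}=\mathcal{T}_{\pi_A\otimes\overline{\pi_B}}\circ(\id_A\otimes T_B),
\end{equation*}
so that $X$ PPT gives $(\id_A\otimes T_B)(X)\ge 0$ and then $(\id_A\otimes T_B)(\mathcal{T}_{\pi_A\otimes\pi_B}(X))=\mathcal{T}_{\pi_A\otimes\overline{\pi_B}}((\id_A\otimes T_B)(X))\ge 0$, using that twirlings are CP, as recorded in the excerpt. (Equivalently, each $\Psi_x$ preserves the PPT cone, a preimage of the positive cone under the continuous map $\id_A\otimes T_B$ and hence closed, and one invokes the averaging principle.)

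At the map level I would check that every $\Theta_x$ preserves each listed property, after which the averaging principle finishes the proof. Positivity and CP: $\Theta_x(\Le)$ is the composition of $\Le$ with the (completely) positive conjugations $Y\mapsto\pi_A(x)Y\pi_A(x)^*$ and $Z\mapsto\pi_B(x)^*Z\pi_B(x)$ (for CP one may also write $\Le=\sum_k V_k(\cdot)V_k^*$ and get $\Theta_x(\Le)(X)=\sum_k(\pi_B(x)^*V_k\pi_A(x))\,X\,(\pi_B(x)^*V_k\pi_A(x))^*$). CCP: by the transpose identity above, $T_B\circ\Theta_x(\Le)=\overline{\pi_B(x)}\,\big((T_B\circ\Le)(\pi_A(x)\,(\cdot)\,\pi_A(x)^*)\big)\,\overline{\pi_B(x)}^*$ is a composition of CP maps whenever $T_B\circ\Le$ is CP, so $\Theta_x$ preserves CCP; PPT is then CP together with CCP. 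Decomposability: $\Theta_x$ and $\mathcal{T}_{\pi_A,\pi_B}$ are linear, so a splitting $\Le=\Le_1+\Le_2$ into a CP and a CCP part transports to $\mathcal{T}_{\pi_A,\pi_B}\Le=\mathcal{T}_{\pi_A,\pi_B}\Le_1+\mathcal{T}_{\pi_A,\pi_B}\Le_2$, again CP $+$ CCP by the cases already done. EB: writing an entanglement-breaking channel in measure--prepare form $\Le(\cdot)=\sum_k\Tr(F_k\,\cdot\,)\rho_k$ with $F_k\ge0$ and $\rho_k$ states, one has $\Theta_x(\Le)(X)=\sum_k\Tr\big(\pi_A(x)^*F_k\pi_A(x)\,X\big)\,\pi_B(x)^*\rho_k\pi_B(x)$, again measure--prepare (alternatively, $C_{\mathcal{T}_{\pi_A,\pi_B}\Le}=\mathcal{T}_{\overline{\pi_A}\otimes\pi_B}(C_{\Le})$, separable by the operator level). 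TP: $\Tr_B\circ\Theta_x(\Le)=\Tr_B\circ\Le\circ(\pi_A(x)\,(\cdot)\,\pi_A(x)^*)=\Tr_A$ by cyclicity of the trace whenever $\Le$ is TP, and this affine condition is preserved by the average.

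I expect the only points that genuinely need care to be bookkeeping rather than substance: the finite-dimensional closedness of the cones used — separable operators, PPT operators, positive maps, CP maps, CCP maps, and EB maps (each an $\RPlus$-span of a compact convex set, or a preimage of the positive/PSD cone under a continuous linear map) — and the tracking of complex conjugates and transposes when the partial transpose is pushed through the conjugations, which is exactly what makes the contragredient representations $\overline{\pi_A}$ and $\overline{\pi_B}$ appear above.
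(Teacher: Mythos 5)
Your proof is correct and follows essentially the same route as the paper, whose entire proof is the one-line remark that everything is ``straightforward from the definitions and closedness of the spaces'' — i.e.\ exactly your averaging-over-closed-convex-cones principle, which you have simply written out in full for each property. The only nit is a harmless swap of a unitary and its adjoint in the CCP intertwining identity (one gets conjugation by $\overline{\pi_B(x)}^*$ rather than $\overline{\pi_B(x)}$), which does not affect the argument.
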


\begin{proof}
It is straightforward from the definitions and closedness of the spaces associated with each of the properties mentioned above. For example, the set of all decomposable linear maps $\Le:B(H_A)\to B(H_B)$ is closed in $B(B(H_A),B(H_B))$ with respect to the natural (Euclidean) topology.
\end{proof}

For a linear map $\Le:B(H_A)\to B(H_B)$, the \textit{adjoint map} $\Le^*:B(H_B)\to B(H_A)$ of $\Le$ is a linear map satisfying
\begin{equation}
    \text{Tr}(\Le(X)\,Y)=\text{Tr}(X\,\Le^*(Y))
\end{equation}
for all $X\in B(H_A)$ and $Y\in B(H_B)$. Recall that the adjoint operation $\Le\mapsto \Le^*$ preserves positivity, CP, CCP, PPT, and decomposability.

\begin{proposition} \label{prop-twirling}
Let $\pi:G\to \mathcal{U}(H)$, $\pi_A:G\to \mathcal{U}(H_A)$ and $\pi_B:G\to \mathcal{U}(H_B)$ be unitary representations of $G$. Then we have the following.
\begin{enumerate}
    \item $\text{Tr}((\mathcal{T}_{\pi}X)\,Y)=\text{Tr}(X (\mathcal{T}_{\pi}Y))$ for any $X,Y\in B(H)$.
    
    \item $\mathcal{T}_{\pi_A\otimes \pi_B}\circ (T_A\otimes \text{id}_B)=(T_A\otimes \text{id}_B)\circ \mathcal{T}_{\overline{\pi_A}\otimes \pi_B}$ where $T_A$ is the transpose on $B(H_A)$.
    
    \item $\left (\mathcal{T}_{\pi_A,\pi_B}\Le\right )^*=\mathcal{T}_{\pi_B,\pi_A}(\Le^*)$ for any linear map $\Le:B(H_A)\to B(H_B)$.
    
    \item The Choi matrix of $\mathcal{T}_{\pi_A,\pi_B}\Le$ is given by $\mathcal{T}_{\overline{\pi_A}\otimes \pi_B}\left ( C_{\Le}\right )$ for any linear map $\Le:B(H_A)\to B(H_B)$.
\end{enumerate}
\end{proposition}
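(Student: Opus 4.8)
The plan is to verify the four identities one at a time by direct computation, leaning on two elementary facts used repeatedly: cyclicity of the trace together with $\pi(x)^{*}=\pi(x^{-1})$ for a unitary representation, and invariance of the normalized Haar measure under the inversion $x\mapsto x^{-1}$. For (1), expand $\text{Tr}\big((\mathcal{T}_{\pi}X)\,Y\big)=\int_{G}\text{Tr}\big(\pi(x)X\pi(x)^{*}Y\big)\,dx$, cycle the trace to $\int_{G}\text{Tr}\big(X\,\pi(x)^{*}Y\pi(x)\big)\,dx$, and note that $\int_{G}\pi(x)^{*}Y\pi(x)\,dx=\int_{G}\pi(x^{-1})Y\pi(x^{-1})^{*}\,dx=\mathcal{T}_{\pi}(Y)$ after the substitution $x\mapsto x^{-1}$. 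Item (3) is the same manipulation carried one level higher: for $X\in B(H_{B})$ and $Y\in B(H_{A})$, starting from $\text{Tr}\big((\mathcal{T}_{\pi_{A},\pi_{B}}\Le)^{*}(X)\,Y\big)=\text{Tr}\big(X\,(\mathcal{T}_{\pi_{A},\pi_{B}}\Le)(Y)\big)$, I would move $\pi_{B}(x)$ and $\pi_{B}(x)^{*}$ across the trace onto $X$, pass from $\Le$ to $\Le^{*}$ via the defining property of the adjoint, move $\pi_{A}(x)$ and $\pi_{A}(x)^{*}$ onto $Y$, and recognize the resulting integral as $\text{Tr}\big((\mathcal{T}_{\pi_{B},\pi_{A}}\Le^{*})(X)\,Y\big)$; as $Y$ is arbitrary, the two maps coincide.

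For (2) it suffices to test both sides on elementary tensors $X\otimes Y$ with $X\in B(H_{A})$, $Y\in B(H_{B})$ and extend by linearity. The left-hand side evaluates to $\int_{G}\pi_{A}(x)X^{T}\pi_{A}(x)^{*}\otimes\pi_{B}(x)Y\pi_{B}(x)^{*}\,dx$ and the right-hand side to $\int_{G}\big(\overline{\pi_{A}(x)}\,X\,\overline{\pi_{A}(x)}^{*}\big)^{T}\otimes\pi_{B}(x)Y\pi_{B}(x)^{*}\,dx$, so the claim reduces to the matrix identity $\big(\overline{u}\,X\,\overline{u}^{*}\big)^{T}=u\,X^{T}\,u^{*}$ valid for every unitary $u$, which follows at once from $(\overline{u})^{*}=u^{T}$ and $(\overline{u})^{T}=u^{*}$.

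Item (4) requires the most bookkeeping. I would first record the ``transpose trick'' for the maximally entangled vector, $(\text{Id}_{A}\otimes u)|\Omega_{A}\ra=(u^{T}\otimes\text{Id}_{A})|\Omega_{A}\ra$ for every $u\in B(H_{A})$. Using it, the Choi matrix of the map $\Le'_{x}:X\mapsto \pi_{B}(x)^{*}\Le\big(\pi_{A}(x)X\pi_{A}(x)^{*}\big)\pi_{B}(x)$ is computed as follows: conjugating the input by $\pi_{A}(x)$ inside $\Le$ becomes, at the level of $|\Omega_{A}\ra\la\Omega_{A}|$, conjugation of the first leg by $\pi_{A}(x)^{T}=(\overline{\pi_{A}})(x)^{*}$, and since that commutes with $\text{id}_{A}\otimes\Le$ one obtains $C_{\Le'_{x}}=\big((\overline{\pi_{A}})(x)^{*}\otimes\pi_{B}(x)^{*}\big)\,C_{\Le}\,\big((\overline{\pi_{A}})(x)\otimes\pi_{B}(x)\big)$. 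Integrating over $x$ and then substituting $x\mapsto x^{-1}$, so that $(\overline{\pi_{A}})(x^{-1})^{*}=(\overline{\pi_{A}})(x)$ and likewise for $\pi_{B}$, rewrites this integral as $\int_{G}\big((\overline{\pi_{A}})(x)\otimes\pi_{B}(x)\big)C_{\Le}\big((\overline{\pi_{A}})(x)\otimes\pi_{B}(x)\big)^{*}\,dx=\mathcal{T}_{\overline{\pi_{A}}\otimes\pi_{B}}(C_{\Le})$, which is the assertion. The only genuine subtlety anywhere in the argument is the distinction between transpose and entrywise conjugate in (2) and (4) --- in particular keeping straight that $(\overline{u})^{*}=u^{T}$ for a unitary $u$, hence $(\overline{\pi_{A}})(x)^{*}=\pi_{A}(x)^{T}$ --- together with the observation that under the Choi correspondence pre-composition with conjugation by $\pi_{A}(x)$ turns into conjugation of the input leg by the transpose $\pi_{A}(x)^{T}$; everything else is routine, since finite-dimensionality legitimizes all interchanges of integrals with finite sums.
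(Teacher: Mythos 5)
Your proposal is correct and follows essentially the same route as the paper's proof: (1) and (3) via cyclicity of the trace and inversion-invariance of the Haar measure, (2) via reduction to elementary tensors and the identity $(\overline{u}X\overline{u}^{*})^{T}=uX^{T}u^{*}$, and (4) via the ricochet property of $|\Omega_A\rangle$ followed by integration. The only cosmetic difference is in (4), where you conjugate $C_{\Le}$ directly and then substitute $x\mapsto x^{-1}$, whereas the paper equates two integrands before integrating; the underlying idea is identical.
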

\begin{proof}
(1) Since the Haar measure on the compact group $G$ is invariant under the inverse $x\mapsto x^{-1}$, we have
\begin{align}
   \text{Tr}((\mathcal{T}_{\pi}X)\,Y)&=\int_G \text{Tr}(\pi(x)X\pi(x^{-1})Y)dx\\
    &=\text{Tr}\left(X \int_G \pi(x^{-1})Y \pi(x) dx\right)\\
    &=\text{Tr}\left(X \int_G \pi(x)Y \pi(x^{-1}) dx\right)=\text{Tr}(X (\mathcal{T}_{\pi}Y))
\end{align}
for any $X,Y\in B(H)$.

(2) It suffices to show the equality for product operators $X=P\otimes Q$, and the conclusion follows immediately from the observation 
\begin{equation}
    \pi_A(x)P^T\pi_A(x)^*= \left (\overline{\pi_A(x)}P\pi_A(x)^T\right )^T.
\end{equation}

(3) For any $X\in B(H_A)$ and $Y\in B(H_B)$, we have
\begin{align}
    &\text{Tr}(X \left ( \mathcal{T}_{\pi_B,\pi_A}\Le^*\right )(Y))\\
    &=\int_G \text{ Tr}(X\pi_A(x)^*\Le^*(\pi_B(x)Y\pi_B(x)^*)\pi_A(x))dx\\
    &=\int_G \text{ Tr}(\pi_B(x)^*\Le(\pi_A(x)X\pi_A(x)^*)\pi_B(x)\,Y)dx\\
   & =\text{Tr}(\left ( \mathcal{T}_{\pi_A,\pi_B}\Le\right )(X)\, Y),
\end{align}
which gives us the desired conclusion.

(4) First of all, note that 
\begin{align}
 \label{eq200}&\sum_{i,j=1}^{d_A}(\overline{\pi_A(x)}e_{ij}\pi_A(x)^t)\otimes (\pi_B(x)\Le(e_{ij})\pi_B(x)^*)\\
 \label{eq210}&=\sum_{i,j=1}^{d_A}e_{ij}\otimes (\pi_B(x)\Le(\pi_A(x)^*e_{ij}\pi_A(x))\pi_B(x)^*).
\end{align}
for each $x\in G$. Indeed, the LHS \eqref{eq200} can be understood as
\begin{align}
   \label{eq20} &d_A(\text{id}_A\otimes (\text{Ad}_{\pi_B(x)}\circ \Le))\left((\overline{\pi_A(x)}\otimes \text{Id}_A)|\Omega_A\ra\la\Omega_A|(\pi_A(x)^t\otimes \text{Id}_A)\right),
   \end{align}
   and the RHS \eqref{eq210} can be understood as
   \begin{align}\label{eq21} 
   &d_A(\text{id}_A\otimes (\text{Ad}_{\pi_B(x)}\circ \Le))\left((\text{Id}_A\otimes \pi_A(x)^*)|\Omega_A\ra\la\Omega_A|(\text{Id}_A\otimes \pi_A(x))\right)
\end{align}
where $\text{Ad}_{V}(Y)=VY V^*$. Moreover, the so-called {\it ricochet property}
\begin{equation}
    (X\otimes \text{Id}_A)|\Omega_A\ra=(\text{Id}_A\otimes X^t)|\Omega_A\ra,\;\; X\in B(H_A),
\end{equation}
implies \eqref{eq20} = \eqref{eq21}. Finally, taking the Haar integral on both sides completes the proof.
\end{proof}

Combining Proposition \ref{prop-twirling} (2), (3), and (4) with the fact that both $\text{Inv}(\pi_A\otimes \pi_B)$ and $\text{Cov}(\pi_A,\pi_B)$ are the images of the twirling projections, we obtain  the following useful properties.

\begin{corollary} \label{cor-twirling}
Let $X\in B(H_A\otimes H_B)$ be a bipartite operator and $\Le:B(H_A)\to B(H_B)$ be a linear map. Then
\begin{enumerate}
    \item $X\in \text{Inv}(\pi_A\otimes \pi_B)$ if and only if $(T_A\otimes {\text id})(X)\in \text{Inv}(\overline{\pi_A}\otimes \pi_B)$. 
    
    \item $\Le\in \text{Cov}(\pi_A,\pi_B)$ if and only if $\Le^*\in \text{Cov}(\pi_B,\pi_A)$.
    
    \item $\Le\in \text{Cov}(\pi_A,\pi_B)$ if and only if $C_{\Le}\in \text{Inv}(\overline{\pi_A}\otimes \pi_B)$.
\end{enumerate}
\end{corollary}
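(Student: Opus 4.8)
The plan is to use a single structural fact established earlier: an operator lies in $\text{Inv}(\pi_A\otimes\pi_B)$ (respectively, a linear map lies in $\text{Cov}(\pi_A,\pi_B)$) exactly when it is a fixed point of the twirling projection $\mathcal{T}_{\pi_A\otimes\pi_B}$ (respectively, $\mathcal{T}_{\pi_A,\pi_B}$). Each of the three equivalences then follows by transporting fixed points along a bijection---the partial transpose $T_A\otimes\text{id}_B$ for (1), the adjoint $\Le\mapsto\Le^*$ for (2), and the Choi correspondence $\Le\mapsto C_\Le$ for (3)---using the matching intertwining identity from Proposition \ref{prop-twirling} (2), (3), (4).

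Concretely, for (1), I would first record that $T_A\otimes\text{id}_B$ is an involution on $B(H_A\otimes H_B)$, so that $(T_A\otimes\text{id}_B)(X)\in\text{Inv}(\overline{\pi_A}\otimes\pi_B)$ is equivalent to $\mathcal{T}_{\overline{\pi_A}\otimes\pi_B}\big((T_A\otimes\text{id}_B)(X)\big)=(T_A\otimes\text{id}_B)(X)$; applying $T_A\otimes\text{id}_B$ to both sides and invoking Proposition \ref{prop-twirling} (2) rewrites the left-hand side as $\mathcal{T}_{\pi_A\otimes\pi_B}(X)$, so the condition becomes $\mathcal{T}_{\pi_A\otimes\pi_B}(X)=X$, i.e.\ $X\in\text{Inv}(\pi_A\otimes\pi_B)$. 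For (2), the adjoint is an involutive linear bijection, so $\Le^*\in\text{Cov}(\pi_B,\pi_A)$ says $\mathcal{T}_{\pi_B,\pi_A}(\Le^*)=\Le^*$, which by Proposition \ref{prop-twirling} (3) equals $(\mathcal{T}_{\pi_A,\pi_B}\Le)^*=\Le^*$; by injectivity of the adjoint this is the same as $\mathcal{T}_{\pi_A,\pi_B}\Le=\Le$, i.e.\ $\Le\in\text{Cov}(\pi_A,\pi_B)$. For (3), the Choi map is a linear isomorphism $B(B(H_A),B(H_B))\to B(H_A\otimes H_B)$, and Proposition \ref{prop-twirling} (4) gives $C_{\mathcal{T}_{\pi_A,\pi_B}\Le}=\mathcal{T}_{\overline{\pi_A}\otimes\pi_B}(C_\Le)$; hence $\mathcal{T}_{\pi_A,\pi_B}\Le=\Le$ if and only if $\mathcal{T}_{\overline{\pi_A}\otimes\pi_B}(C_\Le)=C_\Le$, i.e.\ $C_\Le\in\text{Inv}(\overline{\pi_A}\otimes\pi_B)$.

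I do not expect a genuine obstacle here: the whole argument is a formal manipulation of fixed-point sets of idempotents. The only points needing a moment's attention are checking that each of the three maps is a bijection intertwining the relevant pair of twirling projections---so that the fixed-point correspondence runs in both directions---and, in (1), that the partial transpose squares to the identity; both are immediate from the definitions and from Proposition \ref{prop-twirling}.
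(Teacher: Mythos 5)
Your proof is correct and follows exactly the route the paper intends: the paper derives Corollary \ref{cor-twirling} in one line by "combining Proposition \ref{prop-twirling} (2), (3), (4) with the fact that $\text{Inv}$ and $\text{Cov}$ are the images of the twirling projections," and your argument simply spells out that fixed-point transport in detail. No gaps.
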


    

\begin{remark}
The results in Corollary \ref{cor-twirling} have been noted in various contexts, \cite[Lemma 6]{EW01}, \cite[Lemma 11]{GBW21}, and \cite[Proposition 5.1, Theorem 3.5]{LeYo22} for examples. Moreover, extendibility to more general contexts of compact quantum group symmetry was proved in \cite{LeYo22}.
\end{remark}

\section{A framework to characterize entanglement under group symmetry} \label{sec-framework}

Let us recall a result of Horodecki on the characterization of entanglement \cite{HHH96}: a bipartite quantum state $\rho\in \mathcal{D}(H_A\otimes H_B)$ is separable if and only if $({\text {id}_A}\otimes \Le)(\rho)\geq 0$ for all positive linear maps $\mathcal{L}: B(H_B)\rightarrow B(H_A)$. Indeed, by duality arguments, the authors showed that they are also equivalent to seemingly a weaker condition `$\la \rho,\mathcal{L}\ra\geq 0$'. Here, the dual pairing $\la \cdot, \cdot\ra$ is defined by
\begin{equation}
\la X,\mathcal{N}\ra=\text{Tr}((\text{id}_A\otimes \mathcal{N})(X)\, |\Omega_A\ra\la \Omega_A|)= \text{Tr}(XC_{\mathcal{N}^*})
\end{equation}
for any operator $X\in B(H_A\otimes H_B)$ and linear map $\mathcal{N}:B(H_B)\rightarrow B(H_A)$. In other words, positive linear maps play a crucial role as detectors for the bipartite entanglement, in the sense that there should exist a positive linear map $\Le$ such that $(\text{id}\otimes \Le)(\rho)$ is non-positive whenever $\rho$ is entangled.

One technical issue in this characterization is that verifying whether a linear map is positive or not is computationally intractable \cite{La10,NiZh16}. 
However, one of the main purposes of this paper is to develop a universal framework to characterize separability of invariant quantum states. The key idea is that, for $\rho\in \text{InvQS}(\overline{\pi_A}\otimes\pi_B)$, it is enough to consider only $\mathcal{L}\in \text{CovPos}(\pi_B,\pi_A)$ to investigate separability of $\rho$. Let us begin with a simple and useful lemma.

\begin{lemma} \label{lem-paring}
For any bipartite operator $X\in B(H_A\otimes H_B)$ and linear map $\Le:B(H_B)\to B(H_A)$, we have
\begin{equation}
    \la \mathcal{T}_{\overline{\pi_A}\otimes \pi_B}X, \Le \ra=\la X, \mathcal{T}_{\pi_B,\pi_A}\Le\ra.
\end{equation}
\end{lemma}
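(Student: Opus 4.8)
The plan is to unwind the dual pairing into a trace against a Choi matrix and then shuffle the twirling operations using the three identities collected in Proposition~\ref{prop-twirling}. Recall that $\la Y,\mathcal{N}\ra=\Tr(Y\,C_{\mathcal{N}^*})$ for $Y\in B(H_A\otimes H_B)$ and $\mathcal{N}:B(H_B)\to B(H_A)$. Applying this with $Y=\mathcal{T}_{\overline{\pi_A}\otimes\pi_B}X$ and $\mathcal{N}=\Le$ gives
\[
\la \mathcal{T}_{\overline{\pi_A}\otimes \pi_B}X,\Le\ra=\Tr\big((\mathcal{T}_{\overline{\pi_A}\otimes \pi_B}X)\,C_{\Le^*}\big),
\]
where $\Le^*:B(H_A)\to B(H_B)$ and $C_{\Le^*}\in B(H_A\otimes H_B)$.

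First I would use Proposition~\ref{prop-twirling}(1), with the unitary representation $\pi=\overline{\pi_A}\otimes\pi_B$ acting on $H_A\otimes H_B$, to transfer the twirling from $X$ onto $C_{\Le^*}$, obtaining $\Tr\big(X\,\mathcal{T}_{\overline{\pi_A}\otimes \pi_B}(C_{\Le^*})\big)$. Next I would invoke Proposition~\ref{prop-twirling}(4), applied to the linear map $\Le^*:B(H_A)\to B(H_B)$, which identifies $\mathcal{T}_{\overline{\pi_A}\otimes \pi_B}(C_{\Le^*})$ with the Choi matrix $C_{\mathcal{T}_{\pi_A,\pi_B}\Le^*}$. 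Hence $\la \mathcal{T}_{\overline{\pi_A}\otimes \pi_B}X,\Le\ra=\Tr\big(X\,C_{\mathcal{T}_{\pi_A,\pi_B}\Le^*}\big)$.

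To conclude, I would compare this with the right-hand side $\la X,\mathcal{T}_{\pi_B,\pi_A}\Le\ra=\Tr\big(X\,C_{(\mathcal{T}_{\pi_B,\pi_A}\Le)^*}\big)$, so that it suffices to verify the operator identity $\mathcal{T}_{\pi_A,\pi_B}\Le^*=(\mathcal{T}_{\pi_B,\pi_A}\Le)^*$. This is precisely Proposition~\ref{prop-twirling}(3) read with the map $\Le^*:B(H_A)\to B(H_B)$ in the role of its ``$\Le$'': it yields $(\mathcal{T}_{\pi_A,\pi_B}\Le^*)^*=\mathcal{T}_{\pi_B,\pi_A}(\Le^{**})=\mathcal{T}_{\pi_B,\pi_A}\Le$, and taking adjoints once more gives the claim. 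I do not expect a genuine obstacle here; the only care required is bookkeeping of which Hilbert space each map and each Choi matrix lives on, and checking that the representation fed to each twirling (the conjugated $\overline{\pi_A}\otimes\pi_B$ on the operator side versus the pair $\pi_A,\pi_B$ on the map side) matches exactly what Proposition~\ref{prop-twirling}(1),(3),(4) require, so that those statements apply verbatim.
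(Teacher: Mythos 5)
Your proposal is correct and follows essentially the same route as the paper's proof: unwind the pairing as $\Tr((\mathcal{T}_{\overline{\pi_A}\otimes\pi_B}X)C_{\Le^*})$, move the twirling onto $C_{\Le^*}$ via Proposition~\ref{prop-twirling}(1), identify the result as the Choi matrix of $\mathcal{T}_{\pi_A,\pi_B}\Le^*$ via Proposition~\ref{prop-twirling}(4), and finish with Proposition~\ref{prop-twirling}(3) to recognize $(\mathcal{T}_{\pi_A,\pi_B}\Le^*)^*=\mathcal{T}_{\pi_B,\pi_A}\Le$. No gaps.
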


\begin{proof}
Thanks to Proposition \ref{prop-twirling}, we have
\begin{align}
    \la \mathcal{T}_{\overline{\pi_A}\otimes \pi_B}X, \Le \ra&=\text{Tr}((\mathcal{T}_{\overline{\pi_A}\otimes \pi_B}X)C_{\Le^*})\\ \nonumber
    &=\text{Tr}(X(\mathcal{T}_{\overline{\pi_A}\otimes \pi_B}C_{\Le^*}))=\text{Tr}(X C_{\overline{\Le}^*})=\la X, \overline{\Le}\ra,
\end{align}
where $\overline{\Le}=(\mathcal{T}_{\pi_A,\pi_B}\Le^*)^*=\mathcal{T}_{\pi_B,\pi_A}\Le$.
\end{proof}

Then Lemma \ref{lem-paring} allows us to conclude that covariant positive linear maps are enough to characterize separability of bipartite invariant quantum states. We remark that the ideas of the following proof appeared already for some specified symmetries \cite{Ka11,Gu11,SiNe21}. 

\begin{theorem}\label{cor-EBT}
Let $\pi_A:G\rightarrow B(H_A)$ and $\pi_B:G\rightarrow B(H_B)$ be unitary representations, and let $\rho \in \text{ InvQS}(\overline{\pi_A}\otimes \pi_B)$. The following are equivalent.
\begin{enumerate}
    \item $\rho$ is a separable quantum state.
    
    \item $({\text {id}}_A\otimes \Le)(\rho) \geq 0$ in $B(H_A\otimes H_A)$ for any $(\pi_B,\pi_A)$-covariant positive linear map $\mathcal{L}:B(H_B)\rightarrow B(H_A)$.
    
    \item $\la \rho, \Le \ra \geq 0$ for any $(\pi_B,\pi_A)$-covariant positive linear map $\Le:B(H_B)\to B(H_A)$.
\end{enumerate}
\end{theorem}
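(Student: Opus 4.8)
The plan is to prove the chain of implications $(1)\Rightarrow(2)\Rightarrow(3)\Rightarrow(1)$, using Horodecki's criterion together with the twirling machinery developed in the preliminaries. The implications $(1)\Rightarrow(2)$ and $(2)\Rightarrow(3)$ are essentially immediate and hold for arbitrary (not necessarily covariant) positive maps: separability of $\rho$ forces $(\mathrm{id}_A\otimes\Le)(\rho)\geq 0$ for every positive $\Le$, and positivity of $(\mathrm{id}_A\otimes\Le)(\rho)$ gives $\la\rho,\Le\ra=\mathrm{Tr}\big((\mathrm{id}_A\otimes\Le)(\rho)\,|\Omega_A\ra\la\Omega_A|\big)\geq 0$ since $|\Omega_A\ra\la\Omega_A|\geq 0$. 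So the content is entirely in the direction $(3)\Rightarrow(1)$.

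For $(3)\Rightarrow(1)$, I would argue by contraposition: suppose $\rho$ is \emph{not} separable. By the Horodecki characterization recalled just before the statement (in the dual-pairing form), there exists a positive linear map $\mathcal{N}:B(H_B)\to B(H_A)$ with $\la\rho,\mathcal{N}\ra<0$. This $\mathcal{N}$ need not be $(\pi_B,\pi_A)$-covariant, so the key move is to replace it by its twirl $\Le:=\mathcal{T}_{\pi_B,\pi_A}\mathcal{N}$. By Proposition~\ref{prop-twirling-preserving}, $\Le$ is again positive, and by construction $\Le\in\mathrm{Cov}(\pi_B,\pi_A)$. It remains to check that the pairing is unchanged: since $\rho\in\mathrm{Inv}(\overline{\pi_A}\otimes\pi_B)$ we have $\mathcal{T}_{\overline{\pi_A}\otimes\pi_B}\rho=\rho$, so Lemma~\ref{lem-paring} gives
\begin{equation}
\la\rho,\Le\ra=\la\mathcal{T}_{\overline{\pi_A}\otimes\pi_B}\rho,\mathcal{N}\ra=\la\rho,\mathcal{N}\ra<0.
\end{equation}
(Here one must be slightly careful that $\overline{\Le}$ in the statement of Lemma~\ref{lem-paring} is exactly $\mathcal{T}_{\pi_B,\pi_A}\mathcal{N}$, which it is.) This contradicts (3), completing the contrapositive.

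I also want to double-check the logical loop closes, i.e. that the Horodecki result I am invoking is genuinely the ``$\la\rho,\Le\ra\geq 0$ for all positive $\Le$ implies separable'' form and not merely the ``$(\mathrm{id}\otimes\Le)(\rho)\geq 0$ for all positive $\Le$'' form; the excerpt explicitly states both are available from \cite{HHH96}, so I would cite that. An alternative to invoking the strong dual form would be to run the twirling argument at the level of operators rather than pairings — twirl the witness from $(\mathrm{id}_A\otimes\mathcal{N})(\rho)\not\geq 0$ — but using the pairing formulation via Lemma~\ref{lem-paring} is cleaner and matches the way the lemma was set up.

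The main obstacle, such as it is, is bookkeeping: making sure the bar/contragredient decorations on the representations line up correctly when passing between $\rho$ living in $\mathrm{Inv}(\overline{\pi_A}\otimes\pi_B)$, the Choi matrices living in $\mathrm{Inv}(\overline{\pi_B}\otimes\pi_A)$ (note $\Le$ goes $B(H_B)\to B(H_A)$), and the twirling identities of Proposition~\ref{prop-twirling}. There is no serious analytic or combinatorial difficulty; the theorem is a structural consequence of Horodecki duality plus the fact that twirling is a positivity-preserving conditional expectation onto the covariant maps that fixes invariant states.
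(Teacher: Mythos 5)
Your proposal is correct and follows essentially the same route as the paper's proof: the paper also reduces $(3)\Rightarrow(1)$ to the Horodecki dual-pairing criterion by using Lemma \ref{lem-paring} together with the invariance of $\rho$ and the fact that twirling preserves positivity, writing $\la\rho,\Le\ra=\la\mathcal{T}_{\overline{\pi_A}\otimes\pi_B}\rho,\Le\ra=\la\rho,\mathcal{T}_{\pi_B,\pi_A}\Le\ra\geq 0$ for an arbitrary positive $\Le$. Your contrapositive phrasing is just a logically equivalent repackaging of the same argument.
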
 

\begin{proof}
Two directions $(1)\Rightarrow (2)$ and $(2)\Rightarrow (3)$ are clear. For the last direction $(3)\Rightarrow (1)$, let us show that $\la \rho, \Le \ra \geq 0$ for all positive linear maps $\Le:B(H_B)\to B(H_A)$. Indeed, since $\rho$ is $\overline{\pi_A}\otimes \pi_B$-invariant and $\mathcal{T}_{\pi_B,\pi_A}\Le\in \text{CovPos}(\pi_B,\pi_A)$, we can apply Lemma \ref{lem-paring} to obtain
\begin{equation}
    \la \rho,\Le \ra=\la \mathcal{T}_{\overline{\pi_A}\otimes \pi_B}\rho ,\Le \ra = \la \rho, \mathcal{T}_{\pi_B,\pi_A}\Le\ra\geq 0.
\end{equation}
\end{proof}

From now on, let us focus on the question of whether PPT property coincides with separability, i.e. problem PPT=SEP for invariant quantum states. Recall that the dual notions of PPT property and separability correspond to decomposability and positivity respectively. Indeed, many duality arguments \cite{Kye23} carry over into our framework, and the problem PPT=SEP in $\text{InvQS}(\overline{\pi_A}\otimes \pi_B)$ is equivalent to the question whether all positive maps are decomposable, i.e. problem POS=DEC in $\text{Cov}(\pi_B,\pi_A)$.

\begin{proposition} \label{prop-duality}
Let $\Le: B(H_B)\to B(H_A)$ be $(\pi_B,\pi_A)$-covariant. Then
\begin{enumerate}
    \item $\Le$ is positive if and only if $(\text{id}_A\otimes \Le)(\rho)\geq 0$ for any separable $\rho\in \text{InvQS}(\overline{\pi_A}\otimes \pi_B)$.
    
    \item $\Le$ is decomposable if and only if $(\text{id}_A\otimes \Le)(\rho)\geq 0$ for any PPT $\rho\in \text{InvQS}(\overline{\pi_A}\otimes \pi_B)$.
\end{enumerate}
\end{proposition}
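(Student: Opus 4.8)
The plan is to reduce both statements to known results about general (non-covariant) maps via the pairing machinery just established, exploiting the duality $\langle X, \mathcal{N}\rangle = \mathrm{Tr}(X C_{\mathcal{N}^*})$ together with Lemma \ref{lem-paring}. For part (1), the ``only if'' direction is immediate since positivity of $\Le$ forces $(\mathrm{id}_A\otimes\Le)(\rho)\geq 0$ for every state $\rho$, invariant or not. For the ``if'' direction, suppose $(\mathrm{id}_A\otimes\Le)(\rho)\geq 0$ for all separable $\rho\in\mathrm{InvQS}(\overline{\pi_A}\otimes\pi_B)$. I want to deduce positivity of $\Le$, i.e. $\Le(Q)\geq 0$ for all $Q\geq 0$ in $B(H_B)$. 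The key point is that $\Le$ is already $(\pi_B,\pi_A)$-covariant, so for any positive rank-one $Q=|v\rangle\langle v|$ the operator $\mathcal{T}_{\pi_B}(Q)=\int_G \pi_B(x)Q\pi_B(x)^*\,dx$ is a $\pi_B$-invariant positive operator, and $\Le(Q)$ and $\Le(\mathcal{T}_{\pi_B}Q)$ are related by covariance. I would show that testing positivity of $\Le$ against all positive operators reduces — using covariance and the conditional-expectation property of $\mathcal{T}_{\pi_B}$ — to testing $(\mathrm{id}_A\otimes\Le)$ against separable invariant states of the form built from twirls of product states, which is exactly the hypothesis. More precisely, by Horodecki-type duality $\Le$ is positive iff $\langle\rho,\Le\rangle\geq 0$ for every separable $\rho\in\mathcal{D}(H_A\otimes H_B)$; then apply Lemma \ref{lem-paring} to replace $\rho$ by $\mathcal{T}_{\overline{\pi_A}\otimes\pi_B}\rho$, which is again separable by Proposition \ref{prop-twirling-preserving} and lies in $\mathrm{InvQS}(\overline{\pi_A}\otimes\pi_B)$, and invoke the hypothesis together with the pairing identity $\langle\rho,\Le\rangle = \mathrm{Tr}((\mathrm{id}_A\otimes\Le)(\rho)|\Omega_A\rangle\langle\Omega_A|)$ — but since this only pairs against the fixed state $|\Omega_A\rangle\langle\Omega_A|$ rather than arbitrary states, one must instead argue that $(\mathrm{id}_A\otimes\Le)(\mathcal{T}_{\overline{\pi_A}\otimes\pi_B}\rho)\geq 0$ for all separable $\rho$ already yields $(\mathrm{id}_A\otimes\Le)(\sigma)\geq 0$ for all separable $\sigma$ by a twirling/averaging argument on the $A$-leg, and then apply the Horodecki criterion in reverse.

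For part (2), the same template applies with ``positive/separable'' replaced by ``decomposable/PPT.'' The ``only if'' direction: if $\Le=\Le_1+\Le_2$ with $\Le_1$ CP and $\Le_2$ CCP, then for any PPT $\rho$ one has $(\mathrm{id}_A\otimes\Le_1)(\rho)\geq 0$ since $\Le_1$ is CP, and $(\mathrm{id}_A\otimes\Le_2)(\rho)=(\mathrm{id}_A\otimes(T_A\circ\Le_2'))(\rho)$ — writing $\Le_2 = T_A\circ\Le_2'$ with $\Le_2'$ CP — equals $(T_A\otimes\mathrm{id})$ applied to $(\mathrm{id}\otimes\Le_2')(\rho)\geq 0$ composed appropriately, and using that $\rho$ is PPT one checks $(\mathrm{id}_A\otimes\Le_2)(\rho)\geq 0$ as well; summing gives the claim. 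The ``if'' direction is the substantive one: assuming $(\mathrm{id}_A\otimes\Le)(\rho)\geq 0$ for all PPT $\rho\in\mathrm{InvQS}(\overline{\pi_A}\otimes\pi_B)$, I want decomposability of $\Le$. Here I would invoke the duality between the cone of decomposable maps and the cone of PPT states (this is the ``many duality arguments'' of \cite{Kye23} alluded to in the text): a $(\pi_B,\pi_A)$-covariant map $\Le$ fails to be decomposable iff there is a PPT operator $X$ (equivalently, after twirling by Proposition \ref{prop-twirling-preserving}, a PPT invariant operator, which can be normalized to a PPT invariant state) with $\langle X,\Le\rangle<0$; combined with Lemma \ref{lem-paring} and the hypothesis this is a contradiction, using once more that $(\mathrm{id}_A\otimes\Le)(\rho)\geq0$ for all PPT invariant $\rho$ is equivalent, via the averaging argument on the $A$-leg, to $\langle\rho,\Le\rangle\geq0$ for all PPT invariant $\rho$ and hence — after twirling — for all PPT $\rho$.

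The main obstacle I anticipate is the bookkeeping in the ``if'' directions connecting the \emph{operator} inequality $(\mathrm{id}_A\otimes\Le)(\rho)\geq 0$ (a statement for each individual $\rho$) with the \emph{scalar} pairing $\langle\rho,\Le\rangle\geq 0$: the naive pairing $\langle X,\Le\rangle$ only records the expectation against $|\Omega_A\rangle\langle\Omega_A|$, so one genuinely needs the averaging/twirling trick on the first tensor leg — concretely, that $(\mathrm{id}_A\otimes\Le)(\rho)\geq 0$ for all separable invariant $\rho$ implies the same for all separable $\rho$, by writing an arbitrary product state $\sigma_A\otimes\sigma_B$, twirling, and using that the set of values $\{(\mathrm{id}_A\otimes\Le)(\sigma)\}$ is controlled by its invariant average because of covariance — to upgrade the scalar condition back to the operator condition and then feed it into the Horodecki criterion (for (1)) or the decomposability–PPT duality of \cite{Kye23} (for (2)). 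Once this translation is set up cleanly, both equivalences follow formally; I would state and prove the translation lemma once and use it twice.
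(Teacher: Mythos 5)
Your first-stated route for the ``if'' directions is exactly the paper's proof: for an arbitrary separable (resp.\ PPT) state $\rho$, use covariance of $\Le$ (so that $\Le=\mathcal{T}_{\pi_B,\pi_A}\Le$) and Lemma \ref{lem-paring} to get $\la\rho,\Le\ra=\la\mathcal{T}_{\overline{\pi_A}\otimes\pi_B}\rho,\Le\ra$, observe that $\mathcal{T}_{\overline{\pi_A}\otimes\pi_B}\rho$ is a separable (resp.\ PPT) invariant state by Proposition \ref{prop-twirling-preserving}, apply the hypothesis, and conclude by the scalar duality theorems ([EK00, Theorem 3.1] for positivity, St{\o}rmer's result for decomposability). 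The ``obstacle'' you then raise --- that the pairing only tests against the fixed state $|\Omega_A\ra\la\Omega_A|$ --- is not a gap. The hypothesis gives the \emph{operator} inequality $(\mathrm{id}_A\otimes\Le)(\mathcal{T}_{\overline{\pi_A}\otimes\pi_B}\rho)\geq 0$, and tracing a positive operator against the positive operator $|\Omega_A\ra\la\Omega_A|$ yields the \emph{scalar} inequality $\la\mathcal{T}_{\overline{\pi_A}\otimes\pi_B}\rho,\Le\ra\geq 0$; that is all you need, because the duality theorems are stated precisely in terms of the scalar pairing over all separable (resp.\ PPT) states --- this is the ``seemingly weaker condition'' the paper flags at the start of Section \ref{sec-framework}. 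You never have to ``upgrade the scalar condition back to the operator condition.'' The operator-level translation lemma you propose instead (that $(\mathrm{id}_A\otimes\Le)(\rho)\geq 0$ for all separable invariant $\rho$ implies the same for all separable $\rho$) is essentially equivalent to the full statement of part (1) --- applied to product states $\sigma_A\otimes\sigma_B$ it already gives positivity of $\Le$ --- so it cannot serve as an intermediate step, and you do not prove it.

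One smaller point in part (2), ``only if'': for the CCP summand you should factor the transpose on the \emph{input} side, $\Le_2=\Psi\circ T_B$ with $\Psi$ CP, so that $(\mathrm{id}_A\otimes\Le_2)(\rho)=(\mathrm{id}_A\otimes\Psi)(\rho^{T_B})\geq 0$ follows directly from the PPT hypothesis $\rho^{T_B}\geq 0$. Writing $\Le_2=T_A\circ\Le_2'$ and applying $(\mathrm{id}_A\otimes T_A)$ to a positive operator, as your sketch suggests, does not work, since a partial transpose of a positive operator need not be positive. With these two corrections your argument reduces to the paper's.
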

\begin{proof}
(1) Suppose $(\text{id}_A\otimes \Le)(\rho)\geq 0$ for any separable $\rho\in \text{InvQS}(\pi_A\otimes \pi_B)$. Then for every separable state $\rho\in \D(H_A\otimes H_B)$, we have
\begin{equation}
    \la \rho,\Le \ra=\la \rho, \mathcal{T}_{\pi_B,\pi_A}\Le\ra = \la \mathcal{T}_{\overline{\pi_A}\otimes \pi_B}\rho ,\Le \ra \geq 0
\end{equation}
by Lemma \ref{lem-paring} and by the separability of $\mathcal{T}_{\overline{\pi_A}\otimes \pi_B}\rho$. Now positivity of $\Le$ follows from \cite[Theorem 3.1]{EK00}. The converse direction is clear.

(2) It is enough to repeat the arguments from (1) based on the following duality result \cite{Sto82}: $\Le$ is decomposable if and only if $\la \rho, \Le\ra\geq 0$ for every PPT state $\rho$.
\end{proof}

\begin{corollary} \label{cor-CovPPTES}
The following are equivalent:
\begin{enumerate}
    \item PPT=SEP in $\text{InvQS}(\overline{\pi_A}\otimes \pi_B)$.
    
    \item POS=DEC in $\text{Cov}(\pi_B,\pi_A)$.
\end{enumerate}
\end{corollary}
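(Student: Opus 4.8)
The plan is to establish the equivalence of Corollary \ref{cor-CovPPTES} by combining the two characterizations already proved in Proposition \ref{prop-duality} with the Horodecki-type entanglement witness criterion underlying Theorem \ref{cor-EBT}. The core idea is that both ``PPT=SEP in $\text{InvQS}(\overline{\pi_A}\otimes\pi_B)$'' and ``POS=DEC in $\text{Cov}(\pi_B,\pi_A)$'' can be tested against the \emph{same} family of inequalities, namely $(\text{id}_A\otimes\Le)(\rho)\geq 0$ ranging over $(\pi_B,\pi_A)$-covariant maps $\Le$ and over $\rho\in\text{InvQS}(\overline{\pi_A}\otimes\pi_B)$; so proving the biconditional amounts to swapping the order of two quantifiers over this bilinear condition.

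First I would prove $(2)\Rightarrow(1)$. Assume every $\Le\in\text{CovPos}(\pi_B,\pi_A)$ is decomposable, and let $\rho\in\text{InvPPTQS}(\overline{\pi_A}\otimes\pi_B)$. To show $\rho$ is separable I invoke Theorem \ref{cor-EBT}, so it suffices to check $\la\rho,\Le\ra\geq 0$ for every $\Le\in\text{CovPos}(\pi_B,\pi_A)$. By hypothesis such an $\Le$ is decomposable, and Proposition \ref{prop-duality}(2) (or equivalently the Størmer duality $\la\rho,\Le\ra\geq 0$ for all PPT $\rho$ when $\Le$ is decomposable, applied after noting $\la\rho,\Le\ra=\la\rho,\mathcal{T}_{\pi_B,\pi_A}\Le\ra$ via Lemma \ref{lem-paring}) gives $\la\rho,\Le\ra\geq 0$ since $\rho$ is PPT. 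Hence $\rho$ is separable.

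For the converse $(1)\Rightarrow(2)$, assume PPT=SEP in $\text{InvQS}(\overline{\pi_A}\otimes\pi_B)$, and let $\Le\in\text{Cov}(\pi_B,\pi_A)$ be positive; I must show $\Le$ is decomposable. By Proposition \ref{prop-duality}(2) it is enough to verify $(\text{id}_A\otimes\Le)(\rho)\geq 0$ for every PPT $\rho\in\text{InvQS}(\overline{\pi_A}\otimes\pi_B)$. But any such $\rho$ is separable by hypothesis, and then Proposition \ref{prop-duality}(1) applied to the positive covariant map $\Le$ yields $(\text{id}_A\otimes\Le)(\rho)\geq 0$. Therefore $\Le$ is decomposable, completing the equivalence.

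The two directions are genuinely short modular applications of the preceding results, so I do not anticipate a serious obstacle here; the only point requiring care is bookkeeping about which representation ($\pi_A$, $\pi_B$, $\overline{\pi_A}$) and which direction of a covariant map appears where, since the duality swaps the roles of source and target via $\Le\mapsto\Le^*$ and the Choi correspondence $\Le\in\text{Cov}(\pi_A,\pi_B)\Leftrightarrow C_\Le\in\text{Inv}(\overline{\pi_A}\otimes\pi_B)$ from Corollary \ref{cor-twirling}. One should also make sure the quantum-state normalization versus general-positive-operator phrasing in Proposition \ref{prop-duality} is used consistently (scaling a PPT operator to a PPT state changes nothing), but this is routine. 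Essentially the corollary is a formal consequence of the ``dual pairing respects twirling'' lemma together with the two halves of Proposition \ref{prop-duality}, and the proof is just recording that.
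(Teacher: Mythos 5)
Your proposal is correct and follows essentially the same route as the paper: both directions are exactly the quantifier swap via Proposition \ref{prop-duality} (decomposable maps certify positivity on PPT invariant states, positive maps certify positivity on separable invariant states) combined with Theorem \ref{cor-EBT}. The paper's proof is just a terser rendering of the same two modular applications.
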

\begin{proof}
((1) $\Rightarrow$ (2)) If $\Le\in \text{CovPos}(\pi_B,\pi_A)$, then $(\text{id}_A\otimes \Le)(\rho)\geq 0$ for every separable (hence every PPT) state $\rho\in \text{InvQS}(\overline{\pi_A}\otimes \pi_B)$. Thus, $\Le$ is decomposable by Proposition \ref{prop-duality}.

((2) $\Rightarrow$ (1)) If $\rho\in \text{InvQS}(\overline{\pi_A}\otimes \pi_B)$ is a PPT state, then $(\text{id}_A\otimes \Le)(\rho)\geq 0$ for every decomposable (hence every positive) linear map $\Le\in \text{Cov}(\pi_B,\pi_A)$. Thus, $\rho$ is separable by Theorem \ref{cor-EBT}.
\end{proof}

Note that PPT=SEP in $\text{InvQS}(\overline{\pi_A}\otimes \pi_B)$, or equivalently POS=DEC in $\text{Cov}(\pi_B,\pi_A)$, implies that PPT property coincides with the entanglement-breaking property, i.e. PPT=EB in $\text{CovQC}(\pi_A,\pi_B)$. Moreover, we have the following characterization of EB property for $\Phi\in \text{CovQC}(\pi_A,\pi_B)$ by Corollary \ref{cor-twirling} (3). 
\begin{corollary}\label{cor-EBT3}
Let $\Phi\in \text{CovQC}(\pi_A,\pi_B)$. Then the following are equivalent.
\begin{enumerate}
    \item $\Phi$ is entanglement-breaking.
    \item $C_{\Le\circ \Phi}=(\text{id}\otimes \Le)(C_{\Phi})\geq 0$ for any $\Le\in \text{CovPos}(\pi_B,\pi_A)$.
    \item $\mathcal{L}\circ \Phi$ is completely positive for any $\Le\in \text{CovPos}(\pi_B,\pi_A)$.
\end{enumerate} 

\end{corollary}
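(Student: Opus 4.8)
The plan is to prove the equivalence $(1)\Leftrightarrow(2)\Leftrightarrow(3)$ for $\Phi\in\text{CovQC}(\pi_A,\pi_B)$ by reducing it to the characterization of separability of invariant quantum states already established in Theorem \ref{cor-EBT}. The crucial observation is that $\Phi$ is entanglement-breaking precisely when its normalized Choi matrix $C_\Phi$ is a separable quantum state, and by Corollary \ref{cor-twirling}(3), the covariance $\Phi\in\text{Cov}(\pi_A,\pi_B)$ translates into $C_\Phi\in\text{Inv}(\overline{\pi_A}\otimes\pi_B)$; since $\Phi$ is CPTP, $C_\Phi$ is in fact a quantum state, so $C_\Phi\in\text{InvQS}(\overline{\pi_A}\otimes\pi_B)$. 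This places us exactly in the setting of Theorem \ref{cor-EBT}, which we may then invoke directly.

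First I would record that $(1)$ holds iff $C_\Phi$ is separable, by the definition of EB. Then, applying Theorem \ref{cor-EBT} to $\rho=C_\Phi$, separability of $C_\Phi$ is equivalent to $(\text{id}_A\otimes\Le)(C_\Phi)\geq 0$ for every $\Le\in\text{CovPos}(\pi_B,\pi_A)$. Next I would identify $(\text{id}\otimes\Le)(C_\Phi)$ with the Choi matrix $C_{\Le\circ\Phi}$: indeed, $C_{\Le\circ\Phi}=(\text{id}_A\otimes(\Le\circ\Phi))(|\Omega_A\ra\la\Omega_A|)=(\text{id}_A\otimes\Le)\big((\text{id}_A\otimes\Phi)(|\Omega_A\ra\la\Omega_A|)\big)=(\text{id}_A\otimes\Le)(C_\Phi)$, which gives $(1)\Leftrightarrow(2)$. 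Finally, $(2)\Leftrightarrow(3)$ is immediate from the basic fact that a linear map is completely positive if and only if its Choi matrix is positive: $C_{\Le\circ\Phi}\geq 0$ iff $\Le\circ\Phi$ is CP.

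One small point to verify carefully is that $\Le\circ\Phi$ really is $(\pi_A,\pi_A)$-covariant when $\Le\in\text{Cov}(\pi_B,\pi_A)$ and $\Phi\in\text{Cov}(\pi_A,\pi_B)$ — this is a routine composition check — but it is not strictly needed for the statement, since the equivalence only refers to complete positivity of the composite. The argument is essentially a dictionary translation, so there is no serious obstacle; the only thing to be attentive about is keeping the normalizations in the Choi matrix consistent and confirming that $C_\Phi$ being a \emph{state} (not merely positive) is what lets us cite Theorem \ref{cor-EBT}, whose hypothesis is $\rho\in\text{InvQS}(\overline{\pi_A}\otimes\pi_B)$. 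This is guaranteed by the channel-state duality recalled in Section \ref{sec-pre} together with trace-preservation of $\Phi$.
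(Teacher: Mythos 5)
Your proposal is correct and follows exactly the route the paper intends: the corollary is stated as an immediate consequence of Corollary \ref{cor-twirling}(3) (which places $C_\Phi$ in $\text{InvQS}(\overline{\pi_A}\otimes\pi_B)$) combined with Theorem \ref{cor-EBT} applied to $\rho=C_\Phi$, plus the identification $C_{\Le\circ\Phi}=(\text{id}\otimes\Le)(C_\Phi)$ and the Choi criterion for complete positivity. Nothing is missing.
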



To summarize, we have 
\begin{center}
    PPT=SEP in $\text{InvQS}(\overline{\pi_A}\otimes \pi_B)$ $\Leftrightarrow$ DEC=POS in $\text{Cov}(\pi_B,\pi_A)$,
\end{center}
and these conditions imply PPT=EB in $\text{CovQC}(\pi_A,\pi_B)$. One might ask whether all these three problems are equivalent, but one technical issue is that $\text{CovQC}(\pi_A,\pi_B)$ is not identified with $\text{InvQS}(\overline{\pi_A}\otimes \pi_B)$ in general. This leads us to question whether the (reduced) channel-state duality \begin{equation}
\widetilde{C}: \text{CovQC}(\pi_A,\pi_B)\rightarrow \text{InvQS}(\overline{\pi_A}\otimes \pi_B)    
\end{equation}
is bijective. The channel-state duality $\widetilde{C}$ is not surjective in general, but it is known to be the case if $\pi_A$ is irreducible, as already noted in \cite[Lemma 15]{GBW21}. Moreover, we prove that the converse is also true, i.e. the channel-state duality $\widetilde{C}$ is bijective if and only if $\pi_A$ is irreducible. Let us start with the following lemma.

\begin{lemma} \label{lem-covIrred} Let $\pi_A:G\rightarrow \mathcal{U}(H_A)$ and $\pi_B:G\rightarrow \mathcal{U}(H_B)$ be unitary representations of $G$ and let $\Le\in \text{Cov}(\pi_A,\pi_B)$.
\begin{enumerate}
    \item If $\pi_B$ is irreducible, then $\Le(\text{Id}_A)=c\,\text{Id}_B$ for some constant $c$.
    
    \item If $\pi_A$ is irreducible, then there is a constant $c$ such that $\text {Tr}(\Le(X))=c\,\text {Tr}(X)$ for $X\in B(H_A)$.
\end{enumerate}
\end{lemma}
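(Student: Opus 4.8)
The plan is to use Schur's lemma via the twirling operation, exploiting the fact that irreducibility of $\pi_A$ (resp. $\pi_B$) controls the structure of $\Le$ on the invariant subalgebras. For part (1), I would first observe that $\text{Id}_A\in \text{Inv}(\pi_A)$, so $\pi_A(x)\text{Id}_A\pi_A(x)^*=\text{Id}_A$ for all $x\in G$. Applying the covariance relation $\Le(\pi_A(x)Y\pi_A(x)^*)=\pi_B(x)\Le(Y)\pi_B(x)^*$ with $Y=\text{Id}_A$ then yields $\Le(\text{Id}_A)=\pi_B(x)\Le(\text{Id}_A)\pi_B(x)^*$ for every $x\in G$, i.e. $\Le(\text{Id}_A)\in \text{Inv}(\pi_B)$. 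Since $\pi_B$ is irreducible, $\text{Inv}(\pi_B)=\Comp\cdot \text{Id}_B$, and hence $\Le(\text{Id}_A)=c\,\text{Id}_B$ for some $c\in\Comp$.

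For part (2), the cleanest route is to pass to the adjoint map. By Corollary \ref{cor-twirling} (2), $\Le\in\text{Cov}(\pi_A,\pi_B)$ implies $\Le^*\in\text{Cov}(\pi_B,\pi_A)$. Since $\pi_A$ is irreducible, part (1) applied to $\Le^*$ gives $\Le^*(\text{Id}_B)=c\,\text{Id}_A$ for some constant $c$. Now for arbitrary $X\in B(H_A)$, the defining property of the adjoint yields
\begin{equation}
\text{Tr}(\Le(X))=\text{Tr}(\Le(X)\,\text{Id}_B)=\text{Tr}(X\,\Le^*(\text{Id}_B))=\text{Tr}(X\cdot c\,\text{Id}_A)=c\,\text{Tr}(X),
\end{equation}
which is exactly the claimed identity with the same constant $c$.

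I do not expect a genuine obstacle here; both statements are essentially immediate consequences of Schur's lemma once one recognizes that covariance sends invariant operators to invariant operators. The only point requiring mild care is the choice between a direct argument for (2) (testing the trace functional $\text{Tr}(\Le(\cdot))$, which is $\pi_A$-invariant as a linear functional, hence proportional to $\text{Tr}$ by irreducibility of $\pi_A$) versus the adjoint argument above; the adjoint route is shortest and reuses part (1) verbatim, so that is what I would write. One should also note that parts (1) and (2) are in fact dual to each other under $\Le\mapsto\Le^*$, so proving one essentially proves the other.
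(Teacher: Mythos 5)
Your proof is correct and follows essentially the same route as the paper: part (1) is Schur's lemma applied to $\Le(\text{Id}_A)\in\text{Inv}(\pi_B)$, and part (2) passes to the adjoint via Corollary \ref{cor-twirling} (2) and reuses part (1), with the same trace computation. No gaps.
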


\begin{proof}
\begin{enumerate}
    \item From the irreducibility of $\pi_B$ and the identity
    \begin{equation}
        \pi_B(x)\Le(\text{Id}_A)\pi_B(x)^*=\Le(\pi_A(x)\pi_A(x)^*)=\Le(\text{Id}_A),
    \end{equation}
    we have $\Le(\text{Id}_A)\in \text{Inv}(\pi_B)=\Comp\cdot \text{Id}_B$.
    \item The adjoint map ${\Le}^*$ is $(\pi_B,\pi_A)$-covariant by Corollary \ref{cor-twirling} (2), so ${\Le}^*({\text {Id}}_B)=c\,{\text {Id}_A}$ for some $c$ by (1). In this case, we have
    \begin{equation}
        \text {Tr}({\Le}(X))=\text {Tr}({\Le}(X) \,\text {Id}_B)=\text {Tr}(X\,\Le^*(\text {Id}_B))=c\,\text {Tr}(X)
    \end{equation}
for any $X\in B(H_A)$.
\end{enumerate}
\end{proof}

Now, let us apply Lemma \ref{lem-covIrred} (2) to prove that the channel-state duality $\widetilde{C}:\text{CovQC}(\pi_A,\pi_B)\rightarrow \text{InvQS}(\overline{\pi_A}\otimes \pi_B)$ should be bijective if and only if $\pi_A$ is irreducible.

\begin{proposition}\label{rmk30}
Let $\pi_A:G\rightarrow \mathcal{U}(H_A)$ and $\pi_B:G\rightarrow \mathcal{U}(H_B)$ be unitary representations of $G$ and let $\Le\in \text{Cov}(\pi_B,\pi_A)$. Then the channel-state duality
    \begin{equation}
        \widetilde{C}: \text{CovQC}(\pi_A,\pi_B)\rightarrow \text{InvQS}(\overline{\pi_A}\otimes \pi_B)
    \end{equation}
    is bijective if and only if $\pi_A$ is irreducible.
\end{proposition}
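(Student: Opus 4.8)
The plan is to separate the two halves of the equivalence and to reduce everything to a single numerical condition. The underlying (unrestricted) channel-state duality $\Phi\mapsto C_\Phi$ is a linear bijection between linear maps $B(H_A)\to B(H_B)$ and operators in $B(H_A\otimes H_B)$, so $\widetilde{C}$ is automatically injective; it is also well-defined into $\text{InvQS}(\overline{\pi_A}\otimes\pi_B)$ by channel-state duality together with Corollary~\ref{cor-twirling}(3). Hence $\widetilde{C}$ is bijective if and only if it is surjective. Now fix $\rho\in\text{InvQS}(\overline{\pi_A}\otimes\pi_B)$ and let $\Phi:B(H_A)\to B(H_B)$ be the unique linear map with $C_\Phi=\rho$. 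Then $\Phi$ is CP (since $\rho\geq 0$) and $(\pi_A,\pi_B)$-covariant (since $\rho\in\text{Inv}(\overline{\pi_A}\otimes\pi_B)$, by Corollary~\ref{cor-twirling}(3)), so $\rho$ lies in the image of $\widetilde{C}$ precisely when $\Phi$ is trace-preserving, i.e. when $(\text{id}_A\otimes\text{Tr}_B)(\rho)=\frac{1}{d_A}\text{Id}_A$. So the whole statement comes down to understanding the partial trace $(\text{id}_A\otimes\text{Tr}_B)(\rho)$ for invariant $\rho$.

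For the ``if'' direction, assume $\pi_A$ is irreducible. Given any $\rho\in\text{InvQS}(\overline{\pi_A}\otimes\pi_B)$, the associated CP covariant map $\Phi$ satisfies $\text{Tr}(\Phi(X))=c\,\text{Tr}(X)$ for all $X\in B(H_A)$ and some constant $c$ by Lemma~\ref{lem-covIrred}(2). Computing the partial trace of the Choi matrix gives $(\text{id}_A\otimes\text{Tr}_B)(\rho)=\frac{c}{d_A}\text{Id}_A$, and taking the total trace yields $c=\text{Tr}(\rho)=1$. Hence $(\text{id}_A\otimes\text{Tr}_B)(\rho)=\frac{1}{d_A}\text{Id}_A$, so $\Phi$ is automatically trace-preserving and $\widetilde{C}(\Phi)=\rho$; thus $\widetilde{C}$ is surjective, hence bijective. (Equivalently one can argue directly: by cyclicity of the trace on the $H_B$-leg, the partial trace of a $(\overline{\pi_A}\otimes\pi_B)$-invariant operator is $\overline{\pi_A}$-invariant, and irreducibility of $\overline{\pi_A}$ forces it to be the scalar $\frac{1}{d_A}\text{Id}_A$.)

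For the ``only if'' direction I prove the contrapositive. Suppose $\pi_A$, equivalently $\overline{\pi_A}$, is reducible, and let $W\subsetneq H_A$ be a proper nonzero $\overline{\pi_A}$-invariant subspace with orthogonal projection $P$; by unitarity $P$ commutes with every $\overline{\pi_A}(x)$, so $P\in\text{Inv}(\overline{\pi_A})$ with $0\neq P\neq\text{Id}_A$. Then $\sigma:=\frac{1}{\text{rank}(P)}P$ is a $\overline{\pi_A}$-invariant quantum state with $\sigma\neq\frac{1}{d_A}\text{Id}_A$, and $\rho:=\sigma\otimes\frac{1}{d_B}\text{Id}_B$ lies in $\text{InvQS}(\overline{\pi_A}\otimes\pi_B)$ as a product of invariant quantum states. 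Since $(\text{id}_A\otimes\text{Tr}_B)(\rho)=\sigma\neq\frac{1}{d_A}\text{Id}_A$, the reduction above shows that the CP covariant map $\Phi$ with $C_\Phi=\rho$ fails to be trace-preserving, so $\rho$ lies outside the image of $\widetilde{C}$, and $\widetilde{C}$ is not bijective.

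The argument is mostly bookkeeping once the reduction is in place: the entire content of the statement is concentrated in the trace-preservation constraint, which over the invariant subspace collapses to the single scalar condition $(\text{id}_A\otimes\text{Tr}_B)(\rho)=\frac{1}{d_A}\text{Id}_A$. The only mildly delicate step is the ``only if'' direction, where one must produce a genuinely non-scalar invariant state; here the point is simply that reducibility of $\pi_A$ supplies a proper nonzero invariant projection in the $*$-algebra $\text{Inv}(\overline{\pi_A})$, which yields the explicit counterexample $\rho=\sigma\otimes\frac{1}{d_B}\text{Id}_B$.
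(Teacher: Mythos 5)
Your proof is correct and follows essentially the same route as the paper: the ``if'' direction via Lemma \ref{lem-covIrred}(2) and the normalization $c=\text{Tr}(C_\Phi)=1$, and the ``only if'' direction via the counterexample $\sigma\otimes\frac{1}{d_B}\text{Id}_B$ built from a proper invariant projection, which is exactly the paper's Choi matrix $\frac{1}{\dim H_A^{(1)}}\Pi_1\otimes\frac{1}{d_B}\text{Id}_B$. Your explicit reduction to the single condition $(\text{id}_A\otimes\text{Tr}_B)(\rho)=\frac{1}{d_A}\text{Id}_A$, and your care in choosing the invariant subspace for $\overline{\pi_A}$ rather than $\pi_A$, are minor (and welcome) refinements of the same argument.
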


\begin{proof}
Let us prove the if part first. For any $\rho\in \text{InvQS}(\overline{\pi_A}\otimes \pi_B)$ there exists completely positive $\mathcal{L}\in \text{Cov}(\pi_A,\pi_B)$ such that $C_{\mathcal{L}}=\rho$ by Corollary \ref{cor-twirling} (3). Moreover, $\mathcal{L}$ should be trace-preserving. Indeed, irreducibility of $\pi_A$ implies that there exists a constant $c$ such that $\text{Tr}(\Le(X))=c\text{Tr}(X)$ for all $X\in B(H_A)$ by Lemma \ref{lem-covIrred} (2), and we have
\begin{equation}
    c= \frac{c}{d_A} \sum_{i=1}^{d_A}\text{Tr}(e_{ii}) =\frac{1}{d_A}\sum_{i=1}^{d_A}\text{Tr}(e_{ii}\otimes\mathcal{L}(e_{ii}))=\text{Tr}(C_{\mathcal{L}})=1.
\end{equation}
Conversely, if we assume that $\pi_A=\pi_A^{(1)}\oplus \pi_A^{(2)}$ with $H_A=H_A^{(1)}\oplus H_A^{(2)}$ and if $\Pi_1$ is the orthogonal projection from $H_A$ onto $H_A^{(1)}$, then we can take a CP non-TP map $\Le:B(H_A)\to B(H_B)$ given by 
\begin{equation}
    \Le(X)=\frac{d_A}{d_B\cdot \dim H_A^{(1)}}{\rm \text{Tr}}(\Pi_1 X)\text{Id}_B
\end{equation}
whose Choi matrix is 
\begin{equation}
    C_{\Le}=\left (\frac{1}{\dim H_A^{(1)}}\Pi_1\right )\otimes \left (\frac{1}{d_B}\text{Id}_B\right )\in \text{InvQS}(\overline{\pi_A}\otimes \pi_B).
\end{equation}
\end{proof}

\begin{corollary}\label{cor-equiv}
Let $\pi_A:G\rightarrow \mathcal{U}(H_A)$ and $\pi_B:G\rightarrow \mathcal{U}(H_B)$ be unitary representations of $G$ and suppose that $\pi_A$ is irreducible. Then the following are equivalent.
\begin{enumerate}
    \item PPT=SEP in $\text{InvQS}(\overline{\pi_A}\otimes \pi_B)$.
    \item PPT=EB in $\text{CovQC}(\pi_A,\pi_B)$.
    \item POS=DEC in $\text{CovPos}_1(\pi_B,\pi_A)$.
\end{enumerate}
\end{corollary}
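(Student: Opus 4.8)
The plan is to route everything through Corollary \ref{cor-CovPPTES}, which already establishes that (1) is equivalent to the statement ``POS=DEC in $\text{Cov}(\pi_B,\pi_A)$'' with no irreducibility hypothesis needed. It therefore suffices to prove two things: that ``POS=DEC in $\text{Cov}(\pi_B,\pi_A)$'' is equivalent to (3), and that (1) is equivalent to (2). The irreducibility of $\pi_A$ will enter in both reductions.

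For the equivalence with (3): one direction is trivial since $\text{CovPos}_1(\pi_B,\pi_A)\subseteq \text{CovPos}(\pi_B,\pi_A)$. For the converse, assume every positive unital $(\pi_B,\pi_A)$-covariant map is decomposable and take an arbitrary $\Le\in\text{CovPos}(\pi_B,\pi_A)$. Since $\pi_A$ is irreducible, Lemma \ref{lem-covIrred} (1) (applied with the roles of $\pi_A$ and $\pi_B$ interchanged) gives $\Le(\text{Id}_B)=c\,\text{Id}_A$ for some $c\geq 0$. If $c>0$, then $c^{-1}\Le\in\text{CovPos}_1(\pi_B,\pi_A)$ is decomposable by hypothesis, hence so is $\Le$. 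If $c=0$, then $\Le(\text{Id}_B)=0$ forces $\Le=0$ (for any $X\geq 0$ one has $0\leq\Le(X)\leq \|X\|\,\Le(\text{Id}_B)=0$, so $\Le$ vanishes on positive elements, which span), and $\Le=0$ is trivially decomposable.

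For (1) $\Rightarrow$ (2): if $\Phi\in\text{CovPPTQC}(\pi_A,\pi_B)$, then $C_\Phi\in\text{InvQS}(\overline{\pi_A}\otimes\pi_B)$ by Corollary \ref{cor-twirling} (3), and $C_\Phi$ is PPT by definition, so (1) forces $C_\Phi$ to be separable, i.e.\ $\Phi$ is EB. For (2) $\Rightarrow$ (1): given a PPT state $\rho\in\text{InvQS}(\overline{\pi_A}\otimes\pi_B)$, the irreducibility of $\pi_A$ lets us invoke Proposition \ref{rmk30}, so the channel-state duality $\widetilde{C}:\text{CovQC}(\pi_A,\pi_B)\to\text{InvQS}(\overline{\pi_A}\otimes\pi_B)$ is bijective; writing $\rho=C_\Phi$ with $\Phi\in\text{CovQC}(\pi_A,\pi_B)$, the PPT-ness of $\rho$ means $\Phi\in\text{CovPPTQC}(\pi_A,\pi_B)$, so (2) gives that $\Phi$ is EB, i.e.\ $\rho=C_\Phi$ is separable. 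Combining these with Corollary \ref{cor-CovPPTES} closes the cycle.

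The only step that is more than routine bookkeeping is (2) $\Rightarrow$ (1), which genuinely relies on the surjectivity of $\widetilde{C}$ from Proposition \ref{rmk30}: without the irreducibility of $\pi_A$ there exist invariant PPT states that are not Choi matrices of any covariant quantum channel, and the argument collapses. By contrast, (1) $\Leftrightarrow$ (3) uses only Lemma \ref{lem-covIrred} (1) together with the degenerate trace-zero case handled above.
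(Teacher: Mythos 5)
Your proof is correct and follows essentially the same route as the paper: it reduces to Corollary \ref{cor-CovPPTES} and uses Lemma \ref{lem-covIrred} (together with the channel--state duality of Proposition \ref{rmk30}) to pass to the smaller convex sets $\text{CovPos}_1(\pi_B,\pi_A)$ and $\text{CovQC}(\pi_A,\pi_B)$. Your explicit treatment of the degenerate case $\Le(\text{Id}_B)=0$ is a welcome detail that the paper's one-line proof leaves implicit.
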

\begin{proof}
It is enough to note that Lemma \ref{lem-covIrred} allows us to focus on a smaller convex set $\text{CovQC}(\pi_A,\pi_B)$ and $\text{CovPos}_1(\pi_B,\pi_A)$ rather than $\text{Cov}(\pi_A,\pi_B)$ and $\text{CovPos}(\pi_B,\pi_A)$, respectively, in Corollary \ref{cor-CovPPTES}.
\end{proof}

Moreover, we prove that the extreme points of $\text{CovPos}_1(\pi_B,\pi_A)$ are enough for the entanglement detection, which we propose as a universal machinery to characterize entangled invariant quantum states with general compact group symmetries. Let us denote by $\text{Ext}(S)$ the set of all extreme points of a convex set $S$.


\begin{theorem}\label{thm-main}
Let $\pi_A:G\rightarrow B(H_A)$ and $\pi_B:G\rightarrow B(H_B)$ be unitary representations, and suppose that $\pi_A$ is irreducible. Let $\rho\in \text{InvQS}(\overline{\pi_A}\otimes \pi_B)$ and $\Phi\in \text{CovQC}(\pi_A,\pi_B)$ such that $C_{\Phi}=\rho$ 
from Proposition \ref{rmk30}. The following are equivalent.
\begin{enumerate}
    \item $\rho$ is separable.
    \item $(\text{id}\otimes \Le)(\rho)\geq 0$ for any $\Le\in \text{CovPos}_1(\pi_B,\pi_A)$.
    \item $(\text{id}\otimes \Le)(\rho)\geq 0$ for any $\Le\in \text{Ext}\left (\text{CovPos}_1(\pi_B,\pi_A)\right )$.
    \item $\Phi$ is entanglement-breaking.
    \item $\Le\circ \Phi$ is completely positive for any $\Le\in \text{CovPos}_1(\pi_B,\pi_A)$. 
    \item $\Le\circ \Phi$ is completely positive 
    for any $\Le\in \text{Ext}\left (\text{CovPos}_1(\pi_B,\pi_A)\right )$. 
    \end{enumerate}
\end{theorem}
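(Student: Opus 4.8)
The plan is to establish the chain of equivalences by leveraging what has already been proved, and inserting only the two genuinely new ingredients: passing from the full covariant cone to its unital slice, and passing from the unital slice to its extreme points. First I would observe that the equivalences $(1)\Leftrightarrow(2)$, $(1)\Leftrightarrow(4)$, and $(4)\Leftrightarrow(5)$ are essentially Corollary \ref{cor-equiv} together with Corollary \ref{cor-EBT3}: since $\pi_A$ is irreducible, Lemma \ref{lem-covIrred}(2) guarantees that for any $\Le\in\text{CovPos}(\pi_B,\pi_A)$ the composition $\Le^*$ sends $\text{Id}_B$ to a scalar multiple of $\text{Id}_A$, so after rescaling we may assume $\Le$ is unital; hence testing positivity of $(\text{id}\otimes\Le)(\rho)$ against all of $\text{CovPos}(\pi_B,\pi_A)$ is the same as testing against $\text{CovPos}_1(\pi_B,\pi_A)$. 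This makes $(1)\Leftrightarrow(2)$ immediate from Theorem \ref{cor-EBT}, and $(4)\Leftrightarrow(5)$ immediate from Corollary \ref{cor-EBT3} (the scaling argument again removes the unitality normalization). The equivalence $(1)\Leftrightarrow(4)$ is just channel-state duality via $C_\Phi=\rho$ together with Proposition \ref{rmk30}.

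Next I would handle the reduction to extreme points, i.e. $(2)\Leftrightarrow(3)$ and $(5)\Leftrightarrow(6)$. The direction $(2)\Rightarrow(3)$ and $(5)\Rightarrow(6)$ are trivial inclusions. For the converses, the key point is that $\text{CovPos}_1(\pi_B,\pi_A)$ is a \emph{compact} convex set: it is a closed (positivity and unitality and covariance are all closed conditions) and bounded (a unital positive map has norm $1$) subset of the finite-dimensional space $\text{Cov}(\pi_B,\pi_A)$. Compactness plus the Krein--Milman theorem gives that every $\Le\in\text{CovPos}_1(\pi_B,\pi_A)$ is a limit of convex combinations $\sum_k t_k^{(n)}\Le_k^{(n)}$ with $\Le_k^{(n)}\in\text{Ext}(\text{CovPos}_1(\pi_B,\pi_A))$. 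Since the map $\Le\mapsto(\text{id}\otimes\Le)(\rho)$ is linear and the cone of positive semidefinite matrices is closed and convex, the hypothesis $(3)$ forces $(\text{id}\otimes\Le)(\rho)\geq 0$ for the convex combinations and hence, by closedness, in the limit; this yields $(2)$. The argument for $(6)\Rightarrow(5)$ is identical, using that the set of completely positive maps (equivalently, matrices with positive Choi matrix) is closed and convex and that $\Le\mapsto\Le\circ\Phi$ is linear.

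I would present the proof by first disposing of the extreme-point reductions as a short self-contained Krein--Milman/closedness argument, then stating that the remaining equivalences among $(1),(2),(4),(5)$ follow verbatim from Corollaries \ref{cor-equiv}, \ref{cor-EBT3}, \ref{cor-EBT} and Proposition \ref{rmk30} once one notes (via Lemma \ref{lem-covIrred}(2)) that testing against $\text{CovPos}_1(\pi_B,\pi_A)$ is equivalent to testing against the whole cone $\text{CovPos}(\pi_B,\pi_A)$. The main obstacle — really the only non-formal point — is the compactness of $\text{CovPos}_1(\pi_B,\pi_A)$, which hinges on the uniform norm bound for unital positive maps; this is exactly why the statement is phrased with the unital normalization rather than with $\text{CovPosTP}$ or with the full cone, since the latter are not compact. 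Everything else is bookkeeping: linearity of the relevant maps and closedness of the positive-semidefinite cone transport the extreme-point conditions back to the full convex set, and channel-state duality (valid precisely because $\pi_A$ is irreducible) ties the quantum-state side to the channel side.
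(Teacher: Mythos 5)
Your proposal is correct and follows essentially the same route as the paper: the clear equivalences $(1)\Leftrightarrow(4)$, $(2)\Leftrightarrow(5)$, $(3)\Leftrightarrow(6)$ via the Choi matrix, the reduction from the full cone $\text{CovPos}(\pi_B,\pi_A)$ to its unital slice via Lemma \ref{lem-covIrred} and irreducibility of $\pi_A$, and the passage to extreme points via compactness of $\text{CovPos}_1(\pi_B,\pi_A)$ (the operator-norm bound for unital positive maps) together with Krein--Milman/Minkowski. The only blemish is the phrase asserting that $\Le^*$ sends $\text{Id}_B$ to a multiple of $\text{Id}_A$ (the domains do not match as written); what you need, and what Lemma \ref{lem-covIrred}(1) applied to the target representation directly gives, is $\Le(\text{Id}_B)\in\Comp\cdot\text{Id}_A$, so this is a citation slip rather than a gap.
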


\begin{proof}
The equivalences (1) $\Leftrightarrow$ (4), (2) $\Leftrightarrow$ (5), (3) $\Leftrightarrow$ (6) and one-side implications (1) $\Rightarrow$ (2) $\Rightarrow$ (3) are clear. Moreover, the direction (2) $\Rightarrow$ (1) follows from Theorem \ref{cor-EBT} and Lemma \ref{lem-covIrred} (1). For the proof of (3) $\Rightarrow$ (2), note that ${\text{CovPos}}_1(\pi_B,\pi_A)$ is a compact subset of
\begin{equation}
    \set{\Le\in B(B(H_B),B(H_A)): \|\Le\|_{\rm op}\leq 1},
\end{equation}
where $\|\cdot \|_{\rm op}$ denotes the norm of linear operators between $C^*$-algebras $B(H_B)$ and $B(H_A)$, since the positivity of $\Le$ implies $\|\Le\|_{\rm op}=\|\Le(\text{Id}_B)\|=1$ \cite[Corollary 2.9]{Pau02}. 
Therefore, $\text{CosPos}_1(\pi_B,\pi_A)$ can be written as a convex hull of its extreme points, which completes the proof.
\end{proof}

\begin{remark}
If $\pi_B$ is irreducible instead of irreducibility of $\pi_A$, then $\Phi$ is chosen to be a unital CP map up to constant, and $\text{CovPosTP}(\pi_B,\pi_A)$ replaces the role of $\text{CovPos}_1(\pi_B,\pi_A)$ in (2), (3), (5), (6). Note that compactness of $\text{CovPosTP}(\pi_B,\pi_A)$ comes from the identification with $\text{CovPos}_1(\pi_A,\pi_B)$ (up to constant) via taking the adjoint operation.
\end{remark}

Finally, we claim that the decomposability of the extremal elements in $\text{CovPos}_1(\pi_B,\pi_A)$ is much easier to check thanks to the following theorem.


\begin{theorem}\label{thm-extremal-decomposable}
Suppose that $\pi_A$ (resp. $\pi_B$) is irreducible, and let $\Le\in \text{Ext}(\text{CovPos}_1(\pi_B,\pi_A))$ (resp. $\Le\in \text{Ext}(\text{CovPosTP}(\pi_B,\pi_A))$. Then $\Le$ is decomposable if and only if $\Le$ is CP or CCP.
\end{theorem}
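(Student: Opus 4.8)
The plan is to exploit the fact that decomposability is equivalent to membership in the convex set $\text{DEC}=\text{CP}+\text{CCP}$, so I want to understand the facial structure of $\text{CovPos}_1(\pi_B,\pi_A)$ near a decomposable extreme point. First I would record that $\text{CP}_1(\pi_B,\pi_A):=\text{CovPos}_1(\pi_B,\pi_A)\cap\{\Le:\Le\text{ is CP}\}$ and $\text{CCP}_1(\pi_B,\pi_A):=\text{CovPos}_1(\pi_B,\pi_A)\cap\{\Le:\Le\text{ is CCP}\}$ are both compact convex subsets of $\text{CovPos}_1(\pi_B,\pi_A)$ — here I use $\pi_A$ (resp. $\pi_B$) irreducible so that, via Corollary \ref{cor-twirling} and Lemma \ref{lem-covIrred}, being CP (resp.\ CCP) and $(\pi_B,\pi_A)$-covariant and unital is the same as having a Choi matrix in $\text{InvQS}(\overline{\pi_B}\otimes\pi_A)$ of the appropriate (PPT-)type, hence these are closed conditions. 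The key structural claim is: if $\Le\in\text{CovPos}_1(\pi_B,\pi_A)$ is decomposable, then $\Le=\frac12(\Le_1+\Le_2)$ with $\Le_1$ CP and unital covariant, $\Le_2$ CCP and unital covariant (i.e.\ one can take the CP and CCP summands to again lie in $\text{CovPos}_1(\pi_B,\pi_A)$). Granting this claim, if in addition $\Le$ is an extreme point of $\text{CovPos}_1(\pi_B,\pi_A)$, then $\Le_1=\Le_2=\Le$, so $\Le$ is simultaneously CP and CCP; in particular $\Le$ is both CP and CCP, which is certainly stronger than "CP or CCP", and conversely any CP or CCP map is trivially decomposable. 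That gives the theorem.

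So the real content is the structural claim that the CP and CCP parts of a decomposable covariant unital map can themselves be chosen covariant and unital (up to the obvious scaling). For the covariance part: if $\Le=\Le_1+\Le_2$ with $\Le_1$ CP, $\Le_2$ CCP, apply the twirling operation $\mathcal{T}_{\pi_B,\pi_A}$, which fixes $\Le$ and, by Proposition \ref{prop-twirling-preserving}, sends CP maps to CP covariant maps and CCP maps to CCP covariant maps; so WLOG $\Le_1,\Le_2$ are already $(\pi_B,\pi_A)$-covariant. For the unitality/trace normalization: here is where irreducibility of $\pi_A$ (resp.\ $\pi_B$) does the work. By Lemma \ref{lem-covIrred}(1), since $\pi_A$ is irreducible each covariant map $\Le_i$ satisfies $\Le_i(\text{Id}_B)=c_i\,\text{Id}_A$ with $c_i\ge 0$ (positivity of CP and CCP maps gives $c_i\ge0$), and $c_1+c_2=1$ because $\Le(\text{Id}_B)=\text{Id}_A$. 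If both $c_1,c_2>0$, rescale $\Le_i':=c_i^{-1}\Le_i\in\text{CovPos}_1(\pi_B,\pi_A)$ and write $\Le=c_1\Le_1'+c_2\Le_2'$ as a convex combination of a CP and a CCP element of $\text{CovPos}_1(\pi_B,\pi_A)$; if one of the $c_i$ is $0$, then $\Le_i$ annihilates $\text{Id}_B$, hence (being positive) $\Le_i=0$ and $\Le$ is already CP or CCP, and we are done directly. For the $\pi_B$-irreducible case one runs the mirror argument with $\text{CovPosTP}$ in place of $\text{CovPos}_1$, using Lemma \ref{lem-covIrred}(2) and the adjoint symmetry from Corollary \ref{cor-twirling}(2) to transfer the trace-preservation normalization.

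I expect the main obstacle to be the normalization bookkeeping in the degenerate boundary case: one must be careful that when $c_1=0$ (say), positivity really forces $\Le_1=0$ — this uses that a positive map killing the identity is zero, which holds since $0\le\Le_1(X)\le\|X\|\,\Le_1(\text{Id}_B)=0$ for $0\le X\le\|X\|\text{Id}_B$ and then linearity — and similarly one should double-check that the rescaled $\Le_i'$ are genuinely unital and not merely subunital. A secondary point worth stating carefully is that the extremality hypothesis is used only at the very end, to force $\Le_1'=\Le_2'=\Le$ from the convex decomposition $\Le=c_1\Le_1'+c_2\Le_2'$; for this we need $0<c_i<1$, which is exactly the non-degenerate case, and the degenerate case was already handled. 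Everything else — closedness of the CP/CCP covariant unital sets, the twirling reductions — is routine given Proposition \ref{prop-twirling-preserving}, Corollary \ref{cor-twirling}, and Lemma \ref{lem-covIrred}.
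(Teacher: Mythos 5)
Your proposal is correct and follows essentially the same route as the paper's proof: twirl the CP and CCP summands to make them covariant, use Lemma \ref{lem-covIrred}(1) to normalize them into $\text{CovPos}_1(\pi_B,\pi_A)$ up to nonnegative scalars summing to one, and then invoke extremality. Your treatment is in fact slightly more careful than the paper's, since you explicitly handle the degenerate case $c_i=0$ (where the rescaling is undefined) by noting that a positive map annihilating the identity vanishes.
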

\begin{proof}
Let us focus only on the case where $\pi_A$ is irreducible since the other case is analogous. If $\Le$ is decomposable, then there exist a CP map $\Le_1$ and a CCP map $\Le_2$ such that $\Le=\Le_1+\Le_2$. By taking the twirling operation $\mathcal{T}_{\pi_B,\pi_A}$, we have $\Le=\Le_1'+\Le_2'$ where $\Le_i'=\mathcal{T}_{\pi_B,\pi_A}(\Le_i)\in \text{Cov}(\pi_B,\pi_A)$. Note that $\Le_1'$ is CP and $\Le_2'$ is CCP, and we can further write $\Le_i'=\lambda_i\Le_i''$ for some $\lambda_i\geq 0$, $\lambda_1+\lambda_2=1$, and $\Le_i''\in \text{CovPos}_1(\pi_B,\pi_A)$ by Lemma \ref{lem-covIrred} (1). Then extremality of $\Le$ allows us to conclude that $\Le=\Le_1''$ or $\Le=\Le_2''$, which proves the assertion. The other direction is immediate.
\end{proof}

To summarize, our strategy to study the problems PPT=SEP and PPT=EB consists of the following three independent steps, assuming $\pi_A$ is irreducible.

\begin{itemize}
  \item[{[\bf{Step 1}]}] The first step is to characterize all elements in $\text{CovPos}_1(\pi_B,\pi_A)$ for given specific unitary representations $\pi_A$ and $\pi_B$. {If we take the adjoint operation, this step is equivalent to characterize all elements in $\text{CovPosTP}(\pi_A,\pi_B)$.}
  
  \item[{[\bf{Step 2}]}] The next step is to solve the problem POS=DEC in $\text{CovPos}_1(\pi_B,\pi_A)$ {(or equivalently, the problem POS=DEC in $\text{CovPosTP}(\pi_A,\pi_B)$)}. In particular, for a given extremal element $\Le\in \text{Ext}(\text{CovPos}_1(\pi_B,\pi_A))$, $\Le$ is decomposable if and only if $\Le$ is CP or CCP. If POS=DEC holds, then the problem PPT=SEP in $\text{InvQS}(\overline{\pi_A}\otimes \pi_B)$ and the problem PPT=EB in $\text{CovQC}(\pi_A,\pi_B)$ has the affirmative answer.
  
  \item[{[\bf{Step 3}]}] If there exists a non-decomposable element $\mathcal{L}$ in $\text{CovPos}_1(\pi_B,\pi_A)$, then the last step is to {realize $\mathcal{L}$ as an entanglement detector to find the following PPT entangled objects}:
  \begin{itemize}
      \item $\Phi\in \text{CovPPTQC}(\pi_A,\pi_B)$ for which $\mathcal{L}\circ \Phi$ is non-CP,
      \item $\rho\in \text{InvPPTQS}(\overline{\pi_A}\otimes \pi_B)$ for which $(\text{id}\otimes \mathcal{L})(\rho)\ngeq 0$.
  \end{itemize}
\end{itemize}

\section{PPT=EB holds for $(H,H)$-covariant quantum channels}\label{sec-examples1}

One of the main applications of the results in Section \ref{sec-framework} is a complete characterization of EB property for quantum channels $\Phi:M_d(\Comp)\to M_d(\Comp)$ of the form
\begin{equation}\label{eq-HHcov}
    \Phi(X)=a \frac{\text{Tr}(X)}{d}\text{Id}_d + b X+ c X^T+(1-a-b-c)\,\text{diag}(X).
\end{equation}

The main result of this section is as follows.

\begin{theorem} \label{thm-HHCov}
Let $\Phi$ be a quantum channel of the form \eqref{eq-HHcov}. Then $\Phi$ is entanglement-breaking if and only if $\Phi$ is PPT.
\end{theorem}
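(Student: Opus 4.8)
The plan is to apply the three-step framework from Section \ref{sec-framework} with $G=\mathcal{H}_d$ the hyperoctahedral (signed symmetric) group acting on $\Comp^d$ via its standard representation $\pi$, which is irreducible. As the excerpt already indicates (Lemma \ref{lem20}, Theorem \ref{prop-HHpos}), the channels of the form \eqref{eq-HHcov} are exactly the $(\pi,\pi)$-covariant quantum channels, and $\text{CovPos}_1(\pi,\pi)$ — maps of the form \eqref{eq-HHcov} with the trace coefficient normalized so that $\text{Id}\mapsto\text{Id}$ — is a compact convex polytope. By Corollary \ref{cor-equiv}, PPT=EB in $\text{CovQC}(\pi,\pi)$ is equivalent to POS=DEC in $\text{CovPos}_1(\pi,\pi)$, so it suffices to prove that \emph{every} positive unital map of the form \eqref{eq-HHcov} is decomposable.

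First I would carry out [Step 1]: using the characterization of $\text{CovPos}_1(\pi,\pi)$ (Theorem \ref{prop-HHpos}), identify this polytope explicitly as a subset of the parameter space, and compute its finitely many extreme points — the excerpt announces there are eight such extremal positive unital covariant maps. Then [Step 2]: by Theorem \ref{thm-extremal-decomposable}, since $\pi$ is irreducible, each of these eight extremal maps $\Le$ is decomposable \emph{if and only if} $\Le$ is CP or CCP; so I would check, for each of the eight extreme points, that it is either completely positive or completely copositive. This reduces to eigenvalue positivity of the (normalized) Choi matrix $C_{\Le}$, or of $(T\otimes\text{id})(C_{\Le})$, which for the invariant form \eqref{eq-HHcov} decomposes along the multiplicity-free irreducible blocks of $\pi\otimes\pi$ (the Choi matrices are supported on the symmetric/antisymmetric/diagonal-type invariant subspaces), so these are elementary $1\times1$ or small-matrix positivity conditions. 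Once all eight extreme points are shown CP or CCP, convexity of the sets of CP maps and of CCP maps does \emph{not} immediately give decomposability of an arbitrary convex combination — but it does: writing an arbitrary $\Le\in\text{CovPos}_1(\pi,\pi)$ as $\Le=\sum_k t_k \Le_k$ with $\Le_k$ extremal, each $\Le_k$ is CP or CCP, so $\Le$ is a sum of a CP map (the combination of the CP extreme points) and a CCP map (the combination of the CCP ones), hence decomposable. Therefore POS=DEC holds, and by Corollary \ref{cor-equiv} we conclude PPT=EB.

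The main obstacle I expect is [Step 1]–[Step 2] bookkeeping rather than conceptual difficulty: correctly determining the inequalities cutting out the positivity polytope $\text{CovPos}_1(\pi,\pi)$ from Theorem \ref{prop-HHpos}, enumerating its vertices without missing any, and then verifying the CP-or-CCP dichotomy at each vertex. The subtle point is that positivity of a map of the form \eqref{eq-HHcov} is genuinely weaker than complete positivity, so the polytope of positive unital maps is strictly larger than the CP ones; the content of the theorem is precisely that this larger polytope still has all its vertices inside $\text{CP}\cup\text{CCP}$. There is no need to exhibit entanglement witnesses or PPT-entangled invariant states here (that would be [Step 3]), since the outcome of [Step 2] is affirmative; the only real work is the finite vertex check, which I would organize by symmetry among the parameters $(a,b,c)$ — transposition $b\leftrightarrow c$ exchanges CP with CCP — to halve the number of cases.
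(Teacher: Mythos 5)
Your proposal is correct and follows essentially the same route as the paper: the published proof of Theorem \ref{thm-HHCov} is exactly the reduction of PPT$=$EB to POS$=$DEC via Corollary \ref{cor-equiv} (using irreducibility of the hyperoctahedral representation from Lemma \ref{lem20}), combined with Theorem \ref{prop-HHpos}, which identifies the eight extreme points of $\text{CovPos}_1(H,H)$ as the four CP maps $\Psi_1,\ldots,\Psi_4$ and their CCP partners $\Psi_i\circ T_d$. Your closing convex-combination argument (splitting an arbitrary positive unital covariant map into the CP part and the CCP part of its extremal decomposition) is the same decomposability argument implicit in the paper's Theorem \ref{prop-HHpos}, so there is no gap.
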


\begin{remark} \label{rmk-VW01}

{Note that the quantum channels of the form \eqref{eq-HHcov} under the condition $a+b+c=1$ are called the {\it generalized Werner-Holevo channels}, and their Choi matrices are given by
\begin{equation}\label{eq40}
C_{\Phi}=\frac{1-b-c}{d^2}\text{Id}_d\otimes \text{Id}_d+b|\Om_d\ra\la \Om_d|+\frac{c}{d}F_d,
\end{equation}
where $F_d=\sum_{i,j=1}^d e_{ij}\otimes e_{ji}$ is the flip matrix.
The subclasses corresponding to the cases $b=0$ or $c=0$ are called {\it the Werner states} and {the isotropic states} respectively, and their separability was studied in \cite{We89,HoHo99,wat2018, SiNe21}. Furthermore, it was proved in \cite{VW01} that PPT=SEP holds for all quantum states of the form \eqref{eq40}.
}
\end{remark}


A starting point for a proof of Theorem \ref{thm-HHCov} is to observe that any quantum channel of the form \eqref{eq-HHcov} is covariant with respect to the \textit{signed symmetric group} $\mathcal{H}_d$. One of the equivalent ways to realize the signed symmetric group is to define $\mathcal{H}_d$ as a subgroup of the orthogonal group $\mathcal{O}_d$ generated by permutation matrices and diagonal orthogonal matrices. In other words, every element in $\mathcal{H}_d$ is written as an orthogonal matrix $\displaystyle \sum_{i=1}^d s_i|\sigma(i)\ra\la i|$ for $s_1,s_2,\ldots, s_n\in \left\{\pm 1\right\}$ and $\sigma\in \mathcal{S}_d$. We define $\text{Inv}(H\otimes H)$ and $\text{Cov}(H,H)$ with respect to the fundamental representation $H\in \mathcal{H}_d\mapsto H\in \mathcal{O}_d$, which is irreducible as proved below.

\begin{lemma}\label{lem20}
The fundamental representation $H\in \mathcal{H}_d\mapsto H\in \mathcal{O}_d$ is irreducible.
\end{lemma}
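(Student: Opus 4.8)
The plan is to show that the only subspaces of $\mathbb{C}^d$ invariant under all matrices of the form $\sum_{i=1}^d s_i|\sigma(i)\rangle\langle i|$ with $s_i\in\{\pm1\}$ and $\sigma\in\mathcal{S}_d$ are $\{0\}$ and $\mathbb{C}^d$; equivalently, by Schur's lemma, that $\text{Inv}(H\otimes H)=\mathbb{C}\cdot\text{Id}_d$ for the fundamental representation. I would work with the commutant directly: let $X=(X_{ij})\in M_d(\mathbb{C})$ commute with $\pi(g)$ for every $g\in\mathcal{H}_d$. The diagonal sign matrices $D_s=\sum_i s_i e_{ii}$ already form an abelian subgroup, and $D_s X D_s^{-1}=(s_i s_j X_{ij})_{i,j}$; requiring this to equal $X$ for all sign choices $s$ forces $X_{ij}=0$ whenever $i\neq j$, so $X$ is diagonal, say $X=\sum_i \lambda_i e_{ii}$.

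Next I would bring in the permutation matrices $P_\sigma=\sum_i e_{\sigma(i)i}$, which also lie in $\mathcal{H}_d$ (taking all $s_i=1$). Conjugating the diagonal matrix $X$ by $P_\sigma$ permutes its diagonal entries: $P_\sigma X P_\sigma^{-1}=\sum_i \lambda_i e_{\sigma(i)\sigma(i)}=\sum_j \lambda_{\sigma^{-1}(j)}e_{jj}$. Since this must equal $X$ for every $\sigma\in\mathcal{S}_d$, and $\mathcal{S}_d$ acts transitively on $\{1,\dots,d\}$, all the $\lambda_i$ coincide; hence $X=\lambda\,\text{Id}_d$. This shows $\text{Inv}(H\otimes H)=\mathbb{C}\cdot\text{Id}_d$, which is precisely the definition of irreducibility given in Section \ref{sec-rep.theory}.

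Alternatively, and perhaps more in the spirit of the paper, one can argue at the level of subspaces: if $V\subseteq\mathbb{C}^d$ is a nonzero invariant subspace, pick $0\neq v=\sum_i v_i|i\rangle\in V$ with, say, $v_k\neq 0$. Averaging $v$ over the sign flips that negate only the $k$-th coordinate (i.e. $\frac{1}{2}(v+D_{s}v)$ with $s$ flipping coordinate $k$) produces the vector $v_k|k\rangle\in V$, so $|k\rangle\in V$; then applying transpositions $P_{(k\,j)}$ gives $|j\rangle\in V$ for all $j$, whence $V=\mathbb{C}^d$. Both arguments are short and elementary.

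I do not anticipate a genuine obstacle here — the lemma is essentially a warm-up computation. The only mild care needed is to be explicit that $\mathcal{H}_d$ genuinely contains both the full diagonal sign group and the full permutation group (by taking $\sigma=\text{id}$ or all $s_i=1$ respectively), so that both steps of the commutant reduction are legitimate; this is immediate from the generating-set description of $\mathcal{H}_d$ stated just before the lemma.
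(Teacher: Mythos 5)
Your proof is correct and follows essentially the same route as the paper: both reduce the commutant condition $HXH^T=X$ to the relation forced by sign matrices (killing off-diagonal entries) and permutation matrices (equalizing diagonal entries), concluding $\text{Inv}(H\otimes H)=\Comp\cdot\text{Id}_d$. The only cosmetic difference is that you separate the two generator families into two steps, whereas the paper handles a general element $\sum_i s_i|\sigma(i)\ra\la i|$ in one combined computation.
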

\begin{proof}
The identity
\begin{equation}
    HXH^T=\sum_{i,j=1}^d s_is_j X_{ij}|\sigma(i)\ra \la \sigma(j)|=\sum_{i,j=1}^d s_{\sigma^{-1}(i)}s_{\sigma^{-1}(j)}X_{\sigma^{-1}(i)\sigma^{-1}(j)}|i\ra\la j|
\end{equation}
and the invariance property $HXH^T=X$ for all $H\in \Hi_d$ tell us that 
\begin{equation}
s_{\sigma(i)}s_{\sigma(j)}X_{\sigma(i)\sigma(j)}=X_{ij}
\end{equation}
for all $s_1,\ldots,s_d\in \set{\pm 1}$ and $\sigma\in \mathcal{S}_d$. This implies that $X_{ii}\equiv X_{11}$ for all $1\leq i\leq d$ and $X_{ij}=0$ for all $i\neq j$, i.e., $X=X_{11}\,\text{Id}_d\in \Comp \cdot \text{Id}_d$.
\end{proof}

Let us denote by $\mathcal{W}$ the space of linear maps spanned by the following four unital TP maps $\psi_0,\psi_1,\psi_2,\psi_3:M_d(\Comp)\to M_d(\Comp)$, where
    \begin{equation} \label{eq-HHCovbasis}
    \left \{\begin{array}{llll}
    \psi_0(X)= \frac{\text{Tr}(X)}{d} \text{Id}_d ,\\
    \psi_1(X)=X,\\
    \psi_2(X)=X^T,\\
    \psi_3(X)= \text{diag}(X)=\sum_{i=1}^d X_{ii}|i\ra\la i|.
    \end{array} \right.
\end{equation}
It is straightforward to check $\psi_i\in \text{Cov}(H,H)$ for $i=0,\ldots, 3$, so we have $\mathcal{W}\subseteq \text{Cov}(H,H)$. To prove $\text{Cov}(H,H) =\mathcal{W}$, let us note the fact that any $\Le\in\text{Cov}(H,H)$ satisfies the so-called \textit{diagonal orthogonal covariance (DOC)} property, i.e.
\begin{equation}
\Le(ZXZ^T)=Z\Le(X)Z^T    
\end{equation}
for all $X\in M_d(\Comp)$ and diagonal orthogonal matrices $Z$. This class of channels has been analyzed recently in \cite{SiNe21, SiNe22b, SDN22}. In particular, it is shown that any DOC map $\mathcal{L}$ can be parameterized by a triple $(A,B,C)\in M_d(\Comp)^3$ satisfying $\text{diag}(A)=\text{diag}(B)=\text{diag}(C)$ such that
\begin{equation} \label{eq-DOC}
    \Le(X)=\text{diag}(A|\text{diag}\, X\ra)+\widetilde{B}\odot X +\widetilde{C}\odot X^T,
\end{equation}
where $|\text{diag}\,Y\ra=\sum_{i=1}^d Y_{ii}|i\ra$, $\widetilde{Y}=Y-\text{diag}(Y)$, and $\odot$ denotes the Schur product (or Hadamard product) between matrices. In this case, let us denote by $\Le=\Le_{A,B,C}$.


 
\begin{proposition}\label{prop41}
The space $\text{Cov}(H,H)$ is spanned by the four unital TP positive maps $\psi_0,\psi_1,\psi_2$, and $\psi_3$ from \eqref{eq-HHCovbasis}.
\end{proposition}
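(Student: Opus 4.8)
The plan is to show that every $\mathcal{L}\in\text{Cov}(H,H)$ is automatically in $\mathcal{W}=\text{span}\{\psi_0,\psi_1,\psi_2,\psi_3\}$, which combined with the already-noted inclusion $\mathcal{W}\subseteq\text{Cov}(H,H)$ gives equality. Since $\text{Cov}(H,H)$ is spanned by unital TP positive maps (each $\psi_i$ has these properties) once we know it is $4$-dimensional and spanned by the $\psi_i$, the ``unital TP positive'' clause in the statement is automatic. So the real content is the dimension count / explicit parametrization.

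First I would invoke the DOC parametrization \eqref{eq-DOC}: because $\mathcal{H}_d$ contains all diagonal orthogonal matrices, any $\mathcal{L}\in\text{Cov}(H,H)$ is a DOC map, hence $\mathcal{L}=\mathcal{L}_{A,B,C}$ for a triple $(A,B,C)\in M_d(\mathbb{C})^3$ with $\text{diag}(A)=\text{diag}(B)=\text{diag}(C)$, acting by $\mathcal{L}(X)=\text{diag}(A|\text{diag}\,X\rangle)+\widetilde{B}\odot X+\widetilde{C}\odot X^T$. Then I would impose the remaining covariance, namely invariance under conjugation by permutation matrices $P_\sigma=\sum_i|\sigma(i)\rangle\langle i|$ for all $\sigma\in\mathcal{S}_d$. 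Conjugating $X$ by $P_\sigma$ permutes indices, and matching coefficients in $\mathcal{L}(P_\sigma X P_\sigma^T)=P_\sigma\mathcal{L}(X)P_\sigma^T$ forces the entries of $A$, $B$, $C$ to be $\mathcal{S}_d$-invariant under simultaneous permutation of rows and columns. Concretely, $A_{\sigma(i)\sigma(j)}=A_{ij}$ and likewise for $B,C$; this means each of $A,B,C$ has one value on the diagonal and one value off the diagonal (for $d\geq 3$; for $d=2$ the off-diagonal is a single entry so the conclusion still holds, and $d=1$ is trivial). Writing $A=\alpha_1 I+\alpha_2(J-I)$, $B=\beta_1 I+\beta_2(J-I)$, $C=\gamma_1 I+\gamma_2(J-I)$ with $J$ the all-ones matrix, the diagonal-matching constraint gives $\alpha_1=\beta_1=\gamma_1=:t$, leaving free parameters $t,\alpha_2,\beta_2,\gamma_2$ — a $4$-dimensional space.

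Next I would translate this $4$-parameter family back into the basis $\psi_0,\psi_1,\psi_2,\psi_3$. Plugging $A=tI+\alpha_2(J-I)$ into the first term: $\text{diag}(A|\text{diag}\,X\rangle)$ has $i$-th diagonal entry $\sum_j A_{ij}X_{jj}=tX_{ii}+\alpha_2\sum_{j\ne i}X_{jj}=(t-\alpha_2)X_{ii}+\alpha_2\,\text{Tr}(X)$, so this term equals $(t-\alpha_2)\psi_3(X)+\alpha_2 d\,\psi_0(X)$. The term $\widetilde{B}\odot X=\beta_2(J-I)\odot X=\beta_2(X-\text{diag}(X))=\beta_2(\psi_1-\psi_3)(X)$, and similarly $\widetilde{C}\odot X^T=\gamma_2(\psi_2-\psi_3)(X)$. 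Collecting, $\mathcal{L}=\alpha_2 d\,\psi_0+\beta_2\,\psi_1+\gamma_2\,\psi_2+(t-\alpha_2-\beta_2-\gamma_2)\psi_3$, which lies in $\mathcal{W}$. Finally I would note that $\psi_0,\psi_1,\psi_2,\psi_3$ are linearly independent (e.g. test against $e_{11}$, $e_{12}$, $e_{21}$, and $\text{Id}_d$, or just observe the four listed parameters are genuinely free), so $\dim\text{Cov}(H,H)=4$ and $\text{Cov}(H,H)=\mathcal{W}$.

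The main obstacle is not conceptual but bookkeeping: correctly working out how the permutation-conjugation constraint acts on the DOC triple $(A,B,C)$ and being careful about the small-$d$ edge cases (for $d=2$ the ``off-diagonal orbit'' of $\mathcal{S}_2$ acting on index pairs is a single pair, and for the Schur-product terms $\widetilde{B},\widetilde{C}$ one must check the parametrization does not degenerate). One should also double-check that the compatibility condition $\text{diag}(A)=\text{diag}(B)=\text{diag}(C)$ from the DOC normal form is consistent with — and in fact implied together with — the permutation symmetry, so that no spurious extra constraints appear. An alternative, perhaps cleaner route avoiding DOC entirely: directly apply the twirling/averaging projection $\mathcal{T}_{H,H}$ and compute $\dim\text{Cov}(H,H)$ via character theory as $\sum_k m_k^2$ where $H\otimes H\cong\bigoplus\sigma_k^{\oplus m_k}$ under $\mathcal{H}_d$; showing this sum equals $4$ (the tensor square of the standard representation of the hyperoctahedral group decomposes into exactly four irreducibles, counted with multiplicity appropriately) would pin down the dimension, and then exhibiting the four explicit independent maps $\psi_i$ finishes it. I would present the DOC-based argument as the main proof since \eqref{eq-DOC} is already available in the text.
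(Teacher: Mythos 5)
Your proposal is correct and follows essentially the same route as the paper's proof: invoke the DOC normal form $\mathcal{L}=\mathcal{L}_{A,B,C}$ from \eqref{eq-DOC}, impose covariance under the permutation matrices $P_\sigma$ to force each of $A,B,C$ to be constant on the diagonal and constant off the diagonal, and then rewrite the resulting four-parameter family in the basis $\psi_0,\psi_1,\psi_2,\psi_3$ (your final expression $\alpha_2 d\,\psi_0+\beta_2\psi_1+\gamma_2\psi_2+(t-\alpha_2-\beta_2-\gamma_2)\psi_3$ matches the paper's $dA_{12}\psi_0+B_{12}\psi_1+C_{12}\psi_2+(A_{11}-A_{12}-B_{12}-C_{12})\psi_3$ exactly). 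The only cosmetic difference is that the paper derives the permutation constraints by testing on matrix units $e_{ij}$, whereas you state the resulting $\mathcal{S}_d$-invariance of the entries directly.
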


\begin{proof}
We already know $\mathcal{W}\subseteq \text{Cov}(H,H)$. To show the reverse inclusion, let us pick an arbitrary $\Le\in \text{Cov}(H,H)$. Since $\Le$ is DOC, there exists $(A,B,C)\in M_d(\Comp)^{3}$ such that 
$\mathcal{L}=\mathcal{L}_{A,B,C}$ of the form \eqref{eq-DOC}. Note that $\mathcal{L}$ further satisfies
\begin{equation} \label{eq-tauCov}
    \Le(P_{\sigma}X P_{\sigma}^T)=P_{\sigma}\Le(X)P_{\sigma}^T
\end{equation}
for all $X\in M_d(\Comp)$ and $\sigma\in \mathcal{S}_d$. Here, $P_{\sigma}=\sum_{i=1}^d|\sigma(i)\ra\la i|$ is the permutation matrix associated with $\sigma$. 

Let us take $X=e_{ij}$. If $i=j$, then  \eqref{eq-tauCov} implies 
\begin{equation}
\sum_{k=1}^d A_{k\sigma(i)}|k\ra \la k|=\sum_{k=1}^d A_{ki}|\sigma(k)\ra \la \sigma(k)|,    
\end{equation} 
which means that $A_{ik}= A_{\sigma(i)\sigma(k)}$ for all $1\leq i,k\leq d$ and $\sigma\in \mathcal{S}_d$. Therefore, $A_{ii}\equiv A_{11}$ for all $i$ and $A_{ik}\equiv A_{12}$ for all $i\neq k$.
On the other hand, if $i\neq j$, then \eqref{eq-tauCov} becomes 
\begin{align}
    &B_{\sigma(i)\sigma(j)}|\sigma(i)\ra \la \sigma(j)|+C_{\sigma(j)\sigma(i)}|\sigma(j)\ra \la \sigma(i)|\\
    &=B_{ij}|\sigma(i)\ra \la \sigma(j)|+C_{ji}|\sigma(j)\ra \la \sigma(i)|,
\end{align}
which gives $B_{ij}\equiv B_{12}$ and $C_{ij}\equiv C_{12}$ for all $i\neq j$. Consequently, the formula \eqref{eq-DOC} now gives
\begin{equation}
\Le=dA_{12}\psi_0+B_{12}\psi_1+C_{12}\psi_2+(A_{11}-A_{12}-B_{12}-C_{12})\psi_3\in \mathcal{W},
\end{equation}
which in turn shows $\text{Cov}(H,H)\subseteq \mathcal{W}$.
\end{proof}


From now, let us denote $(H,H)$-covariant unital (and TP) maps by
\begin{equation}
    \psi_{a,b,c}=a\psi_0+b\psi_1+c\psi_2+(1-a-b-c)\psi_3
\end{equation}
for simplicity, where $\psi_0,\ldots, \psi_3$ are from \eqref{eq-HHCovbasis}. 
By recalling that $\psi_{a,b,c}$ can be understood as a DOC map $\mathcal{L}_{A,B,C}$ {and that complete positivity of DOC maps is fully characterized in \cite[Section 6]{SiNe21}, we can show that $\psi_{a,b,c}$ is CPTP if and only if \begin{equation}\label{eq-HHCP}
\left \{\begin{array}{ccc} 
\quad\; 0 \leq a\leq \frac{d}{d-1},\\
\;\;\;\frac{a}{d}-\frac{1}{d-1}\leq b\leq 1-\frac{d-1}{d}a, \\
-\frac{a}{d}\leq c\leq \frac{a}{d}.
\end{array} \right. 
\end{equation}
}
Note that the set of $(a,b,c)\in \Real^3$ satisfying \eqref{eq-HHCP} is a tetrahedron depicted in Figure \ref{fig:PPTQC(H,H)}. 
\begin{figure}[htb!]
    \centering
    \includegraphics[scale=0.22]{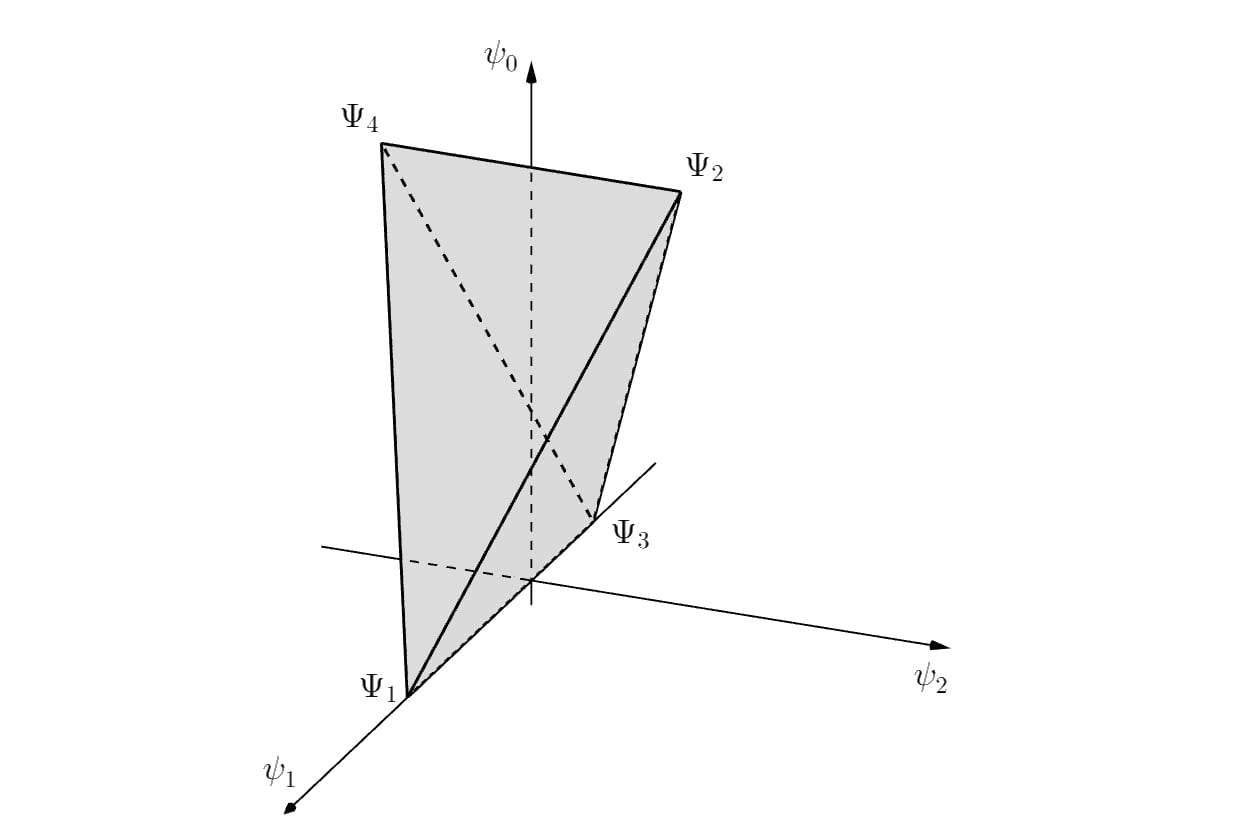}
    \caption{The region of $\text{CovQC}(H,H)$}
    \label{fig:PPTQC(H,H)}
\end{figure}

In particular, there are exactly four extremal $(H,H)$-covariant quantum channels corresponding to the four vertices given by
\begin{equation} \label{eq-HHCPext}
    \begin{cases}
    \Psi_{1}=\psi_{1},\\
    \Psi_{2}=\frac{d}{d-1}\psi_{0}+\frac{1}{d-1}\psi_{2}-\frac{2}{d-1}\psi_{3},\\
    \Psi_{3}=-\frac{1}{d-1}\psi_{1}+\frac{d}{d-1}\psi_{3},\\
    \Psi_{4}=\frac{d}{d-1}\psi_{0} -\frac{1}{d-1}\psi_{1},
    \end{cases}
\end{equation}
whose Choi matrices are (up to normalization) four mutually orthogonal projections. On the other hand, it is easy to see that 
\begin{equation} \label{eq-HHCovTrans}
    T_d\circ \psi_{a,b,c}=\psi_{a,b,c}\circ T_d=\psi_{a,c,b},\;\; a,b,c\in \Comp.
\end{equation}
Therefore, the set of all PPT quantum channels $\psi_{a,b,c}$ is given by
\begin{align}
&\text{CovPPTQC}(H,H)=\text{CovQC}(H,H)\cap T_d\left ( \text{CovQC}(H,H) \right )\nonumber \\
&\;\;=\set{\psi_{a,b,c}: \begin{array}{cc} \quad\quad 0\leq a\leq \frac{d}{d-1},\\\max(\frac{a}{d}-\frac{1}{d-1}, -\frac{a}{d})\leq b,c \leq \min(1-\frac{d-1}{d}a, \frac{a}{d})\end{array}}.
\end{align}
The convex set $\text{CovPPTQC}(H,H)$ can be geometrically understood as the intersection of two tetrahedrons describing the region of CP and CCP $(H,H)$-covariant TP maps (depicted by blue- and red-dotted lines, respectively, in Figure 2). Moreover, if $d\geq 3$, this set has exactly eight vertices (denoted by $v_0,\ldots v_7$).


\begin{figure}[htb!]
    \centering
    \includegraphics[scale=0.22]{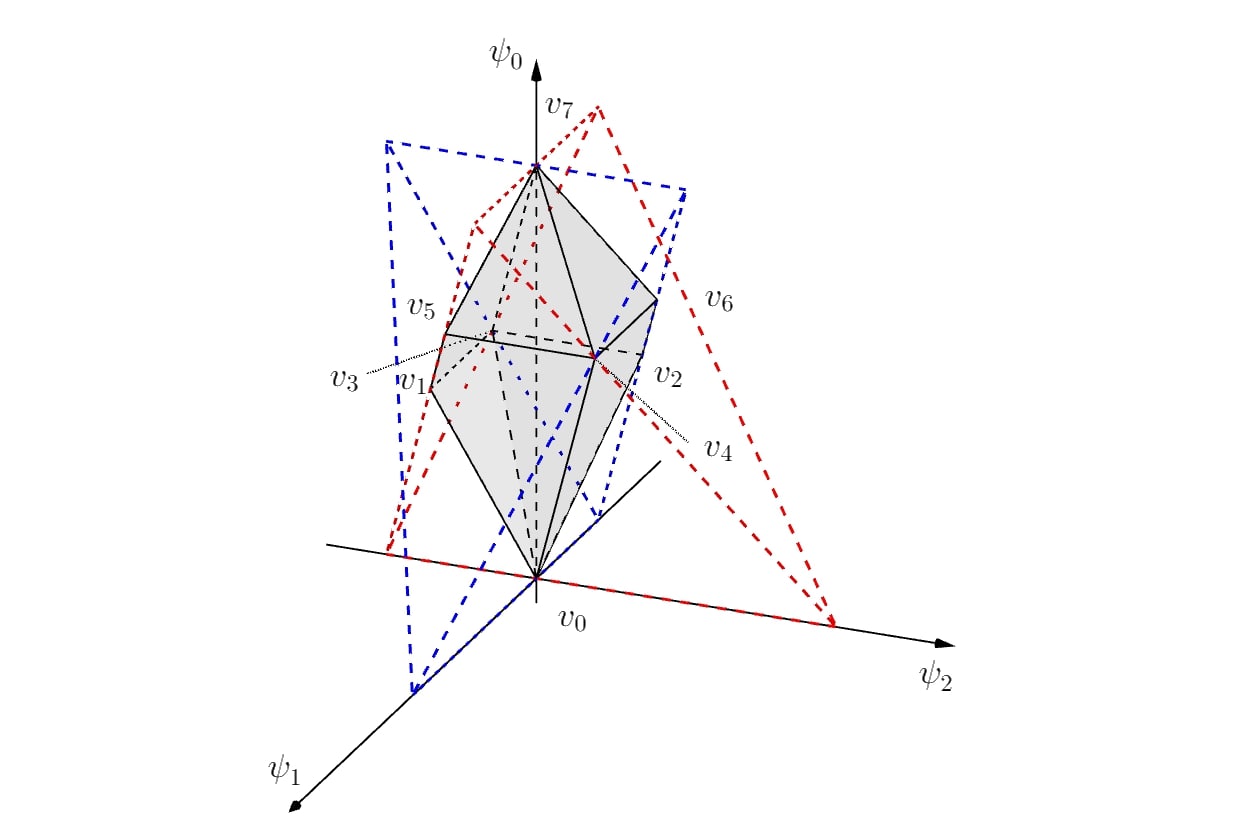}
    \caption{The region of $\text{CovPPTQC}(H,H)$}
    \label{fig:S(d+1)PPTQC}
\end{figure}

We will now explain why the above polytope is precisely identical to the set of all entanglement-breaking $(H,H)$-covariant channels to prove Theorem \ref{thm-HHCov}.

{[\bf{Step 1}+\bf{Step 2}]} {We first characterize the set $\text{CovPos}_1(H,H)$ in terms of the parameters $a,b$, and $c$. Our strategy is to start with the convex hull $\mathcal{V}_d$ of $\text{CovQC}(H,H)\cup T_d\left ( \text{CovQC}(H,H) \right )$, which is an octahedron with eight vertices as exhibited in Figure \ref{fig:S(d+1)Pos}. Then $\mathcal{V}_d\subseteq \text{CovPos}_1(H,H)$ is immediate since any element of $\mathcal{V}_d$ is decomposable. The following Theorem \ref{prop-HHpos} states that these two convex sets coincide, i.e., $\mathcal{V}_d=\text{CovPos}_1(H,H)$.}

\begin{figure}[htb!]
    \centering
    \includegraphics[scale=0.20]{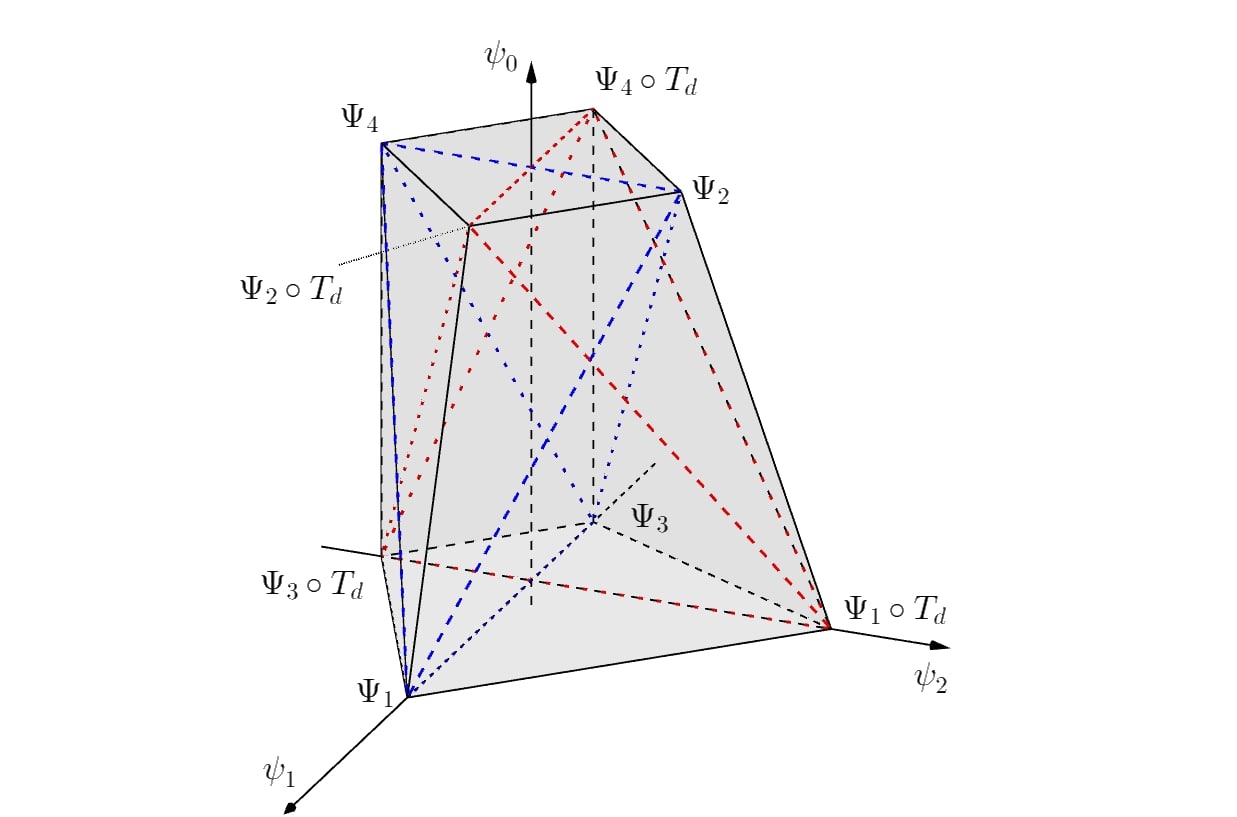}
    \caption{The region of $\text{CovPos}_1(H, H)$}
    \label{fig:S(d+1)Pos}
\end{figure}

\begin{theorem}\label{prop-HHpos}
Let $d\geq 3$. Then the convex set $\text{CovPos}_1(H,H)$ has exactly $8$ extreme points
\begin{equation}
\Psi_1,\; \Psi_2,\;\Psi_3,\;\Psi_4, \quad \Psi_1\circ T_d,\; \Psi_2\circ T_d,\; \Psi_3\circ T_d,\; \Psi_4\circ T_d,
\end{equation}
where $\Psi_1,\ldots, \Psi_4$ are given by \eqref{eq-HHCPext}. In particular, all positive $(H,H)$-covariant maps are decomposable.
\end{theorem}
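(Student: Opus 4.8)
The plan is to exploit the fact, just established in Proposition \ref{prop41} and the ensuing discussion, that $\text{Cov}(H,H)=\mathcal{W}$ is three-dimensional once we restrict to unital maps, so that $\text{CovPos}_1(H,H)$ sits inside the affine plane $\{\psi_{a,b,c}\}$ and can be identified with a convex region in $\Real^3$ with coordinates $(a,b,c)$. The first step is to pin down the defining inequalities for positivity. Since $\psi_{a,b,c}=\mathcal{L}_{A,B,C}$ is a DOC map under the correspondence \eqref{eq-HHDOC}, I would invoke the known characterization of positivity of DOC maps from \cite{SiNe21, SiNe22b, SDN22}: positivity of $\mathcal{L}_{A,B,C}$ is governed by entrywise conditions on $A$, together with the condition that the $2\times 2$ matrices $\begin{pmatrix} A_{ij} & C_{ij} \\ \overline{C_{ij}} & A_{ji}\end{pmatrix}$ (and the analogous ones built from $B$) be positive semidefinite off the diagonal, i.e. $A_{ij}\ge 0$ and $|C_{ij}|^2\le A_{ij}A_{ji}$, plus $B\ge 0$ being replaced by the weaker ``blockwise'' positivity appropriate to \emph{positive} rather than completely positive maps. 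Translating these into the parameters $a,b,c$ via \eqref{eq-HHDOC} should yield a finite system of linear inequalities in $(a,b,c)$ (because $A$ has only two distinct entry-values $A_{11}$ and $A_{12}$, and similarly for $B$, $C$), hence a polytope.

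The second step is to identify that polytope. I expect the positivity region to be strictly larger than $\text{CovQC}(H,H)$ from \eqref{eq-HHCP} — for instance the constraint $b\ge \frac{a}{d}-\frac{1}{d-1}$ coming from complete positivity of the ``$B$-part'' will relax, since positivity only requires a weaker inequality — and I would compute its vertices directly. The claim to verify is that these eight vertices are exactly $\Psi_1,\Psi_2,\Psi_3,\Psi_4$ and their composites with $T_d$. Here I would use \eqref{eq-HHCovTrans}: since $T_d$ is a $*$-isomorphism, $\Le\mapsto \Le\circ T_d$ is an affine automorphism of $\text{CovPos}_1(H,H)$ sending $\psi_{a,b,c}\mapsto \psi_{a,c,b}$, i.e. the reflection $b\leftrightarrow c$; so the polytope is symmetric under this reflection, and it suffices to find the vertices in the half-space $b\ge c$, which should be exactly $\Psi_1,\dots,\Psi_4$ (each $\Psi_i$ being either CP or, after composing with $T_d$, CCP, hence certainly positive). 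Checking that each candidate vertex is a genuine extreme point amounts to exhibiting three of the bounding hyperplanes meeting there; checking there are no others amounts to confirming the inequality system has no further solutions, a routine linear-algebra computation in three variables.

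For the final sentence — that every positive $(H,H)$-covariant map is decomposable — I would combine the vertex description with Theorem \ref{thm-extremal-decomposable}. Since $H\in\mathcal{H}_d\mapsto H$ is irreducible (Lemma \ref{lem20}), Theorem \ref{thm-extremal-decomposable} tells us an extreme point of $\text{CovPos}_1(H,H)$ is decomposable if and only if it is CP or CCP. Each of $\Psi_1,\Psi_2,\Psi_3,\Psi_4$ is CPTP by construction (it is one of the vertices of the tetrahedron \eqref{eq-HHCP}), hence CP; and each $\Psi_i\circ T_d$ satisfies $T_d\circ(\Psi_i\circ T_d)=\Psi_i$ is CP, so $\Psi_i\circ T_d$ is CCP. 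Thus all eight extreme points are CP or CCP, hence decomposable, and since the decomposable maps form a convex cone, every element of $\text{CovPos}_1(H,H)=\operatorname{conv}\{\Psi_i,\Psi_i\circ T_d\}$ is decomposable; scaling gives the statement for all of $\text{CovPos}(H,H)$.

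The main obstacle I anticipate is Step 1–2, namely getting the positivity inequalities for DOC maps exactly right and then doing the polytope bookkeeping: one must be careful that ``positive'' (not ``completely positive'') relaxes precisely the $B\ge 0$ condition — and only that condition — to the correct weaker form, and that the resulting region really does have exactly eight vertices with no degeneracies for all $d\ge 3$ (the case $d=2$ being excluded, presumably because some vertices collide). Everything after the vertex list is then immediate from Theorem \ref{thm-extremal-decomposable}.
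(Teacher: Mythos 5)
Your overall architecture coincides with the paper's: identify $\text{CovPos}_1(H,H)$ with a polytope in the coordinates $(a,b,c)$, exploit the reflection $b\leftrightarrow c$ induced by $\Le\mapsto\Le\circ T_d$, and deduce decomposability from the fact that the eight vertices are CP or CCP (your detour through Theorem \ref{thm-extremal-decomposable} is unnecessary here — once every extreme point is CP or CCP, decomposability of the whole convex set follows by definition — but it is harmless). The genuine gap is in your Step 1. The references \cite{SiNe21, SiNe22b, SDN22} characterize \emph{complete} positivity (and CCP, PPT, EB) of DOC maps in terms of the triple $(A,B,C)$; they do not provide a finite list of inequalities characterizing mere \emph{positivity}, and no such characterization is available to invoke. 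Moreover, your guess about its shape — that only the condition $B\geq 0$ relaxes — is incorrect: the transpose map $\psi_2=\psi_{0,0,1}$ is positive but violates the CP constraint $-\frac{a}{d}\leq c\leq \frac{a}{d}$ coming from the $C$-part of \eqref{eq-HHCP}, and comparison of \eqref{eq-HHCP} with \eqref{eq-HHpos1} shows that all three of the $A$-, $B$- and $C$-constraints change. So the inequality system you would feed into the polytope bookkeeping cannot be obtained the way you propose.

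The paper closes this gap by a two-sided inclusion that avoids any general positivity criterion. The easy inclusion is that the convex hull $\mathcal{V}_d$ of the eight candidate maps lies in $\text{CovPos}_1(H,H)$, since $\Psi_1,\dots,\Psi_4$ are CP and $\Psi_1\circ T_d,\dots,\Psi_4\circ T_d$ are CCP. For the reverse inclusion one writes $\mathcal{V}_d$ as the set of $(a,b,c)$ satisfying six explicit linear inequalities \eqref{eq-HHpos1}, and for each violated inequality exhibits a concrete unit vector $\xi$ (namely $|1\ra$, $\frac{1}{\sqrt2}(|1\ra+|2\ra)$, $\frac{1}{\sqrt2}(|1\ra+i|2\ra)$, $\frac{1}{\sqrt d}\sum_k|k\ra$, $\frac{1}{\sqrt d}\sum_k e^{2\pi i k/d}|k\ra$) with $\psi_{a,b,c}(|\xi\ra\la\xi|)\ngeq 0$. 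To repair your proposal you would need to replace the appeal to a DOC-positivity characterization by such direct necessity checks on rank-one projections (or otherwise prove that positivity of $\psi_{a,b,c}$ is detected by finitely many test vectors); everything downstream of that — the vertex identification, the $b\leftrightarrow c$ symmetry, and the decomposability conclusion — is then correct as you describe.
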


\begin{proof}
Since $\Psi_1,\ldots, \Psi_4$ are CP and $\Psi_1\circ T_d, \ldots, \Psi_4\circ T_d$ are CCP, the convex hull $\mathcal{V}_d$ of these $8$ maps is obviously contained in $\text{CovPos}_1(H,H)$. To show the reverse inclusion $\text{CovPos}_1(H,H)\subseteq \mathcal{V}_d$, we observe that the set
\begin{equation}
    V_d:=\set{(a,b,c)\in \Real^3\,:\, \psi_{a,b,c}\in \mathcal{V}_d}\subset \Real^3
\end{equation}
is the convex hull of $8$ points 
\begin{equation} \label{eq-HHpos2}
\left \{\begin{array}{lll}
\left(0,1,0\right), \left(\frac{d}{d-1},0, \frac{1}{d-1} \right), \left(0,-\frac{1}{d-1}, 0\right), \left(\frac{d}{d-1}, 0, -\frac{1}{d-1}\right),\\
\left(0,0,1\right), \left(\frac{d}{d-1}, \frac{1}{d-1}, 0\right),  \left(0,0, -\frac{1}{d-1}\right), \left(\frac{d}{d-1}, -\frac{1}{d-1}, 0\right),
\end{array} \right.    
\end{equation}
which are got from \eqref{eq-HHCPext} and \eqref{eq-HHCovTrans}. Therefore, $V_d$ can be understood as the region of $(a,b,c)\in \Real^3$ satisfying the following inequalities:
\begin{equation}\label{eq-HHpos1}
\left \{\begin{array}{lllllll}
(1)& 0 \le a\le \frac{d}{d-1},\\
(2)&\frac{d-2}{d}a+b+c \le 1, \\
(3)& \frac{d-2}{d}a+\left|b-c\right| \le 1, \\
(4)& b+c \ge - \frac{1}{d-1}, \\
(5)& b-\left(d-1\right)c \le 1, \\
(6)& c-\left(d-1\right)b\le1.
\end{array}\right.
\end{equation}
Now if $\psi_{a,b,c}\notin \mathcal{V}_d$ (which is equivalent to $(a,b,c)\notin V_d$, and hence violates at least one of the inequalities (1) - (6) in \eqref{eq-HHpos1}), we can choose a unit vector $\xi\in \Comp^d$ such that $\psi_{a,b,c}(|\xi\ra\la \xi|)$ is not positive semidefinite as in Table \ref{tab:table00}. This shows $\text{CovPos}_1(H,H)\subseteq \mathcal{V}_d$.

\small

\begin{table}[h!]
  \begin{center}
    \caption{Non-positivity outside $V_d$}
    \label{tab:table00}
    \begin{tabular}{|cl|} 
\hline
$(a,b,c)$ violates (1)&  $|\xi\ra=|1\ra$ $~\Longrightarrow~$ $\psi_{a,b,c}(|\xi\ra\la \xi|)\ngeq 0$\\
\hline
$(a,b,c)$ violates (2)&  $|\xi\ra=\displaystyle \frac{1}{\sqrt{2}}(|1\ra+|2\ra)$ $~\Longrightarrow~$ $\psi_{a,b,c}(|\xi\ra\la \xi|)\ngeq 0$\\
\hline
$(a,b,c)$ violates (3)&  $|\xi\ra=\displaystyle \frac{1}{\sqrt{2}}(|1\ra+ i |2\ra)$ $~\Longrightarrow~$ $\psi_{a,b,c}(|\xi\ra\la \xi|)\ngeq 0$\\
\hline
$(a,b,c)$ violates (4)&  $|\xi\ra =\displaystyle \frac{1}{\sqrt{d}}\sum_{k=1}^{d} |k\ra$ $~\Longrightarrow~$ $\psi_{a,b,c}(|\xi\ra\la \xi|)\ngeq 0$\\
\hline
$(a,b,c)$ violates (5) or (6)& $|\xi\ra=\displaystyle \frac{1}{\sqrt{d}}\sum_{k=1}^{d} e^{\frac{2\pi i k}{d}}|k\ra$ $~\Longrightarrow~$ $\psi_{a,b,c}(|\xi\ra\la \xi|)\ngeq 0$\\
\hline
    \end{tabular}
  \end{center}
\end{table}
\normalsize

\end{proof}

\begin{proof}[\textbf{Proof of Theorem \ref{thm-HHCov}}]
The conclusion is straightforward from Proposition \ref{prop41}, Theorem \ref{prop-HHpos}, and Corollary \ref{cor-equiv}.
\end{proof}

\begin{remark} \label{rmk-KMS20}
\begin{enumerate}
\item Note that Theorem \ref{prop-HHpos} gives a complete characterization of all positive linear maps $\psi$ spanned by $\psi_0,\psi_1,\psi_2,\psi_3$. This strengthens the results in Section 5 of \cite{KMS20} focusing on positive linear maps spanned only by $\psi_0,\psi_1,\psi_3$ without $\psi_2$.
\item Theorem \ref{prop-HHpos} tells us not only POS=DEC, but also explicit decompositions of our positive covariant maps into sums of CP and CCP maps. Note that this was one of the open questions raised in Section 6.c of \cite{KMS20}. We refer to Appendix \ref{sec-CovMU} for more details. 
\end{enumerate}
\end{remark}

\section{The problem PPT=SEP in tripartite systems with unitary group symmetries} \label{sec-examples2}

Recall that a tripartite quantum state $\rho\in \mathcal{D}(H_A\otimes H_B\otimes H_C)$ is called {\it A-BC separable} (resp. {\it A-BC PPT}) if $\rho$ is separable (resp. PPT) in the situation where $B(H_A\otimes H_B\otimes H_C)$ is understood as the bipartite system $B(H_A)\otimes B(H_B\otimes H_C)$. Furthermore, C-AB or B-AC separability (resp. PPT) is defined similarly. We will focus on the situation where $H_A=H_B=H_C=\Comp^d$, and let us denote by 

\begin{equation}
    \left \{ \begin{array}{lll}X^{T_A}=(T_d\otimes \text{id}_{d^2})(X), \\
    X^{T_B}=(\text{id}_d\otimes T_d\otimes \text{id}_d)(X), \\
    X^{T_C}=(\text{id}_{d^2}\otimes T_d)(X), \end{array} \right .
\end{equation} the three partial transposes of $X \in B(H_A\otimes H_B\otimes H_C)=M_{d^3}(\Comp)$.

The main purpose of this section is to apply our results in Section \ref{sec-framework} as new sources to study the problems PPT=SEP, equivalently the problems POS=DEC for some tripartite invariant quantum states. In Section \ref{sec-Tri-Werner}, we exhibit positive non-decomposable covariant maps $\Le:M_d(\Comp)\rightarrow M_{d^2}(\Comp)$ satisfying
\begin{equation}\label{eq53}
\Le(\overline{U}X U^T)=(U\otimes U)\Le(X)(U\otimes U)^*
\end{equation}
for all unitary matrices $U\in \mathcal{U}_d$ and $X\in M_d(\Comp)$. This result is parallel to the fact PPT$\neq$SEP for {\it tripartite Werner states} \cite{EW01}, i.e. tripartite quantum states $\rho\in M_{d^3}(\Comp)$ satisfying
\begin{equation}
(U\otimes U\otimes U)\rho = \rho (U\otimes U\otimes U)
\end{equation}
for all unitary matrices $U\in \mathcal{U}(d)$. 

On the other hand, in Section \ref{sec:Q.orthogonal}, we show that a strong contrast PPT=SEP holds for {\it quantum orthogonally invariant} quantum states. More generally, we prove that PPT=SEP holds for any tripartite quantum states $\rho\in M_{d^3}(\Comp)$ satisfying
\begin{equation}
(U\otimes \overline{U}\otimes U)\rho = \rho (U\otimes \overline{U}\otimes U)
\end{equation}
for all unitary matrices $U\in \mathcal{U}(d)$.


\subsection{Tripartite Werner states}\label{sec-Tri-Werner}

Let $\pi_A$, $\pi_{BC}$ be unitary representations of the unitary group $\mathcal{U}_d$ given by $\pi_A(U)=\overline{U}$ and $\pi_{BC}(U)=U\otimes U$. Then the elements in $\text{InvQS}(\overline{\pi_A}\otimes \pi_{BC})$ are called {\it tripartite Werner states}. Let us write $\text{Inv}(U^{\otimes 3})=\text{Inv}(\overline{\pi_A}\otimes \pi_{BC})$ and $\text{Cov}(\overline{U},UU)=\text{Cov}(\pi_A,\pi_{BC})$ for simplicity. The application of Schur-Weyl duality \cite{EW01} or von Neumann's bicommutant theorem \cite[Theorem 7.15]{wat2018} implies that the space $\text{Inv}(U^{\otimes 3})$ is spanned by six unitary operators $\left \{V_\sigma : \sigma\in \mathcal{S}_3\right\}$. Here, $V_{\sigma}:(\Comp^d)^{\otimes 3}\rightarrow (\Comp^d)^{\otimes 3}$ is determined by $V_{\sigma}(\xi_1\otimes \xi_2\otimes \xi_3)=\xi_{\sigma^{-1}(1)}\otimes \xi_{\sigma^{-1}(2)}\otimes \xi_{\sigma^{-1}(3)}$ for any $\xi_1,\xi_2,\xi_3\in \Comp^d$ and $\sigma\in \mathcal{S}_3$, or equivalently,
\begin{equation} \label{eq-UUUbasis}
    V_\sigma=\sum_{j_1,j_2,j_3=1}^d |j_1j_2j_3\ra\la j_{\sigma(1)}j_{\sigma(2)}j_{\sigma(3)}|.
\end{equation}

Recall that A-BC PPT property and separability of $\rho\in \text{InvQS}(U^{\otimes 3})$ were already characterized  
in \cite{EW01}, and it was shown that PPT=SEP if and only if $d=2$. Therefore, a direct application of Corollary \ref{cor-CovPPTES} gives us the following result.

\begin{theorem}
All positive $(UU,\overline{U})$-covariant maps are decomposable if and only if $d= 2$. By taking the adjoint operation $\Le\mapsto \Le^*$, the same conclusion holds for positive $(\overline{U},UU)$-covariant maps.
\end{theorem}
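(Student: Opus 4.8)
The plan is to realize this theorem as an immediate instance of the abstract machinery developed in Section \ref{sec-framework}, specifically Corollary \ref{cor-CovPPTES} together with the known classification results for tripartite Werner states from \cite{EW01}. First I would observe that the group in play is $G=\mathcal{U}_d$, with $\pi_A(U)=\overline{U}$ acting on $H_A=\Comp^d$ and $\pi_{BC}(U)=U\otimes U$ acting on $H_{BC}=\Comp^d\otimes\Comp^d$. The contragredient of $\pi_A$ is then $\overline{\pi_A}(U)=U$, so that $\text{Inv}(\overline{\pi_A}\otimes\pi_{BC})=\text{Inv}(U^{\otimes 3})$ is exactly the space of tripartite Werner states, and a quantum state $\rho\in\text{InvQS}(U^{\otimes 3})$ is A-BC PPT precisely when $\rho^{T_A}=(T_d\otimes\text{id}_{d^2})(\rho)\geq 0$, viewing the system as the bipartite tensor $B(H_A)\otimes B(H_{BC})$. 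This matches the setup of Corollary \ref{cor-CovPPTES} with the roles $\pi_A\rightsquigarrow\pi_A$, $\pi_B\rightsquigarrow\pi_{BC}$.

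Next I would apply Corollary \ref{cor-CovPPTES} directly: it states that PPT$=$SEP in $\text{InvQS}(\overline{\pi_A}\otimes\pi_{BC})$ if and only if POS$=$DEC in $\text{Cov}(\pi_{BC},\pi_A)$, that is, if and only if every positive $(\pi_{BC},\pi_A)$-covariant map $\Le:M_{d^2}(\Comp)\to M_d(\Comp)$ is decomposable. Concretely, $(\pi_{BC},\pi_A)$-covariance of $\Le$ means $\Le((U\otimes U)Y(U\otimes U)^*)=\overline{U}\,\Le(Y)\,U^T$ for all $U\in\mathcal{U}_d$, which is exactly the class I am calling $(UU,\overline{U})$-covariant maps. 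Then I would invoke the characterization of Eggeling--Werner \cite{EW01}: A-BC PPT tripartite Werner states coincide with A-BC separable ones if and only if $d=2$ (the "only if" part of their result exhibits, for every $d\geq 3$, a PPT tripartite Werner state that is not A-BC separable). Combining this with Corollary \ref{cor-CovPPTES} yields the first assertion: all positive $(UU,\overline{U})$-covariant maps are decomposable iff $d=2$.

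Finally, for the statement about $(\overline{U},UU)$-covariant maps, I would pass to the adjoint operation $\Le\mapsto\Le^*$. By Corollary \ref{cor-twirling}(2), $\Le\in\text{Cov}(\pi_{BC},\pi_A)$ if and only if $\Le^*\in\text{Cov}(\pi_A,\pi_{BC})$, i.e., $\Le^*$ is $(\overline{U},UU)$-covariant; and as recalled just before Proposition \ref{prop-twirling}, the adjoint operation preserves positivity and decomposability in both directions. Hence $\Le$ is positive and non-decomposable iff $\Le^*$ is positive and non-decomposable, so the dichotomy transfers verbatim: all positive $(\overline{U},UU)$-covariant maps are decomposable iff $d=2$.

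I do not expect a genuine obstacle here, since the theorem is a formal corollary of results already in hand. The only point requiring minor care is bookkeeping the contragredients and the placement of the bar so that the covariance relation for $\text{Cov}(\pi_{BC},\pi_A)$ really is the $(UU,\overline{U})$-covariance condition as stated, and confirming that the A-BC bipartition used in \cite{EW01} is the one matching $B(H_A)\otimes B(H_{BC})$ in our framework; both are routine once the dictionary $\pi_A(U)=\overline{U}$, $\pi_{BC}(U)=U\otimes U$ is fixed.
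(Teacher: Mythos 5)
Your proposal is correct and matches the paper's argument exactly: the paper likewise obtains this theorem as a direct application of Corollary \ref{cor-CovPPTES} (with $\pi_A(U)=\overline{U}$, $\pi_{BC}(U)=U\otimes U$, so that $\text{InvQS}(\overline{\pi_A}\otimes\pi_{BC})=\text{InvQS}(U^{\otimes3})$) combined with the Eggeling--Werner characterization that A-BC PPT $=$ SEP for tripartite Werner states iff $d=2$, and then transfers the statement to $\text{Cov}(\overline{U},UU)$ via the adjoint. Your bookkeeping of the contragredients and of the covariance convention is also consistent with the paper's.
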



In the remaining of this section, we will assume $d\geq 3$ and exhibit positive non-decomposable $(\overline{U},UU)$-covariant maps.

[{\bf Step 1}] First of all, let us characterize all elements in $\text{CovPos}(\overline{U},UU)$. Note that Corollary \ref{cor-twirling} (3) implies that the space $\text{Cov}(\overline{U},UU)$ is spanned by the following six linear maps $\mathcal{L}_\sigma$ whose \textit{unnormalized} Choi matrices are the operators $V_\sigma\in \text{Inv}(U^{\otimes 3})$ in \eqref{eq-UUUbasis}:
\begin{equation} \label{eq-UUUCovbasis}
    \left \{\begin{array}{lllllll}
    \mathcal{L}_e(X)=(\text{Tr\,}X)\cdot  \text{Id}_d\otimes \text{Id}_d,\\
    \mathcal{L}_{(12)}(X)=X^T \otimes \text{Id}_d,\\
    \mathcal{L}_{(13)}(X)=\text{Id}_d\otimes X^T,\\
    \mathcal{L}_{(23)}(X)=(\text{Tr}\,X)\cdot \sum_{j_2,j_3=1}^d |j_3j_2\ra\la j_2j_3|,\\
    \mathcal{L}_{(123)}(X)=\sum_{j_1,j_2,j_3=1}^d X_{j_1j_2}|j_2j_3\ra\la j_3j_1|,\\
    \mathcal{L}_{(132)}(X)=\sum_{j_1,j_2,j_3=1}^d X_{j_1j_3}|j_2j_3\ra\la j_1j_2|. \end{array} \right.
\end{equation}

\begin{lemma} \label{lem-UUUpos}
Let $\Le=\sum_{\sigma\in \mathcal{S}_3} a_{\sigma}\mathcal{L}_{\sigma}\in \text{Cov}(\overline{U},UU)$. Then $\Le$ is positive if and only if
\begin{equation}\label{eq-UUUpos1}
\left \{\begin{array}{ll}
(1) &a_{e}, a_{(12)}, a_{(13)}, a_{(23)}\in\mathbb{R}\; \text{ and }\; a_{(132)}=\overline{a_{(123)}},\\
(2) &a_{e}\ge \max \left\{-a_{(12)}, -a_{(13)}, |a_{(23)}|\right\},\\
(3) &a_{e}+a_{(12)}+a_{(13)}+a_{(23)}+a_{(123)}+a_{(132)}\ge 0,\\
(4) &\left(a_{e}+a_{(12)}\right)\left(a_{e}+a_{(13)}\right)\geq \left|a_{(23)}+a_{(123)}\right|^2.
\end{array} \right.
\end{equation}
\end{lemma}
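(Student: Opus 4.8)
The plan is to compute the action of $\Le = \sum_{\sigma} a_\sigma \mathcal{L}_\sigma$ on rank-one projections $|\xi\rangle\langle\xi|$, $\xi \in \Comp^d$, and read off positivity conditions. Since $\Le$ is already covariant, positivity of $\Le$ is equivalent to $\Le(|\xi\rangle\langle\xi|) \geq 0$ for all unit vectors $\xi$, but by $\mathcal{U}_d$-covariance (acting as $\overline{U}$ on the source) it suffices to test on a \emph{single} vector from each $\mathcal{U}_d$-orbit of rank-one projections on $\Comp^d$; and all unit vectors lie in one orbit, so in principle one vector suffices. In practice, though, the output lives in $M_{d^2}(\Comp)$ and its positivity is governed by several ``sectors'' (diagonal blocks in a suitable decomposition of $\Comp^d \otimes \Comp^d$), so the clean approach is: first write $\Le(|\xi\rangle\langle\xi|)$ in the basis $|j_a j_b\rangle$ for $\xi = \sum_k \xi_k |k\rangle$, then decompose $\Comp^d\otimes\Comp^d$ according to the stabilizer structure relative to $\xi$ — namely $\Comp\xi\otimes\Comp\xi$, the two ``mixed'' pieces $\Comp\xi\otimes\xi^\perp$ and $\xi^\perp\otimes\Comp\xi$, and $\xi^\perp\otimes\xi^\perp$ — and check positivity block by block.

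Concretely, I would take $\xi = e_1$ first for orientation: then $\mathcal{L}_e(|e_1\rangle\langle e_1|) = \mathrm{Id}\otimes\mathrm{Id}$, $\mathcal{L}_{(12)} = e_{11}\otimes\mathrm{Id}$, $\mathcal{L}_{(13)} = \mathrm{Id}\otimes e_{11}$, $\mathcal{L}_{(23)} = \sum_{j_2 j_3}|j_3 j_2\rangle\langle j_2 j_3|$ restricted appropriately, $\mathcal{L}_{(123)}(|e_1\rangle\langle e_1|) = \sum_{j_3}|1 j_3\rangle\langle j_3 1|$, $\mathcal{L}_{(132)}(|e_1\rangle\langle e_1|) = \sum_{j_2}|j_2 1\rangle\langle 1 j_2|$. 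Organizing $\{|j_a j_b\rangle\}$ into the four sectors above, the matrix $\Le(|e_1\rangle\langle e_1|)$ becomes block-diagonal except for the coupling on $\mathrm{span}\{|11\rangle\}\cup\{|1k\rangle, |k1\rangle : k\neq 1\}$: on $|11\rangle$ one gets the scalar $a_e + a_{(12)} + a_{(13)} + a_{(23)} + a_{(123)} + a_{(132)}$, explaining (3); the ``diagonal'' contributions on $|jj\rangle$ with $j\neq 1$ give $a_e + a_{(23)}$ and on generic off-sector vectors give $a_e$, explaining the bounds $a_e \geq -a_{(12)}$, $a_e\geq -a_{(13)}$ (from the sectors carrying only $\mathcal{L}_{(12)}$ or $\mathcal{L}_{(13)}$), while the $(23)$-swap on $|jk\rangle \leftrightarrow |kj\rangle$ for $j,k\neq 1$ couples pairs into $2\times 2$ blocks $\begin{pmatrix} a_e & a_{(23)} \\ \overline{a_{(23)}} & a_e\end{pmatrix}$, forcing $a_e \geq |a_{(23)}|$ and also requiring $a_{(23)}\in\Real$ when one also uses symmetrized vectors — giving condition (1) and the rest of (2); finally the coupling of $|1k\rangle$ with $|k1\rangle$ through $\mathcal{L}_{(23)}, \mathcal{L}_{(123)}, \mathcal{L}_{(132)}$ yields $2\times 2$ blocks with diagonal entries $a_e + a_{(12)}$, $a_e + a_{(13)}$ and off-diagonal $a_{(23)} + a_{(123)}$, which is exactly (4). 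The reality claims $a_e, a_{(12)}, a_{(13)} \in \Real$ follow from $\Le$ sending Hermitian to Hermitian together with linear independence of the $\mathcal{L}_\sigma$ on self-adjoint inputs, and $a_{(132)} = \overline{a_{(123)}}$ from the fact that $\mathcal{L}_{(123)}^\dagger$-type considerations force this pairing.

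For the converse — that (1)--(4) imply positivity on \emph{every} $|\xi\rangle\langle\xi|$, not just $|e_1\rangle\langle e_1|$ — I would invoke covariance: every unit $\xi$ equals $\overline{U}e_1$ for some $U\in\mathcal{U}_d$, so $\Le(|\xi\rangle\langle\xi|) = \Le(\overline{U}|e_1\rangle\langle e_1|U^T) = (U\otimes U)\Le(|e_1\rangle\langle e_1|)(U\otimes U)^*$, which is positive iff $\Le(|e_1\rangle\langle e_1|)$ is. Thus the single-vector computation, once done carefully with all sectors accounted for, is both necessary and sufficient.

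\textbf{Expected main obstacle.} The delicate part is the bookkeeping of the block decomposition of $\Comp^d\otimes\Comp^d$ relative to a distinguished vector and verifying that the $2\times2$ coupled blocks are exactly as claimed — in particular untangling which of $\mathcal{L}_{(23)}, \mathcal{L}_{(123)}, \mathcal{L}_{(132)}$ contributes to the $|1k\rangle$--$|k1\rangle$ coupling versus the $|jk\rangle$--$|kj\rangle$ coupling for $j,k\neq1$, since $\mathcal{L}_{(123)}$ and $\mathcal{L}_{(132)}$ only ``see'' index $1$ through the projection $|e_1\rangle\langle e_1|$. One must check that conditions (2), (3), (4) are not merely necessary but jointly sufficient — i.e. that there is no further constraint hiding in a larger sector — and that the positivity of each $2\times2$ block $\begin{pmatrix} \alpha & \beta \\ \bar\beta & \gamma\end{pmatrix}$ (requiring $\alpha,\gamma\geq0$, $\alpha\gamma\geq|\beta|^2$) assembles into precisely (2)--(4) with no redundancy or omission. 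A secondary subtlety is the reality/conjugation conditions in (1): these require care because the $\mathcal{L}_\sigma$ are not individually Hermiticity-preserving, only certain combinations are, so one should verify (1) by pairing $\sigma$ with $\sigma^{-1}$ and using $\mathcal{L}_\sigma(X)^* = \mathcal{L}_{\sigma^{-1}}(X^*)$ on the relevant inputs. I expect everything to go through, but the sector analysis is where an error would most easily creep in, so I would double-check it against small cases ($d=3$) numerically.
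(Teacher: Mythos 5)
Your proposal follows essentially the same route as the paper's proof: reduce positivity to $\Le(e_{11})\geq 0$ via $(\overline{U},UU)$-covariance, then block-diagonalize $\Le(e_{11})$ with respect to the bases $\{|11\rangle\}$, $\{|jj\rangle\}_{j\geq 2}$, $\{|1j\rangle,|j1\rangle\}_{j\geq 2}$, and $\{|ij\rangle,|ji\rangle\}_{2\leq i<j}$, which is exactly the decomposition the paper uses, with all block entries matching and conditions (1)--(4) falling out as you describe. The only cosmetic slip is attributing $a_e\geq -a_{(12)}$ and $a_e\geq -a_{(13)}$ to separate ``sectors'' rather than to the diagonal entries of the $\{|1j\rangle,|j1\rangle\}$ blocks, but you state those blocks correctly a line later, so nothing is missing.
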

\begin{proof}
Since every unit vector $\xi\in \Comp^d$ can be written as $|\xi\ra=\overline{U}|1\ra$ for some $U\in \mathcal{U}_d$, the $(\overline{U},UU)$-covariance property implies that $\Le$ is positive if and only if $\Le(e_{11})\geq 0$. Moreover, $\Le(e_{11})$ has a matrix decomposition \small
\begin{align}
    \Le(e_{11})&\cong (a_e+a_{(12)}+a_{(13)}+a_{(23)}+a_{(123)}+a_{(132)})1\oplus (a_e+a_{(23)}) \text{Id}_{d-1} \nonumber \\
    &\oplus \left(\bigoplus_{j=2}^d \begin{bmatrix}a_e+a_{(12)} & a_{(23)}+a_{(123)} \\ a_{(23)}+a_{(132)} & a_e+a_{(13)}\end{bmatrix}\right)\oplus \left(\bigoplus_{2\leq i<j\leq d}\begin{bmatrix}a_e & a_{(23)} \\ a_{(23)} & a_e\end{bmatrix}\right)
\end{align}
\normalsize with respect to the bases $\set{|11\ra}$, $\set{|22\ra,|33\ra, \ldots, |dd\ra}$, $\set{|1j\ra,|j1\ra}$ for $j=2,\ldots, d$, and $\set{|ij\ra,|ji\ra}$ for $2\leq i<j\leq d$, respectively. Therefore, $\Le(e_{11})\geq 0$ if only if \eqref{eq-UUUpos1} holds.
\end{proof}

The next step is to classify CP and CCP conditions in $\text{Cov}(\overline{U},UU)$ to find all PPT elements in $\text{InvQS}(U^{\otimes 3})$.

\begin{lemma} \label{lem-UUUPPT}
Let $\Le=\sum_{\sigma} a_{\sigma}\mathcal{L}_{\sigma}$ and let $X=\sum_{\sigma}a_{\sigma}V_{\sigma}$. Then
\begin{enumerate}
\item $\Le$ is CP if and only if $X\geq 0$ if and only if
\begin{equation}\label{eq-UUUCP}
\left \{\begin{array}{lllll}
a_e,a_{(12)},a_{(13)},a_{(23)}\in \Real\, \text{ and }\, a_{(123)}=\overline{a_{(132)}},\\
a_e+a_{(123)}+a_{(132)}\geq |a_{(12)}+a_{(13)}+a_{(23)}|,\\
2a_{e}-a_{(123)}-a_{(132)}\geq 0,\\
(a_{e}+\omega a_{(123)}+\overline{\omega}a_{(132)})(a_e+\overline{\om}a_{(123)}+\om a_{(132)})\\
\quad\quad \ge\left\vert \omega a_{(12)}+ \overline{\om}a_{(13)}+a_{(23)}\right\vert^2.
\end{array} \right .
\end{equation}
\item $\Le$ is CCP if and only if $X^{T_A}\geq 0$ if and only if
\begin{equation}\label{eq-UUUcoCP}
\left \{\begin{array}{lllll}
a_e,a_{(12)},a_{(13)},a_{(23)}\in \Real\, \text{ and }\, a_{(123)}=\overline{a_{(132)}},\\
a_{e}\geq |a_{(23)}|,\\
2a_{e}+a_{(123)}+a_{(132)}+d(a_{(12)}+a_{(13)})\geq 0,\\
\left(a_{e}+ a_{(23)}+\frac{d+ 1}{2}(a_{(12)}+a_{(13)}+ a_{(123)}+ a_{(132)})\right)\\
\quad\quad\times \left(a_{e}- a_{(23)}+\frac{d- 1}{2}(a_{(12)}+a_{(13)}- a_{(123)}- a_{(132)})\right)\\
\quad\quad \geq \frac{d^2-1}{4}(|a_{(12)}-a_{(13)}|^2+|a_{(123)}-a_{(132)}|^2).
\end{array} \right .
\end{equation}
\end{enumerate}
\end{lemma}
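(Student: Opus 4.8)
The plan is to reduce both statements to the positivity of a single self-adjoint operator and then block-diagonalize that operator using the representation theory of $\mathcal{U}_d$ on $(\Comp^d)^{\otimes 3}$. For part (1), the map $\Le=\sum_\sigma a_\sigma \mathcal{L}_\sigma$ is CP if and only if its (unnormalized) Choi matrix $X=\sum_\sigma a_\sigma V_\sigma$ is positive semidefinite; this is immediate since $V_\sigma$ is exactly the unnormalized Choi matrix of $\mathcal{L}_\sigma$ by construction \eqref{eq-UUUCovbasis}. For part (2), $\Le$ is CCP if and only if $(T\otimes\id)\circ\Le$ is CP, whose Choi matrix is $X^{T_A}=(T_d\otimes\id_{d^2})(X)$; so again everything reduces to checking when $X^{T_A}\geq 0$.

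The key computational step is to diagonalize $X=\sum_\sigma a_\sigma V_\sigma$. Since $X$ lies in the group algebra $\Comp[\mathcal{S}_3]$ acting on $(\Comp^d)^{\otimes 3}$, and (for $d\geq 3$) this action is faithful, $X\geq 0$ on the tensor space is equivalent to $X\geq 0$ in the regular representation of $\mathcal{S}_3$, i.e. to positivity of the image of $\sum_\sigma a_\sigma\sigma$ under every irreducible representation of $\mathcal{S}_3$. The irreducibles of $\mathcal{S}_3$ are the trivial, the sign, and the $2$-dimensional standard representation; evaluating $\sum_\sigma a_\sigma\sigma$ on each gives one $1\times 1$ block (the trivial rep yields $a_e+a_{(12)}+a_{(13)}+a_{(23)}+a_{(123)}+a_{(132)}$, up to a scalar the first displayed inequality), another $1\times 1$ block from the sign rep (yielding $2a_e - a_{(123)}-a_{(132)}$ after simplification — note the $\pm$ signs on transpositions versus $3$-cycles), and one $2\times 2$ Hermitian block whose positivity is the determinant condition in the last two lines of \eqref{eq-UUUCP}, with the cube root of unity $\om$ entering through the explicit matrices of the standard representation. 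I would record the matrix entries of the standard representation in the basis diagonalizing the $3$-cycle, so that $V_{(123)}$ becomes $\mathrm{diag}(\om,\overline\om)$ and the determinant of the $2\times 2$ block factors as stated; the reality constraint $a_{(123)}=\overline{a_{(132)}}$ and $a_e,a_{(12)},a_{(13)},a_{(23)}\in\Real$ is forced by self-adjointness of $X$ (equivalently $X^*=\sum_\sigma \overline{a_\sigma}V_{\sigma^{-1}}$).

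For part (2), the same philosophy applies to $X^{T_A}$, but now the relevant algebra is no longer $\Comp[\mathcal{S}_3]$: the partial transpose $V_\sigma^{T_A}$ is a partially transposed permutation operator, and the span of $\{V_\sigma^{T_A}:\sigma\in\mathcal{S}_3\}$ is the walled Brauer algebra $B_{1,2}(d)$ (equivalently the commutant of $U\mapsto \overline U\otimes U\otimes U$). I would block-diagonalize this $6$-dimensional algebra explicitly; its structure (for $d\geq 3$) is $M_1\oplus M_1\oplus M_2$ as a $\ast$-algebra, where the $M_2$ block now has multiplicity spaces whose dimensions involve $d$, producing the $d$-dependent coefficients $\frac{d\pm 1}{2}$ and $\frac{d^2-1}{4}$ in \eqref{eq-UUUcoCP}. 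Concretely, I would use the three "diagrams" contractions: the factor $|\Om\ra\la\Om|$-type contraction between the $A$-leg and one of $B,C$ (giving the $\frac{\Tr}{\cdot}$-terms), and translate $X^{T_A}$ into a matrix over these invariants. The first inequality $a_e\geq |a_{(23)}|$ comes from the block where the $A$-leg is contracted away entirely so that only the $S_2$-symmetry on $BC$ survives; the remaining three inequalities come from the $2\times 2$ Hermitian block, with its determinant factoring into the product of the two linear forms displayed.

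The main obstacle I expect is bookkeeping in part (2): correctly identifying the walled Brauer algebra structure, choosing bases that simultaneously diagonalize the commuting elements, and tracking the multiplicities that generate the $d$-dependence, all without sign or factor errors. Part (1) is a clean finite computation in $\Comp[\mathcal{S}_3]$; part (2) requires care because partial transposition does not preserve the group-algebra structure, and the $2\times 2$ block's off-diagonal entry $|a_{(12)}-a_{(13)}|^2+|a_{(123)}-a_{(132)}|^2$ mixes the "transposition" and "$3$-cycle" coordinates in a way that only becomes transparent after the right change of basis. An alternative to the abstract algebra approach, which I might use as a cross-check, is to compute $X^{T_A}$ directly as an explicit $d^3\times d^3$ matrix with indices, group its entries by the orbit structure under $\mathcal{U}_d$-conjugation, and verify the block decomposition by hand for small $d$.
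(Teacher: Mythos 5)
Your approach is essentially the paper's: the proof in Appendix~C runs exactly through the $*$-algebra isomorphisms $F:\text{Inv}(U^{\otimes 3})\to\Comp\oplus\Comp\oplus M_2(\Comp)$ and $G:\text{Inv}(\overline{U}\otimes U\otimes U)\to\Comp\oplus\Comp\oplus M_2(\Comp)$ (tabulated explicitly, following \cite{EW01}), reads off self-adjointness from linear independence of the $V_\sigma$, and checks positivity block by block; your derivation of these isomorphisms from the irreducible representations of $\mathcal{S}_3$ (for $X$) and from the commutant of $\overline{U}\otimes U\otimes U$ (for $X^{T_A}$) is a legitimate way to reconstruct the table, and your reading of the $d$-dependent coefficients in part (2) is consistent with it.

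One bookkeeping slip in part (1) should be fixed before you write this up: the sign representation does \emph{not} yield $2a_e-a_{(123)}-a_{(132)}$. Transpositions map to $-1$ and $3$-cycles to $+1$, so the sign block is $a_e-(a_{(12)}+a_{(13)}+a_{(23)})+(a_{(123)}+a_{(132)})$; together with the trivial block $a_e+(a_{(12)}+a_{(13)}+a_{(23)})+(a_{(123)}+a_{(132)})$ this produces the absolute-value inequality in the second line of \eqref{eq-UUUCP}. The third line $2a_e-a_{(123)}-a_{(132)}\geq 0$ is instead the \emph{trace} of the $2\times 2$ standard-representation block (whose diagonal entries are $a_e+\om a_{(123)}+\overline{\om}a_{(132)}$ and its conjugate, and $\om+\overline{\om}=-1$); positivity of a $2\times 2$ Hermitian block is equivalent to nonnegativity of its trace and determinant, which accounts for the last two conditions. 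As written, your attribution would lose the sign-representation constraint and give a strictly weaker characterization, though the error is self-correcting once you actually evaluate the three irreducibles. Also note that your reduction to the regular representation of $\mathcal{S}_3$ uses $d\geq 3$ (for $d=2$ the sign representation is absent and the $V_\sigma$ satisfy a linear relation), which is consistent with the standing assumption in that section.
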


These characterizations were already known from \cite[Lemma 2 and Lemma 8]{EW01}, but with a different parametrization. An elaboration on Lemma \ref{lem-UUUPPT} is attached in Appendix C using the following identifications
\begin{align}
\label{eq50}&\text{span}\left \{V_\sigma: \sigma\in \mathcal{S}_3\right\}\cong \Comp\oplus \Comp\oplus M_2(\Comp),\\
\label{eq51}&\text{span}\left \{V_\sigma^{T_A}: \sigma\in \mathcal{S}_3\right\}\cong \Comp\oplus \Comp\oplus M_2(\Comp)
\end{align}
as $*$-algebras.


{[\bf{Step 2}]} All extremal elements in $\text{CovPosTP}(\overline{U},UU)$ are completely characterized in the following lemma. Our proof is straightforward but rather cumbersome, so we attach the proof in Appendix \ref{sec-extpos}.

\begin{lemma} \label{lem-UUUext}
Let $\Le=\sum_{\sigma}a_{\sigma}\Le_{\sigma}\in \text{Cov}(\overline{U},UU)$. Then the following are equivalent.
\begin{enumerate}
\item $\Le\in \text{Ext}(\text{CovPosTP}(\overline{U},UU))$
\item $a_e,a_{(12)},a_{(13)},a_{(23)}\in \mathbb{R}$, $a_{(123)}=\overline{a_{(132)}}$, and the associated 6-tuple 
\begin{equation}
(a_e,a_{(12)},a_{(13)},a_{(23)},\text{Re}(a_{(123)}),\text{Im}(a_{(123)}))\in \mathbb{R}^6
\end{equation}
is one of the following three types:
\[\begin{array}{ll}
\text{Type \MakeUppercase{\romannumeral 1}}&c_1(1,-1,-1,-1,1,0)\\
\text{Type \MakeUppercase{\romannumeral 2}}&c_2(0,A,B,0,C,\pm\sqrt{AB-C^2})\\
\text{Type \MakeUppercase{\romannumeral 3}} &c_3(\frac{A+B+2C}{2}, \frac{A-B-2C}{2}, \frac{-A+B-2C}{2}, \frac{A+B+2C}{2}, -\frac{A+B}{2},\pm\sqrt{AB-C^2})
\end{array}\]
where $A,B\geq 0$, $C\in \Real$, $AB\geq C^2$, and the normalizing constants $c_1,c_2$, and $c_3$ are chosen to satisfy the TP condition
\begin{equation}
    d^2a_e+d(a_{(12)}+a_{(13)}+a_{(23)})+(a_{(123)}+a_{(132)})=1.
\end{equation}
\end{enumerate}
\end{lemma}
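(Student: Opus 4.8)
The plan is to reduce the problem to a concrete finite-dimensional convex geometry computation via the $*$-algebra identifications \eqref{eq50} and \eqref{eq51}. Since $\text{CovPosTP}(\overline{U},UU)$ is determined by the positivity inequalities \eqref{eq-UUUpos1} of Lemma \ref{lem-UUUpos} together with the single affine TP constraint $d^2a_e + d(a_{(12)}+a_{(13)}+a_{(23)}) + (a_{(123)}+a_{(132)}) = 1$, this is a (real) five-dimensional slice of a spectrahedral cone sitting inside the $6$-dimensional real parameter space with coordinates $(a_e,a_{(12)},a_{(13)},a_{(23)},\text{Re}(a_{(123)}),\text{Im}(a_{(123)}))$. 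I would first observe that by Lemma \ref{lem-UUUpos}, positivity of $\Le$ is equivalent to positivity of a block-diagonal matrix, whose irreducible blocks are: a $1\times 1$ block $a_e+a_{(12)}+a_{(13)}+a_{(23)}+2\text{Re}(a_{(123)})$, a $1\times1$ block $a_e+a_{(23)}$ (which appears with multiplicity $d-1$ but this is irrelevant), a $2\times2$ Hermitian block $M_1 = \begin{bmatrix} a_e+a_{(12)} & a_{(23)}+a_{(123)} \\ a_{(23)}+a_{(132)} & a_e+a_{(13)} \end{bmatrix}$, and a $2\times2$ real symmetric block $M_2 = \begin{bmatrix} a_e & a_{(23)} \\ a_{(23)} & a_e \end{bmatrix}$. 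Thus the cone of positive maps is the preimage under a linear map of $\Real_{\geq 0}^2 \times (\text{PSD}_2)^{\text{real}} \times (\text{PSD}_2)^{\text{Herm}}$, and its extreme rays are exactly those $\Le$ for which the total rank drop of this block-diagonal matrix is maximal subject to living on a face.

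The key steps would then be: (i) Identify the faces of the positive cone. An extreme ray of the cone must, after intersecting with the TP hyperplane, be a $0$-dimensional face (a vertex); equivalently, as a ray it is $1$-dimensional, so the nullspace conditions on the blocks must cut the ambient $6$-space down to dimension $1$. Since we have essentially five ``block'' degrees of freedom contributing rank conditions, I expect extreme rays to correspond to each block being either strictly positive or of corank $1$, with the coranks adding up to cut dimension to $1$; this is a routine but careful enumeration. (ii) Carry out that enumeration. The three types should fall out as follows. Type II is the sub-case $a_e = 0$: then the $1\times1$ block forces $a_{(12)}+a_{(13)}+a_{(23)}+2\text{Re}(a_{(123)}) = 0$ is not automatic — rather $a_e = 0$ forces $a_{(23)} = 0$ from $M_2 \geq 0$ (a rank-$1$ positive semidefinite $\begin{bmatrix} 0 & a_{(23)} \\ a_{(23)} & 0\end{bmatrix}$ must be zero), and then $M_1 = \begin{bmatrix} a_{(12)} & a_{(123)} \\ a_{(132)} & a_{(13)}\end{bmatrix} \geq 0$ of rank $\leq 1$ gives $a_{(12)} = A \geq 0$, $a_{(13)} = B \geq 0$, $a_{(123)} = C + i\sqrt{AB - C^2}$ (or its conjugate), $AB \geq C^2$; the $1\times1$ condition $A + B + 2C \geq 0$ is automatic since $M_1 \geq 0$. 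Type I is the rigid case where $M_1$ and $M_2$ both drop rank and the $1\times1$ blocks vanish, forcing everything proportional to $(1,-1,-1,-1,1,0)$. Type III interpolates: here $M_2$ is positive definite (so $a_e > |a_{(23)}|$) but the two scalar blocks and $M_1$ impose enough rank constraints; parametrizing $M_1 \geq 0$ of rank $\leq 1$ again by $(A,B,C)$ and solving the resulting linear system for the $a_\sigma$ should give the stated formula. (iii) Conversely, verify that each listed $6$-tuple indeed satisfies \eqref{eq-UUUpos1} and spans an extreme ray, i.e. admits no nontrivial splitting within $\text{CovPosTP}$; this follows because the face of the positive cone containing it is a single ray.

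The main obstacle I anticipate is step (ii): correctly bookkeeping which combinations of rank-drop conditions on the four blocks are simultaneously consistent and yield exactly a $1$-dimensional solution set, rather than over- or under-determining. In particular, one must be careful that the rank conditions on $M_1$ and on the two $1\times1$ scalar blocks are not independent — the scalar $a_e + a_{(12)} + a_{(13)} + a_{(23)} + 2\text{Re}(a_{(123)})$ is $\text{Tr}(M_1) + (a_e + a_{(23)})$-ish combinations interact — so naive dimension counting can mislead. The cleanest route is probably to split into the two exhaustive cases $a_e = 0$ and $a_e > 0$ (note $a_e < 0$ is impossible by $M_2 \geq 0$), handle $a_e = 0$ directly to get Type II, and for $a_e > 0$ normalize $a_e$ and analyze the resulting $5$-parameter spectrahedron's vertices, which should produce Types I and III; I would double-check the final constants $c_1, c_2, c_3$ against the TP equation at the end as a sanity check rather than carrying them through symbolically. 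This entire argument is elementary linear algebra and convexity, which is why the authors relegate it to Appendix \ref{sec-extpos}.
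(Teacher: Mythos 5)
Your overall strategy---read the positivity condition of Lemma \ref{lem-UUUpos} as a spectrahedral description and classify extreme points of the TP slice by the kernel/rank pattern of the blocks---is sound and is in fact close in spirit to what the paper does: after a linear change of variables the paper isolates the $2\times 2$ Hermitian block as the cone $\mathcal{S}=\set{(A,B,C,s): A,B\geq 0,\ AB\geq C^2+s^2}$, proves its extremal rays are exactly the rank-one points (Lemma \ref{lem-UUUext2}), and then case-analyzes the remaining scalar constraints. Your derivation of Type \MakeUppercase{\romannumeral 2} (the case $a_e=0$, which forces $a_{(23)}=0$ and reduces to the rank-$\leq 1$ face of $M_1$) matches the paper's Case 3 and is essentially correct.

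The gap is in step (ii), which is the actual content of the lemma, and the one concrete claim you make there is false. The pattern ``both scalar blocks vanish and $M_1$, $M_2$ both drop rank'' does \emph{not} force proportionality to $(1,-1,-1,-1,1,0)$: writing $r=\mathrm{Re}(a_{(123)})$, those conditions give $a_{(23)}=-a_e$, $a_{(12)}+a_{(13)}=-2r$, and $\det M_1=0$ then forces $a_{(12)}=a_{(13)}=-r$, $s=0$, i.e.\ the one-parameter family $(a_e,-r,-r,-a_e,r,0)$ with $0\leq r\leq a_e$. This is an \emph{edge} (a one-dimensional face after TP normalization) joining the Type \MakeUppercase{\romannumeral 1} point ($r=a_e$) to a Type \MakeUppercase{\romannumeral 2} point ($a_e=0$); its relative interior is not extreme, which is exactly the paper's Case 5 and requires an explicit TP-preserving perturbation to dispose of. Type \MakeUppercase{\romannumeral 1} only emerges when $M_1$ vanishes identically, not merely when it is singular---precisely the kind of miscount you warned yourself about. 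More broadly, you never establish exhaustiveness or the extremality of Type \MakeUppercase{\romannumeral 3}: the patterns corresponding to the paper's Cases 1, 2, 4 and 5 (e.g.\ $M_2$ singular with the first scalar block strictly positive) each need an explicit perturbation lying in the TP hyperplane, or a rigidity argument via Lemma \ref{lem-UUUext2} plus the TP equation, and ``the face containing it is a single ray'' is the assertion to be proved, not a justification. Until that enumeration is actually carried out, the proof is a plan rather than an argument.
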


Then, combining Lemma \ref{lem-UUUPPT} and Lemma \ref{lem-UUUext}, we can check that
\begin{itemize}
    \item Every $\Le\in \text{Ext}(\text{CovPosTP}(\overline{U},UU))$ of Type \MakeUppercase{\romannumeral 1} is CP,
    
    \item Every $\Le\in \text{Ext}(\text{CovPosTP}(\overline{U},UU))$ of Type \MakeUppercase{\romannumeral 2} is CCP,
    
    \item Let $\Le\in \text{Ext}(\text{CovPosTP}(\overline{U},UU))$ of Type \MakeUppercase{\romannumeral 3}. Then 
    \begin{itemize}
    \item $\Le$ is CP if and only if $A=B=C$, 
    \item $\Le$ is CCP if and only if $A=B=-C$.
    \end{itemize}
\end{itemize}
Thus, Type \MakeUppercase{\romannumeral 3} (with neither $A=B=C$ nor $A=B=-C$) provides explicit positive non-decomposable maps in $\text{CovPos}(\overline{U},UU)$ by Theorem \ref{thm-extremal-decomposable}. For example, we can choose $A=1,B=0$, and $C=0$ to obtain a specific extremal element
\begin{equation}
    \Le_0 = \Le_e+\Le_{(12)}-\Le_{(13)}+\Le_{(23)}-\Le_{(123)}-\Le_{(132)}\in \text{Ext}(\text{CovPosTP}(\overline{U},UU))
\end{equation}
up to a normalizing constant.

[{\bf Step 3}] On the dual side, the chosen positive non-decomposable map $\mathcal{L}_0^*\in \text{CovPos}(UU,\overline{U})$ should play a role as a PPT entanglement detector. Indeed, if we take
\begin{equation}
    \rho_t=\frac{1}{d^3+(t+1)d^2+2t}\left (\frac{d+t}{d}V_e+V_{(13)}+\frac{t}{d}V_{(123)}+\frac{t}{d}V_{(132)}\right )
\end{equation} 
with $0< t\leq 3.89$ and $d\geq 3$, then $\rho_t\in \text{InvQS}(U^{\otimes 3})$ is A-BC PPT by Lemma \ref{lem-UUUPPT}. Moreover,
it is straightforward to see that \small
\begin{align}
&(d^3+(t+1)d^2+2t)\cdot (\text{id}\otimes \Le_0^*)(\rho_t) \nonumber \\
&=\left(d^{2}+(t+2)d+3t-\frac{2t}{d}\right)\text{Id}_{d}\otimes\text{Id}_{d}-\left(d^{2}+(2t+2)d\right)|\Omega_{d}\ra\la\Omega_{d}|
\end{align}
\normalsize has a negative eigenvalue $\displaystyle -t\left(d+\frac{2}{d}-3\right)<0$. Consequently, the quantum state $\rho_t$ is A-BC PPT entangled by Theorem \ref{thm-main} or by Horodecki's criterion. 

\begin{remark}
Note that $\rho_t$ is also C-AB PPT entangled since $V_{(13)}\rho_t V_{(13)}=\rho_t$. On the other hand, $\rho_t$ is not B-AC PPT (and hence entangled). Indeed, we can observe that
\begin{equation}
    \rho_t^{T_B}=V_{(12)}(V_{(12)}\rho_t V_{(12)})^{T_A}V_{(12)},
\end{equation}
but $V_{(12)}\rho_t V_{(12)}$ is not A-BC PPT since \small
\begin{equation} \label{eq-B-ACnonPPT}
V_{(12)}\rho_t V_{(12)}=\frac{1}{d^3+(t+1)d^2+2t}\left (\frac{d+t}{d}V_e+V_{(23)}+\frac{t}{d}V_{(123)}+\frac{t}{d}V_{(132)}\right)
\end{equation}
\normalsize does not satisfy the CCP condition \eqref{eq-UUUcoCP}. 

It might be interesting if we can find a tripartite PPT-entangled Werner state with respect to all the three partitions A-BC, B-AC, and C-AB. However, Lemma 7 of \cite{EW01} implies that there is no such an example $\rho=\sum_{\sigma}a_{\sigma}V_{\sigma}$ if one of the following conditions is satisfied:
\begin{itemize}
    \item $a_{(12)}=a_{(13)}$ and $a_{(123)}=a_{(132)}$,
    
    \item$a_{(13)}=a_{(23)}$ and $a_{(123)}=a_{(132)}$,
    
    \item $a_{(23)}=a_{(12)}$ and $a_{(123)}=a_{(132)}$,
\end{itemize}
We leave the general situation as an open question. 

\end{remark}

\subsection{Tripartite quantum orthogonally invariant quantum states}\label{sec:Q.orthogonal}

Within the framework of compact quantum groups, it is well-known that the orthogonal group $\mathcal{O}_d$ allows a universal object, namely the {\it free orthogonal quantum group} $\mathcal{O}_d^+$ \cite{Wa95, tim08}. In other words, the invariance property with respect to $\mathcal{O}_d^+$ is a stronger notion than the (classical) orthogonal group invariance. See \cite{LeYo22} for a general discussion on invariant quantum states and covariant quantum channels with quantum group symmetries.

In this section, we focus on the space $\text{Inv}(O_+^{\otimes 3})
$ of the tripartite {\it quantum orthogonally invariant operators} spanned by five tripartite operators
\begin{equation} \label{eq-OOOinv}
    T_{\sigma}=V_{\sigma}^{T_B}=\sum_{j_1,j_2,j_3=1}^d|j_1j_{\sigma(2)}j_3\ra\la j_{\sigma(1)}j_2j_{\sigma(3)}|
\end{equation}
for $\sigma\in \mathcal{S}_3\setminus \set{(13)}=\set{e,(12),(23),(123),(132)}$. See Appendix \ref{sec:FOQG} for more discussions on \eqref{eq-OOOinv} and $\mathcal{O}_d^+$. Although Theorem \ref{thm-main} does not cover quantum group symmetries, any $X\in \text{Inv}(O_+^{\otimes 3})$ satisfies the following group invariance property
\begin{equation}
    (U\otimes \overline{U}\otimes U)X(U\otimes \overline{U}\otimes U)^*=X
\end{equation}
for all $U\in \mathcal{U}_d$ thanks to Corollary \ref{cor-twirling} (1). This transfers our problem to the realm of classical group symmetries. More precisely, we have
\begin{equation}
    \text{Inv}(O_+^{\otimes 3})\subseteq \text{Inv}(U\otimes \overline{U}\otimes U)=\text{Inv}(\overline{\pi_A}\otimes \pi_{BC})
\end{equation}
where $\pi_A(U)=\overline{U}$ and $\pi_{BC}(U)=\overline{U}\otimes U$. The main theorem of this section is the following.

\begin{theorem} \label{thm-qOOO}
Let $\rho\in \text{InvQS}(U\otimes \overline{U}\otimes U)$. Then $\rho$ is A-BC separable if and only if $\rho$ is A-BC PPT. In particular, A-BC PPT= A-BC SEP holds in $\text{InvQS}(O^{\otimes 3}_+)$.
\end{theorem}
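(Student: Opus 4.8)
The plan is to invoke the general framework of Section \ref{sec-framework}, applied to the representations $\pi_A(U)=\overline{U}$ (which is irreducible, being the contragredient of the fundamental representation of $\mathcal{U}_d$) and $\pi_{BC}(U)=\overline{U}\otimes U$. By Corollary \ref{cor-CovPPTES}, the statement ``A-BC PPT $=$ A-BC SEP in $\text{InvQS}(U\otimes\overline{U}\otimes U)$'' is equivalent to ``POS $=$ DEC in $\text{Cov}(\pi_{BC},\pi_A)$'', i.e. every positive $(\overline{U}\otimes U,\overline{U})$-covariant map $\mathcal{L}:M_{d^2}(\Comp)\to M_d(\Comp)$ is decomposable. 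By Corollary \ref{cor-equiv} and Theorem \ref{thm-main} it suffices to work with the compact convex set $\text{CovPos}_1(\pi_{BC},\pi_A)$ (or, passing to adjoints, $\text{CovPosTP}(\pi_A,\pi_{BC})$), and by Theorem \ref{thm-extremal-decomposable} it is enough to show that \emph{every extreme point} of this convex set is either CP or CCP.

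The concrete steps mirror the three-step strategy laid out after Theorem \ref{thm-extremal-decomposable}. \textbf{Step 1:} Characterize $\text{Cov}(\pi_A,\pi_{BC})$ explicitly. By Corollary \ref{cor-twirling}(3), $\text{Cov}(\pi_A,\pi_{BC})$ is spanned by the maps whose (unnormalized) Choi matrices are the five operators $T_\sigma=V_\sigma^{T_B}$ in \eqref{eq-OOOinv}, $\sigma\in\mathcal{S}_3\setminus\{(13)\}$; one then writes down the five corresponding covariant maps $\mathcal{L}'_\sigma:M_d(\Comp)\to M_{d^2}(\Comp)$ as in \eqref{eq-UUUCovbasis} and works out the $*$-algebra structure of $\text{span}\{T_\sigma\}$, which should again decompose as a small sum of matrix blocks (as in the identifications \eqref{eq50}, \eqref{eq51}). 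One then solves the positivity, CP, and CCP conditions for $\mathcal{L}=\sum_\sigma a_\sigma\mathcal{L}'_\sigma$ just as in Lemma \ref{lem-UUUpos} and Lemma \ref{lem-UUUPPT}: positivity of $\mathcal{L}$ reduces (via the covariance trick $|\xi\rangle=\overline{U}|1\rangle$) to positivity of a single small matrix $\mathcal{L}(e_{11})$, while CP (resp.\ CCP) reduces to positivity of $\sum_\sigma a_\sigma T_\sigma$ (resp.\ of its partial transpose $\sum_\sigma a_\sigma T_\sigma^{T_A}$), each block-diagonalized. \textbf{Step 2:} Enumerate the extreme points of $\text{CovPosTP}(\pi_A,\pi_{BC})$ — a finite list of ``types'' analogous to Lemma \ref{lem-UUUext} — and check directly, by comparing the parameter inequalities from Step 1, that each type satisfies either the CP inequalities or the CCP inequalities. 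If this check succeeds for \emph{all} extreme points, then Theorem \ref{thm-extremal-decomposable} gives POS $=$ DEC, hence A-BC PPT $=$ A-BC SEP, and no Step 3 is needed.

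The main obstacle is Step 2: one must be sure that, unlike the tripartite Werner case $(U\otimes U\otimes U)$ where a genuine Type III family of non-decomposable extremal maps appears, here \emph{no} extremal positive covariant map evades both the CP and the CCP cone. Concretely, the danger is an extremal map whose defining $2\times 2$ (or larger) positivity block is a rank-one matrix that is positive but neither the associated CP block nor the associated CCP block is positive. The key structural reason this should \emph{not} happen is that passing from the $(U\otimes U\otimes U)$-symmetry to the $(U\otimes\overline{U}\otimes U)$-symmetry replaces the representation $\pi_{BC}=U\otimes U$ by $\pi_{BC}=\overline{U}\otimes U$, whose irreducible decomposition (trivial rep plus adjoint rep) is multiplicity-free in a way that forces the relevant algebras in \eqref{eq50}--\eqref{eq51} to be \emph{commutative} (or to have only $1\times 1$ blocks in the troublesome positions), so that the positivity, CP, and CCP conditions become sign conditions on a common set of scalars and the extreme rays are forced into one cone or the other. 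Making this precise — i.e.\ pinning down the $*$-algebra structures of $\text{span}\{T_\sigma\}$ and $\text{span}\{T_\sigma^{T_A}\}$ and verifying the cone containments case by case — is where the real work lies; the rest is bookkeeping already templated by Section \ref{sec-Tri-Werner}.
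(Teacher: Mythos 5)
Your reduction is exactly the paper's route: use Corollary \ref{cor-equiv} and Theorems \ref{thm-main}, \ref{thm-extremal-decomposable} to reduce A-BC PPT $=$ A-BC SEP to showing that every extreme point of the covariant positive cone is CP or CCP, then follow Steps 1--2. But the proposal stops precisely where the proof begins. The entire mathematical content of the paper's argument is the execution of Step 2: the explicit positivity criterion (Lemma \ref{lem-UUbarUpos}, obtained by block-diagonalizing $\mathcal{M}(e_{11})$), the CP/CCP criteria (Lemma \ref{lem-UUbarUPPT}, obtained by conjugating by $V_{(12)}$ and $V_{(13)}$ and reusing \eqref{eq-UUUcoCP}), the classification of the extreme points of $\text{CovPosTP}(\overline{U},\overline{U}U)$ into four explicit types (Lemma \ref{lem-UUbarUext}), and the type-by-type verification that each is CP or CCP. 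You explicitly defer all of this as ``where the real work lies,'' so what you have is a plan, not a proof. Two concrete errors in the plan would also need fixing. First, $\text{Cov}(\overline{U},\overline{U}U)$ is \emph{six}-dimensional for $d\geq 3$, spanned by $\mathcal{M}_\sigma=(T_d\otimes\mathrm{id}_d)\circ\mathcal{L}_\sigma$ for \emph{all} $\sigma\in\mathcal{S}_3$, with Choi matrices $T_\sigma=V_\sigma^{T_B}$; only the quantum-group space $\text{Inv}(O_+^{\otimes 3})$ is five-dimensional. Running the POS $=$ DEC check on your five-dimensional family would not suffice for $\text{InvQS}(U\otimes\overline{U}\otimes U)$, since a non-decomposable positive covariant map could live in the missing direction.

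Second, your structural explanation for why no non-decomposable extremal map appears is not correct. The algebra $\text{Inv}(U\otimes\overline{U}\otimes U)$ is still isomorphic to $\Comp\oplus\Comp\oplus M_2(\Comp)$ (it is a partial transpose of $\text{Inv}(U^{\otimes 3})$), so it is just as noncommutative as in the tripartite Werner case; and both $U\otimes U$ and $\overline{U}\otimes U$ have multiplicity-free decompositions, so multiplicity-freeness does not distinguish the two situations either. What actually makes the theorem true is a computation with no soft shortcut: the four types of extreme points in Lemma \ref{lem-UUbarUext} have coefficient patterns (governed by $A,B\geq 0$ and $AB\geq C^2$) that happen to satisfy either the CP or the CCP inequalities of Lemma \ref{lem-UUbarUPPT}, whereas in the $U^{\otimes 3}$ case the Type III family genuinely escapes both cones. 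Finally, the case $d=2$ requires a separate argument (the operators $V_\sigma$ become linearly dependent), which the paper handles in Appendix \ref{sec-d=2} and which your proposal does not mention.
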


{
\begin{remark} \label{rmk-COS18}
Theorem \ref{thm-qOOO} implies POS=DEC in $\text{Cov}(U,U\overline{U})$ by Corollary \ref{cor-CovPPTES}, and equivalently, POS=DEC in $\text{Cov}(U,\overline{U}U)$. We remark that the latter class was analyzed recently in \cite{COS18,BCS20}. In particular, $k$-positivity and decomposability were discussed for a special subclass of $\text{Cov}(U,\overline{U}U)$ for $d=3$ in \cite{COS18}, and it was questioned whether certain $2$-positive non-decomposable map exists in $\text{Cov}(U,\overline{U}U)$. However, our Theorem \ref{thm-qOOO} gives the complete answer POS=DEC in the whole class $\text{Cov}(U,\overline{U}U)$ regardless of the dimension $d$, and this means that ($k$-)positive non-decomposable maps cannot exist in $\text{Cov}(U,\overline{U}U)$.
\end{remark}
}

\begin{remark}
Note that, for any unitary representation $\pi$ of a compact group, the following three problems
\begin{itemize}
\item A-BC PPT = A-BC SEP in $\text{Inv}(\pi\otimes \pi\otimes \pi)$
\item B-AC PPT = B-AC SEP in $\text{Inv}(\pi\otimes \pi\otimes \pi)$
\item C-AB PPT = C-AB SEP in $\text{Inv}(\pi\otimes \pi\otimes \pi)$
\end{itemize}
are equivalent. However, Theorem \ref{thm-qOOO} implies that this equivalence is no longer true when $\pi$ is replaced by a unitary representation of a compact quantum group. Indeed,  a B-AC PPT entangled state $V_{(12)}\rho_t V_{(12)}\in \text{InvQS}(U^{\otimes 3})$ from \eqref{eq-B-ACnonPPT} is transferred to the following B-AC PPT entangled state in $\text{InvQS}(O_+^{\otimes 3})$:
\begin{align}
    &(V_{(12)}\rho_t V_{(12)})^{T_B} \nonumber \\
    &=\frac{1}{d^3+(t+1)d^2+2t} \left( \frac{d+t}{d} T_{e}+T_{(23)}+\frac{t}{d}T_{(123)}+\frac{t}{d}T_{(132)}\right ).
\end{align}
In other words, the problem of A-BC PPT=SEP is not equivalent to the problem of B-AC PPT=SEP in $\text{InvQS}(O_+^{\otimes 3})$. A reason for this genuine quantum phenomenon is that the associated $C^*$-algebra of $\mathcal{O}_d^+$ is non-commutative.
\end{remark}

{[\bf{Step 1}]} Let us apply Corollary \ref{cor-equiv} to prove that POS=DEC holds in $\text{CovPos}_1(\overline{U}U,\overline{U})=\text{CovPos}_1(\pi_{BC},\pi_A)$, or equivalently, POS=DEC holds in $\text{CovPosTP}(\overline{U}, \overline{U}U)$. Recall that the space $\text{Cov}(\overline{U}, \overline{U}U)$ is spanned by six linear maps
\begin{equation} \label{eq-OOOcov}
    \mathcal{M}_{\sigma}=(T_d\otimes \text{id}_d)\circ \Le_{\sigma}
\end{equation}
for $\sigma\in \mathcal{S}_3$, where $\Le_{\sigma}$ is given by \eqref{eq-UUUCovbasis}. Then the unnormalized Choi matrix of $\mathcal{M}_{\sigma}$ is given by $T_{\sigma}$. 

\begin{lemma} \label{lem-UUbarUpos}
Let $d\geq 3$ and $\mathcal{M}=\sum_{\sigma}a_{\sigma}\mathcal{M}_{\sigma}\in \text{Cov}(\overline{U},\overline{U}U)$. Then $\mathcal{M}$ is positive if and only if
\begin{equation}\label{eq-UUbarUpos}
\left\{\begin{array}{ll}
(1)&a_{e}, a_{(12)}, a_{(13)}, a_{(23)}\in\mathbb{R}\;\text{ and }\; a_{(132)}=\overline{a_{(123)}},\\
(2)& a_e\geq \max\set{0,-a_{(12)},-a_{(13)}}, \\
(3)& a_e+a_{(12)}+a_{(13)}+a_{(23)}+a_{(123)}+a_{(132)}\geq 0,\\
(4)& a_{e}+(d-1)a_{(23)}\geq 0,\\
(5)& (a_e+a_{(12)}+a_{(13)}+a_{(23)}+a_{(123)}+a_{(132)})(a_e+(d-1)a_{(23)})\\
&\quad \geq (d-1)|a_{(23)}+a_{(123)}|^2.
\end{array}\right.
\end{equation}
\end{lemma}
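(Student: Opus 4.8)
The plan is to compute $\mathcal{M}(e_{11})$ explicitly and to block-diagonalize it, exactly in the spirit of the proof of Lemma \ref{lem-UUUpos}. First I would note that, since $\pi_A(U)=\overline{U}$ acts irreducibly and every unit vector $\xi\in\Comp^d$ can be written as $\overline{U}|1\rangle$ for a suitable $U\in\mathcal{U}_d$, the $(\overline{U},\overline{U}U)$-covariance of $\mathcal{M}$ reduces positivity of $\mathcal{M}$ to the single condition $\mathcal{M}(e_{11})\ge 0$. The coefficients $a_\sigma$ must then first satisfy the $\ast$-algebra constraint coming from $\mathcal{M}$ having a self-adjoint image on self-adjoint inputs, which forces $a_e,a_{(12)},a_{(13)},a_{(23)}\in\Real$ and $a_{(132)}=\overline{a_{(123)}}$; this is condition (1).

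Next I would write out $\mathcal{M}(e_{11})=\sum_\sigma a_\sigma \mathcal{M}_\sigma(e_{11})$ using $\mathcal{M}_\sigma=(T_d\otimes\text{id}_d)\circ\mathcal{L}_\sigma$ and the formulas \eqref{eq-UUUCovbasis}. Applying a partial transpose in the first leg to the Choi-matrix data is equivalent to working with the operators $T_\sigma=V_\sigma^{T_B}$ from \eqref{eq-OOOinv}, so $\mathcal{M}(e_{11})$ is a $d^2\times d^2$ matrix whose nonzero entries live on the subspace spanned by $\{|11\rangle\}$, $\{|1j\rangle,|j1\rangle:2\le j\le d\}$, $\{|jj\rangle:2\le j\le d\}$, and $\{|ij\rangle:i\ne j,\ i,j\ge 2\}$ — the natural invariant subspaces for the stabilizer of $|1\rangle$ inside the relevant group. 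The key computation is to identify the block decomposition: I expect a $1\times 1$ block with entry $a_e+a_{(12)}+a_{(13)}+a_{(23)}+a_{(123)}+a_{(132)}$ (coming from $|11\rangle$), giving (3); a block acting on $\text{span}\{|jj\rangle:j\ge 2\}$ together with off-diagonal couplings through the $(23)$ and $(123),(132)$ cycles that produces a $(d-1)\times(d-1)$-type matrix with eigenvalues $a_e+(d-1)a_{(23)}$ and $a_e-a_{(23)}$ — but because of the partial transpose the $|jj\rangle$ and the $|1j\rangle,|j1\rangle$ sectors get entangled, so more carefully I would group the vectors $\{|11\rangle\}\cup\{|jj\rangle:j\ge2\}$ into one $d\times d$ block whose positivity, after accounting for its circulant-plus-rank-one structure, gives conditions (3), (4), and (5) simultaneously, while the remaining $\{|ij\rangle:i\ne j\ge 2\}$ sector yields $a_e\ge 0$ (and the mixed $1$–$j$ sectors yield $a_e+a_{(12)}\ge0$, $a_e+a_{(13)}\ge0$), i.e. condition (2). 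The inequality (5) will come from the $2\times 2$ positivity condition (determinant $\ge 0$) of the $2\times 2$ compression onto the two distinguished vectors $\frac{1}{\sqrt{d}}\sum_j|jj\rangle$-type direction and its orthogonal complement inside that block.

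The main obstacle I anticipate is bookkeeping the action of the partial transpose $T_B$ on the permutation operators: unlike in Lemma \ref{lem-UUUpos}, the operators $T_\sigma$ are no longer permutation-like, so the block structure of $\mathcal{M}(e_{11})$ is less transparent, and I will need to diagonalize a genuine $d\times d$ circulant-type matrix (constant diagonal $a_e$ plus a rank-one perturbation built from $\sum_j|jj\rangle$ with weights involving $a_{(12)},a_{(13)},a_{(23)},a_{(123)},a_{(132)}$) whose eigenvalues produce the asymmetric-looking pair of factors in (5) — the factor $a_e+(d-1)a_{(23)}$ is the ``top'' eigenvalue of the circulant part and the other factor is the value on the all-ones direction after the rank-one correction. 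Once this block is correctly identified, conditions (2)–(5) follow by collecting the positivity requirements of each block, and conversely, if (1)–(5) hold then every block of $\mathcal{M}(e_{11})$ is positive semidefinite, hence so is $\mathcal{M}(e_{11})$, hence $\mathcal{M}$ is positive. I would relegate the explicit matrix entries to an appendix, as was done for Lemma \ref{lem-UUUPPT}.
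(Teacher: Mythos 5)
Your proposal is correct and takes essentially the same route as the paper's proof: reduce positivity to $\mathcal{M}(e_{11})\geq 0$ by covariance, block-diagonalize over the product basis (a $d\times d$ block on $\{|jj\rangle\}_{j=1}^{d}$, plus scalar blocks $(a_e+a_{(12)})\text{Id}_{d-1}$ on $\{|1j\rangle\}$, $(a_e+a_{(13)})\text{Id}_{d-1}$ on $\{|j1\rangle\}$, and $a_e\text{Id}_{(d-1)(d-2)}$ on the rest, yielding (2)), and compress the diagonal block to a $2\times 2$ matrix whose positivity gives (3)--(5). Two minor corrections to your bookkeeping: the $|1j\rangle,|j1\rangle$ sectors do not couple to the $|jj\rangle$ sector (only $|11\rangle$ does), and the correct $2\times 2$ compression is onto $|11\rangle$ and $\frac{1}{\sqrt{d-1}}\sum_{j\geq 2}|jj\rangle$, with diagonal entries $a_e+c$ and $a_e+(d-1)a_{(23)}$ producing the two factors in (5).
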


\begin{proof}
As in the proof of Lemma \ref{lem-UUUpos}, the positivity of $\mathcal{M}$ is equivalent to
$\mathcal{M}(e_{11})\geq 0$. Moreover, $\mathcal{M}(e_{11})\in M_d(\Comp)\otimes M_d(\Comp)$ has a matrix decomposition
\begin{equation}\label{eq-decomp}
    M \oplus (a_e+a_{(12)})\,\text{Id}_{d-1} \oplus (a_e+a_{(13)})\,\text{Id}_{d-1} \oplus a_e\,\text{Id}_{(d-1)(d-2)},
\end{equation}
where
\begin{equation} \label{eq-UUbarUmatrix}
    M= \begin{bmatrix} c & (a_{(23)}+a_{(132)}) \la v| \\ (a_{(23)}+a_{(123)}) |v\ra & a_{(23)} |v\ra\la v| \end{bmatrix}+a_e\, \text{Id}_d
\end{equation}
with $c=a_{(12)}+a_{(13)}+a_{(23)}+a_{(123)}+a_{(132)}$ and  $v=(1,1,\ldots, 1)^T\in \Comp^{d-1}$. The four matrices in the matrix decomposition \eqref{eq-decomp} are with respect to the bases $\set{|11\ra, |22\ra, \ldots, |dd\ra}$, $\set{|12\ra, |13\ra, \ldots, |1d\ra}$, $\set{|21\ra, |31\ra, \ldots, |d1\ra}$, and $\set{|ij\ra: i,j\neq 1 \text{ and } i\neq j}$, respectively. Thus, $\mathcal{M}(e_{11})\geq 0$ if and only if the conditions (1) and (2) in \eqref{eq-UUbarUpos} hold and $M\geq 0$.  Moreover, we can rewrite \eqref{eq-UUbarUmatrix} as
\begin{equation}
    M-a_e\,\text{Id}_d=V\begin{bmatrix} c & \sqrt{d-1}\overline{\alpha} \\ \sqrt{d-1}\alpha & (d-1)a_{(23)} \end{bmatrix}V^*,
\end{equation}
where $\alpha=a_{(23)}+a_{(123)}$ and $V=\begin{bmatrix} 1 & 0 \\ 0 & \frac{1}{\sqrt{d-1}}|v\ra \end{bmatrix}\in M_{d,2}(\Comp)$ is an isometry. Thus, the nonzero eigenvalues of $M-a_e\text{Id}_d$ are the same with those of $\begin{bmatrix} c & \sqrt{d-1}\overline{\alpha} \\ \sqrt{d-1}\alpha & (d-1)a_{(23)} \end{bmatrix}$. 
Consequently, the condition $M\geq 0$ is equivalent to the conditions (3), (4), and (5) of \eqref{eq-UUbarUpos}.
\end{proof}

Thanks to Lemma \ref{lem-UUUPPT}, it is easy to derive CP and CCP conditions in $\text{Cov}(\overline{U},\overline{U}U)$ or A-BC PPT condition in $\text{InvQS}(U\otimes \overline{U}\otimes U)$.

\begin{lemma}\label{lem-UUbarUPPT}
Let $d\geq 3$ and $\mathcal{M}=\sum_{\sigma\in \mathcal{S}_3}a_{\sigma}\mathcal{M}_{\sigma}\in \text{CovPos}(\overline{U},\overline{U}U)$. Then
\begin{enumerate}
    \item $\mathcal{M}$ is CP if and only if the operator 
    \begin{equation}
        V_{(12)}\left(\sum_{\sigma\in \mathcal{S}_3} a_{\sigma}V_{\sigma}\right)V_{(12)}=\sum_{\sigma\in \mathcal{S}_3} a_{(12) \sigma  (12)}V_{\sigma}
    \end{equation}
    satisfies the condition \eqref{eq-UUUcoCP}.
    
    \item $\mathcal{M}$ is CCP if and only if the operator
    \begin{equation}
        V_{(13)}\left(\sum_{\sigma\in \mathcal{S}_3} a_{\sigma}V_{\sigma}\right)V_{(13)}=\sum_{\sigma\in \mathcal{S}_3} a_{(13)\sigma(13)}V_{\sigma}
    \end{equation}
    satisfies the condition \eqref{eq-UUUcoCP}.
\end{enumerate}
\end{lemma}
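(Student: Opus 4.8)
The plan is to reduce both equivalences to the CCP criterion \eqref{eq-UUUcoCP} already established in Lemma~\ref{lem-UUUPPT}(2), by tracking how the partial transposes behave under the substitution $\mathcal{M}_\sigma=(T_d\otimes\text{id}_d)\circ\Le_\sigma$ and under conjugation by the flip operators $V_{(12)}$ and $V_{(13)}$. First I would compute the Choi matrix of $\mathcal{M}=\sum_\sigma a_\sigma\mathcal{M}_\sigma$: writing $X=\sum_\sigma a_\sigma V_\sigma\in\text{Inv}(U^{\otimes 3})$ and recalling that the unnormalized Choi matrix of $\Le_\sigma$ is $V_\sigma$ while $(T_d\otimes\text{id}_d)$ transposes the first output leg $H_B$, one obtains $C_{\mathcal{M}}=(\text{id}_A\otimes T_B\otimes\text{id}_C)(X)=X^{T_B}$. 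Hence $\mathcal{M}$ is CP precisely when $X^{T_B}\ge 0$, and $\mathcal{M}$ is CCP precisely when $(\text{id}_A\otimes T_B\otimes T_C)(C_{\mathcal{M}})\ge 0$, which, since the two transposes on $H_B$ cancel, is the same as $X^{T_C}\ge 0$.

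The next step is to move the relevant partial transpose onto the first leg $H_A$. Since $V_{(12)}$ is the flip of the legs $H_A,H_B$ and $V_{(13)}$ the flip of $H_A,H_C$ (both self-inverse unitaries), a direct check on product operators yields the intertwining identity
\begin{equation*}
\text{Ad}_{V_{(12)}}\circ(\text{id}_A\otimes T_B\otimes\text{id}_C)=(T_A\otimes\text{id}_B\otimes\text{id}_C)\circ\text{Ad}_{V_{(12)}},
\end{equation*}
together with the analogue in which $(12)$ and $B$ are replaced by $(13)$ and $C$. Applying these to $X$ gives $(V_{(12)}XV_{(12)})^{T_A}=V_{(12)}X^{T_B}V_{(12)}$ and $(V_{(13)}XV_{(13)})^{T_A}=V_{(13)}X^{T_C}V_{(13)}$, so that by unitarity $X^{T_B}\ge 0$ if and only if $(V_{(12)}XV_{(12)})^{T_A}\ge 0$, and $X^{T_C}\ge 0$ if and only if $(V_{(13)}XV_{(13)})^{T_A}\ge 0$.

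Finally I would invoke Lemma~\ref{lem-UUUPPT}(2). Since $\sigma\mapsto V_\sigma$ is a unitary representation of $\mathcal{S}_3$, one has $V_{(12)}XV_{(12)}=\sum_\sigma a_{(12)\sigma(12)}V_\sigma$ and $V_{(13)}XV_{(13)}=\sum_\sigma a_{(13)\sigma(13)}V_\sigma$, and Lemma~\ref{lem-UUUPPT}(2) asserts that for $Y=\sum_\sigma b_\sigma V_\sigma$ the inequality $Y^{T_A}\ge 0$ is equivalent to the tuple $(b_\sigma)$ satisfying \eqref{eq-UUUcoCP}. Chaining the three steps then gives exactly statement (1) (with $Y=V_{(12)}XV_{(12)}$) and statement (2) (with $Y=V_{(13)}XV_{(13)}$). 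The Hermiticity requirements in condition (1) of \eqref{eq-UUUcoCP} hold automatically here because $\mathcal{M}$ is positive, hence $C_{\mathcal{M}}=X^{T_B}$ is Hermitian, and they are preserved by the relabellings above.

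I do not anticipate a genuine difficulty: the whole argument is a transport of Lemma~\ref{lem-UUUPPT}(2) through two bijective linear operations (the partial transpose $X\mapsto X^{T_B}$ and a flip-conjugation). The only place demanding care is the bookkeeping of which tensor leg carries each transpose — in particular, verifying that the CCP condition for $\mathcal{M}$ produces the combined transpose on leg $H_C$, and that $V_{(13)}$ is exactly the flip moving that leg onto $H_A$ so that Lemma~\ref{lem-UUUPPT}(2) can be applied verbatim.
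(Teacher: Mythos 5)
Your argument is correct and follows essentially the same route as the paper's proof: identify the (unnormalized) Choi matrix of $\mathcal{M}$ with $X^{T_B}$, reduce CP and CCP to positivity of $X^{T_B}$ and $X^{T_C}$ respectively, and use conjugation by the flips $V_{(12)}$, $V_{(13)}$ to move the partial transpose onto the $A$-leg so that Lemma \ref{lem-UUUPPT}(2) applies to $V_{(12)}XV_{(12)}$ and $V_{(13)}XV_{(13)}$. The only cosmetic difference is that the paper phrases the CCP condition via $(X^{T_B})^{T_A}=(X^{T_C})^{T}$ rather than via the cancellation of the two $B$-transposes, which is an equivalent bookkeeping choice.
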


\begin{proof}
Let $X=\sum_{\sigma\in \mathcal{S}_3}a_{\sigma}V_{\sigma}\in \text{Inv}(U^{\otimes 3})$ and $X'=V_{(12)}X V_{(12)}$. Then $\mathcal{M}$ is CP if and only if 
\begin{equation}
    \sum_{\sigma\in \mathcal{S}_3} a_{\sigma}T_{\sigma}=X^{T_B}=V_{(12)}(X')^{T_A}V_{(12)} \geq 0,
\end{equation}
which is equivalent to $(X')^{T_A}\geq 0$. On the other hand, let $X''=V_{(13)}X V_{(13)}$. Then $\mathcal{M}$ is CCP if and only if 
\begin{equation}
    \left(\sum_{\sigma\in \mathcal{S}_3} a_{\sigma}T_{\sigma}\right)^{T_A}=\left(X^{T_C}\right)^T=\left(V_{(13)}(X'')^{T_A}V_{(13)}\right)^T\geq 0,
\end{equation}
and this is equivalent to $(X'')^{T_A}\geq 0$.
\end{proof}

{[\bf{Step 2}]} We refer the reader to Appendix \ref{sec-extpos} for a proof of the following Lemma \ref{lem-UUbarUext} classifying all extremal elements in $\text{CovPosTP}(\overline{U},\overline{U}U)$. 
\begin{lemma} \label{lem-UUbarUext}
Let $d\geq 3$ and $\mathcal{M}=\sum_{\sigma\in \mathcal{S}_3}a_{\sigma}\mathcal{M}_{\sigma}\in \text{CovPos}(\overline{U},\overline{U}U)$. Then the following are equivalent.
\begin{enumerate}
\item $\M\in \text{Ext}(\text{CovPosTP}(\overline{U},\overline{U}U))$,
\item $a_e,a_{(12)},a_{(13)},a_{(23)}\in \mathbb{R}$, $a_{(123)}=\overline{a_{(132)}}$, and the associated 6-tuple 
\begin{equation}
(a_e,a_{(12)},a_{(13)},a_{(23)},\text{Re}(a_{(123)}),\text{Im}(a_{(123)}))\in \mathbb{R}^6
\end{equation}
is one of the following four types:

\[\begin{array}{ll}
\text{Type \MakeUppercase{\romannumeral 1}}&c_1(d-1,-1,1-d,-1,1,0),\\
\text{Type \MakeUppercase{\romannumeral 2}}&c_2(d-1,1-d,-1,-1,1,0),\\
\text{Type \MakeUppercase{\romannumeral 3}}&c_3(0,A+B-2C,0,B,C-B,\pm \sqrt{AB-C^2}),\\
\text{Type \MakeUppercase{\romannumeral 4}}&c_4(0,0,A+B-2C,B,C-B,\pm \sqrt{AB-C^2}),
\end{array}\]

where $A,B\geq 0$, $C\in \Real$, $AB\geq C^2$, and $c_i$ ($i=1,2,3,4$) are normalizing constants chosen to satisfy the TP condition
\begin{equation}
    d^2a_e+d(a_{(12)}+a_{(13)}+a_{(23)})+(a_{(123)}+a_{(132)})=1.
\end{equation}
\end{enumerate}
\end{lemma}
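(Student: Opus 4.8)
The plan is to reduce the classification of extreme points of $\text{CovPosTP}(\overline{U},\overline{U}U)$ to a purely finite-dimensional convex-geometry problem, exactly parallel to the strategy used for Lemma \ref{lem-UUUext}. First I would use Corollary \ref{cor-twirling} to identify $\text{Cov}(\overline{U},\overline{U}U)$ with $\text{Inv}(O_+^{\otimes 3})\subseteq \text{Inv}(U\otimes\overline{U}\otimes U)$, so that $\M=\sum_\sigma a_\sigma\M_\sigma$ is encoded by the $6$-tuple $(a_e,a_{(12)},a_{(13)},a_{(23)},a_{(123)},a_{(132)})$; positivity of $\M$ is then the system \eqref{eq-UUbarUpos} from Lemma \ref{lem-UUbarUpos}, and the TP condition is the single affine equation displayed in the statement. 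So $\text{CovPosTP}(\overline{U},\overline{U}U)$ becomes an explicit compact convex subset $K$ of a $5$-dimensional affine slice of $\R^6$ (after noting $a_{(132)}=\overline{a_{(123)}}$ and splitting $a_{(123)}$ into real and imaginary parts, it is a region in $\R^5$).

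The next step is to read off the faces of $K$. The inequalities in \eqref{eq-UUbarUpos} are of two kinds: the linear ones (2), (3), (4), and the single quadratic (PSD-type) constraint (5), which has the form $PQ\ge (d-1)|a_{(23)}+a_{(123)}|^2$ with $P,Q$ the two linear expressions appearing in (3) and (4). An extreme point of $K$ must lie on enough faces to be $0$-dimensional. I would organize the casework by how many of the linear constraints are active and whether the quadratic one is active (i.e. the corresponding $2\times2$ matrix is rank-deficient). When the quadratic constraint is saturated, the local geometry is that of a cone $PQ=(d-1)(\text{Re}\,\alpha)^2+(d-1)(\text{Im}\,\alpha)^2$; an extreme point on this cone together with one more active linear constraint is typically a ray of the cone, which is how the one-parameter families $A,B,C$ (with $AB=C^2$ forced on a boundary stratum, giving the $\pm\sqrt{AB-C^2}$) appear. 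I expect Types III and IV to come from the stratum $a_e=0$ together with one of $a_{(12)}=0$ or $a_{(13)}=0$ (the asymmetry between the two being exactly why there are \emph{four} types here versus three in Lemma \ref{lem-UUUext}, since the $(12)\!\leftrightarrow\!(13)$ symmetry present in the $U^{\otimes3}$ case is broken by the partial transpose on the middle leg), and Types I and II to be the two isolated vertices where several linear constraints meet; one then checks directly that each listed $6$-tuple satisfies \eqref{eq-UUbarUpos} with equality in the appropriate places and, conversely, that no other combination of active constraints yields a $0$-dimensional face. Finally I would normalize by $c_i$ to meet the TP equation.

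The routine direction — verifying that each of the four listed families does consist of extreme points — is a direct substitution into \eqref{eq-UUbarUpos}: one confirms feasibility, identifies the active constraints, and checks that their gradients (together with the differential of the quadratic constraint along the cone) span the $5$-dimensional tangent space, so that the point is exposed. The harder direction, and the main obstacle, is \emph{exhaustiveness}: showing that the union of Types I–IV already captures every extreme point. This requires a careful enumeration of all sign/rank patterns of active constraints and ruling out spurious candidates — in particular handling the interaction between the linear inequalities (2)–(4) and the quadratic inequality (5), where an extreme point might a priori occur with (5) strict but several of (2)–(4) active, or with (5) active and the rank-one structure forcing a further degeneracy. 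I would handle this by first splitting on whether $a_e>0$ or $a_e=0$ (noting (2) gives $a_e\ge0$): the case $a_e=0$ collapses $P$ and the $d$-dependent terms and, after also forcing $a_{(23)}=0$ from (4) when $a_e=0$ (for $d\ge3$ this is what happens unless $a_{(23)}=0$ is not forced — one must check the boundary carefully), quickly produces Types III–IV; the case $a_e>0$ is where Types I–II live, and there one uses that saturating (5) plus $\min\{a_e+a_{(12)},a_e+a_{(13)}\}$-type constraints pins down the tuple up to scaling. This mirrors the bookkeeping of the proof of Lemma \ref{lem-UUUext} deferred to Appendix \ref{sec-extpos}, so I would present the full argument there; in the main text it suffices to record the statement and note the parallel.
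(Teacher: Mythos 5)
Your overall strategy --- encode $\mathcal{M}$ by its coefficient $6$-tuple, impose the positivity system \eqref{eq-UUbarUpos} together with the affine TP equation, and classify the extreme points of the resulting compact convex body by a case analysis on active constraints --- is the same as the paper's (Appendix \ref{sec-extpos}). There, the casework is made tractable by a linear change of variables $\beta$ sending the coefficients to $(A,B,C,p,q,s)$ with $A=\sum_\sigma a_\sigma$, $B=\frac{a_e}{d-1}+a_{(23)}$, $C=a_{(23)}+r$, $p=a_e+a_{(12)}$, $q=a_e+a_{(13)}$; the constraints then read $A,B,p,q\ge 0$, $AB\ge C^2+s^2$, $A+B-2C\le p+q$, plus one affine equation, and a short three-case analysis (for $q=0$: either $A+B-2C<p$ with $(A,B)\neq(0,0)$, or $A=B=0$, or $A+B-2C=p$; symmetrically for $p=0$) produces exactly Types I--IV. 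Two steps of your plan, however, would fail as written. First, your verification of extremality via ``the gradients of the active constraints span the tangent space, so the point is exposed'' does not work for Types III and IV: at a generic point of those families only the quadratic constraint, two linear constraints and the TP equation are active, so naive constraint counting yields a face of dimension at least two, and yet each such point is extreme. What rescues the argument is the facial structure of the $2\times 2$ PSD cone (Lemma \ref{lem-UUUext2}): if a rank-one boundary point of $\left\{(A,B,C,s): A,B\ge 0,\ AB\ge C^2+s^2\right\}$ is written as a sum of two points of the cone, the summands are forced to be proportional to it. Without this decomposition argument (or an equivalent one), the extremality of Types III and IV is not established.

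Second, in your exhaustiveness sketch the assertion that condition (4) of \eqref{eq-UUbarUpos} forces $a_{(23)}=0$ once $a_e=0$ is false: $a_e+(d-1)a_{(23)}\ge 0$ with $a_e=0$ gives only $a_{(23)}\ge 0$, and indeed Types III and IV carry $a_{(23)}=B$ as a free nonnegative parameter. Following that step would discard most of the extreme points you are trying to classify. More broadly, the hard half of the lemma --- proving that every extreme point falls into one of the four types --- requires explicit two-sided perturbations to disqualify each non-saturated configuration (when $AB>C^2+s^2$ one perturbs $s$; when the cone constraint is saturated but $A+B-2C<p+q$ and $(A,B)\neq(0,0)$ one perturbs along the ray of the cone while compensating in $p$ or $q$ to preserve the TP equation), and this bookkeeping, which is where the substance of the paper's proof lies, is deferred rather than carried out in your proposal.
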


\begin{proof} [\textbf{Proof of Theorem \ref{thm-qOOO}}]
Let us assume $d\geq 3$. According to Theorem \ref{thm-extremal-decomposable} and Lemma \ref{lem-UUbarUext}, it suffices to show that $\mathcal{M}=\sum_{\sigma\in \mathcal{S}_3}a_{\sigma}\mathcal{M}_{\sigma}$ is CP or CCP whenever $(a_{\sigma})_{\sigma\in \mathcal{S}_3}$ is one of the four Types \MakeUppercase{\romannumeral 1} - \MakeUppercase{\romannumeral 4}. Now, by applying Lemma \ref{lem-UUbarUPPT}, we can check that

\begin{itemize}
\item Every $\Le\in \text{Ext}(\text{CovPosTP}(\overline{U},\overline{U}U))$ of Type \MakeUppercase{\romannumeral 1} or Type \MakeUppercase{\romannumeral 3} is CP, 
\item Every $\Le\in \text{Ext}(\text{CovPosTP}(\overline{U},\overline{U}U))$ of Type \MakeUppercase{\romannumeral 2} or Type \MakeUppercase{\romannumeral 4} CCP,
\end{itemize}
thanks to the conditions $A,B\geq 0$ and $AB\geq C^2$. When $d=2$, we refer to Appendix \ref{sec-d=2} for the complete proof.
\end{proof}


\emph{Acknowledgements}: The authors thank Professor Hun Hee Lee for the helpful discussions and comments. S-J.Park, Y-G.Jung and S-G.Youn were supported by the National Research Foundation of Korea (NRF) grant funded by the Ministry of Science and ICT (MSIT) (No.2021K1A3A1A21039365). Y-G.Jung, J.Park and S-G.Youn were also supported by Samsung Science and Technology Foundation under Project Number SSTF-BA2002-01. S-J.Park and S-G.Youn were supported by the National Research Foundation of Korea (NRF) grant funded by the Ministry of Science and ICT (MSIT) (No. 2020R1C1C1A01009681).

\appendix

\section{Covariant positive maps with respect to monomial unitary groups} \label{sec-CovMU}

In Section 5 and Section 6 of \cite{KMS20}, the authors analyzed the general structure of irreducibly covariant linear maps under some natural symmetries of the symmetric group $\mathcal{S}_4$ and the monomial unitary group $\mathcal{MU}(d,n)$, and presented new examples of positive irreducibly covariant maps. In this section, we elaborate on how our Theorem \ref{prop-HHpos} strengthens their results and resolves several open questions raised in \cite{KMS20}.

On one side, they considered irreducibly $(\tau_3,\tau_3)$-covariant maps, where $\tau_3:\mathcal{S}_4\to \mathcal{U}_3$ is a $3$-dimensional irreducible component of 
the canonical representation $\sigma\in \mathcal{S}_4\mapsto \sum_{k=1}^4 |\sigma(k)\ra\la k|\in \mathcal{U}(4)$. More precisely, this canonical representation is not irreducible; it allows an invariant $1$-dimensional subspace $\Comp\cdot |v\ra$ with $|v\ra=\sum_{k=1}^d|k\ra$ and the other $3$-dimensional invariant subspace $V=(\Comp\cdot |v\ra)^{\perp}$. Then the fundamental representation $\tau_3:\mathcal{S}_4\rightarrow \mathcal{U}(3)$ is defined by $\tau_3(\sigma)=\Pi_V \sigma \big|_{V}\in \mathcal{U}(3)$ for all $x\in \mathcal{S}_4$, where $\Pi_V$ is the orthogonal projection from $\Comp^4$ onto $V$.

The authors characterized all $(\tau_3,\tau_3)$-covariant maps and suggested a sufficient condition for positivity using the  the so-called {\it inverse reduction map criterion} \cite{MRS15}. On the other hand, it was shown in \cite[Section 6.1.1]{LeYo22} that, up to a change of basis, the $(\tau_3,\tau_3)$-covariant maps are precisely the linear combinations of $\Psi_1,\Psi_2,\Psi_3,\Psi_4:M_3(\Comp)\to M_3(\Comp)$ from \eqref{eq-HHCPext} with $d=3$. In other words, we have $\text{Cov}(\tau_3,\tau_3)=\text{Cov}(H,H)$ up to a unitary equivalence. Thus, Theorem \ref{prop-HHpos} gives the complete solution to the open question of the characterization of all positive $(\tau_3,\tau_3)$-covariant maps raised in \cite{KMS20}.

On the other side, 
recall that the \textit{monomial unitary group} $\mathcal{MU}(d)$ is a subgroup of $\mathcal{U}_d$ generated by all permutation matrices and all diagonal unitary matrices. Moreover, the subgroup $\mathcal{MU}(d,n)$ of $\mathcal{MU}(d)$ is generated by all permutation matrices and all diagonal matrices of the form $\sum_{i=1}^d \om_i |i\ra\la i|$ where $\om_i\in \set{1, e^{2\pi i/n},\ldots, e^{2(n-1)\pi i/n}}$, In particular, we have $\mathcal{MU}(d,2)=\mathcal{H}_d$.

For a closed subgroup $G$ of $\mathcal{U}_d$, we denote by $\pi_G:x\in G\mapsto x\in \mathcal{U}_d$ the fundamental representation of $G$, temporarily in this section. It was shown in \cite{KMS20} that, if $n\geq 3$, then 
\begin{equation}
\text{Cov}(\pi_{\mathcal{MU}(d,n)},\pi_{\mathcal{MU}(d,n)})=\text{span}\left\{\psi_0,\psi_1,\psi_3\right\}
\end{equation}
where $\psi_0,\psi_1,\psi_3$ are from \eqref{eq-HHCovbasis}. Moreover, the authors characterized all positive maps in this class and proved that all $(\pi_{\mathcal{MU}(d,n)},\pi_{\mathcal{MU}(d,n)})$-covariant positive maps are decomposable for $n\geq 3$. However, explicit decompositions were left as an open question, and the authors conjectured that a non-decomposable positive map may arise under $(\pi_{\mathcal{MU}(d)},\pi_{\mathcal{MU}(d)})$-covariance. 

Our results in this paper resolve their open questions in the sense that $(\pi_{\mathcal{MU}(d)},\pi_{\mathcal{MU}(d)})$-covariance does not make a difference, but a weaker condition $(\pi_{\mathcal{MU}(d,2)},\pi_{\mathcal{MU}(d,2)})$-covariance does. Furthermore, their POS=DEC result in $\text{Cov}(\pi_{\mathcal{MU}(d,n)},\pi_{\mathcal{MU}(d,n)})$ with $n\geq 3$ (Section 6.c of \cite{KMS20}) extends to a more general result POS=DEC in $\text{Cov}(\pi_{\mathcal{MU}(d,2)},\pi_{\mathcal{MU}(d,2)})$ with explicit decompositions into the sum of CP and CCP maps.
\begin{enumerate}
    \item More precisely, it is clear that \small
    \begin{equation}
    \text{Cov}(\pi_{\mathcal{MU}(d)},\pi_{\mathcal{MU}(d)})\subseteq \text{Cov}(\pi_{\mathcal{MU}(d,n)},\pi_{\mathcal{MU}(d,n)})=\text{span}\set{\psi_0,\psi_1,\psi_3},
    \end{equation}
    \normalsize and all the three maps $\psi_0,\psi_1$, and $\psi_3$ are covariant with respect to general diagonal unitary matrices. Therefore, for $n\geq 3$ we have \small
    \begin{align}
        \text{Cov}(\pi_{\mathcal{MU}(d)},\pi_{\mathcal{MU}(d)}) & = \text{Cov}(\pi_{\mathcal{MU}(d,n)},\pi_{\mathcal{MU}(d,n)})=\text{span}\set{\psi_0,\psi_1,\psi_3}.
    \end{align}
    \normalsize Therefore, there is no positive non-decomposable element inside $\text{Cov}(\pi_{\mathcal{MU}(d)},\pi_{\mathcal{MU}(d)})$.
    \item On the other hand, since $\mathcal{MU}(d,2)=\mathcal{H}_d$, we have \small
    \begin{align}
    \text{Cov}(\pi_{\mathcal{MU}(d,2)},\pi_{\mathcal{MU}(d,2)})&=\text{Cov}(H,H)=\text{span}\set{\psi_0,\psi_1,\psi_2,\psi_3},
    \end{align}
    \normalsize by Proposition \ref{prop41}. Moreover, POS=DEC in $\text{Cov}(\pi_{\mathcal{MU}(d)},\pi_{\mathcal{MU}(d)})$ from Section 6.c of \cite{KMS20} is strengthened to POS=DEC in the larger space $\text{Cov}(\pi_{\mathcal{MU}(d,2)},\pi_{\mathcal{MU}(d,2)})$ with explicit decompositions by Theorem \ref{prop-HHpos}.
\end{enumerate}

\section{Characterization of $\text{InvPPTQS}(U^{\otimes 3})$}
Recall that if $d\geq 3$, there exist $*$-algebra isomorphisms 
\begin{align}
&F:\text{Inv}(U\otimes U\otimes U)\rightarrow \Comp\oplus \Comp\oplus M_2(\Comp),\\
&G: \text{Inv}(\overline{U}\otimes U\otimes U) \rightarrow \Comp\oplus \Comp\oplus M_2(\Comp)
\end{align}
by the representation theory of the unitary group $\mathcal{U}_d$. Moreover, the authors in \cite{EW01} proposed specific choice of $*$-isomorphisms $F$ and $G$ that can be used to characterize the PPT condition of $X=\sum_{\sigma\in \mathcal{S}_3}a_\sigma 
V_\sigma\in \text{Inv}(U^{\otimes 3})$. For the convenience of the reader, we again present explicit maps $F$ and $G$ in terms of the bases $\set{V_{\sigma}:\sigma\in \mathcal{S}_3}$ and $\set{V_{\sigma}^{T_A}:\sigma\in \mathcal{S}_3}$ of $\text{Inv}(U^{\otimes 3})$ and $\text{Inv}(\overline{U}\otimes U\otimes U)$, respectively, in Table 2.
\small
\begin{table}[h!] \label{tab-8}
  \begin{center}
    \caption{The isomorphisms $F$ and $G$\; ($\om=e^{2\pi i/3 }$)}
    \label{tab:table8}
    \begin{tabular}{|c|c|c|c|c|c|c|c|} 
\hline
 $X$ &$F(X)$&$X^{T_A}$&$G(X^{T_A})$\\
  \hline
  $V_e$ &  $(1,1,\left [ \begin{array}{cc} 1&0 \\ 0&1 \end{array} \right ])$ &$V_e^{T_A}$& 
  $(1,1, \left [ \begin{array}{cc} 1&0 \\ 0&1 \end{array} \right ])$\\
\hline
$V_{(12)}$ & $(1,-1, \left [ \begin{array}{cc} 0& \overline{\omega} \\ {\omega}&0 \end{array} \right ])$  &$V_{(12)}^{T_A}$&$ (0,0,\left [ \begin{array}{cc} \frac{d+1}{2} &\frac{\sqrt{d^2-1}}{2} \\ \frac{\sqrt{d^2-1}}{2} & \frac{d-1}{2} \end{array} \right ] ) $ 
  \\
\hline
$V_{(13)}$ &  $(1,-1,\left [ \begin{array}{cc} 0&{\omega} \\ \overline{\omega} &0 \end{array} \right ])$ &$V_{(13)}^{T_A}$& 
  $(0,0,\left [ \begin{array}{cc} \frac{d+1}{2} &-\frac{\sqrt{d^2-1}}{2} \\ -\frac{\sqrt{d^2-1}}{2} & \frac{d-1}{2} \end{array} \right ] )$\\
\hline
$V_{(23)}$ & $(1,-1,\left [ \begin{array}{cc} 0&1 \\ 1&0 \end{array} \right ])$  &$V_{(23)}^{T_A}$&$(1,-1,\left [ \begin{array}{cc} 1&0 \\ 0&-1 \end{array} \right ])$
  \\
\hline
$V_{(123)}$ & $(1,1,\left [ \begin{array}{cc} \overline{\omega} & 0 \\ 0 &{\omega} \end{array} \right ])$ &$V_{(123)}^{T_A}$&
$(0,0,\left [ \begin{array}{cc} \frac{d+1}{2} &-\frac{\sqrt{d^2-1}}{2} \\ \frac{\sqrt{d^2-1}}{2} & -\frac{d-1}{2} \end{array} \right ])$
  \\
\hline
$V_{(132)}$ & $(1,1,\left [ \begin{array}{cc} {\omega} & 0 \\ 0 & \overline{\omega} \end{array} \right ])$ &$V_{(132)}^{T_A}$& $(0,0,\left [ \begin{array}{cc} \frac{d+1}{2} &\frac{\sqrt{d^2-1}}{2} \\ -\frac{\sqrt{d^2-1}}{2} &  - \frac{d-1}{2} \end{array} \right ] )$
  \\
\hline
    \end{tabular}
  \end{center}
\end{table}

\normalsize

\begin{proof}[\textbf{Proof of Lemma \ref{lem-UUUPPT}}]
Let $X=\sum_{\sigma\in \mathcal{S}_3} a_{\sigma}V_{\sigma}$. Then $X^*=X$ is equivalent to $a_e, a_{(12)},a_{(13)},a_{(23)}\in \Real$ and $a_{(123)}=\overline{a_{(132)}}$ since $\set{V_{\sigma}}_{\sigma\in \mathcal{S}_3}$ is linearly independent. Now $X\geq 0$, or equivalently, $F(X)\geq 0$ holds if and only if
\begin{align}
    (a_{e}+a_{(123)}+a_{(132)})\pm(a_{(12)}+a_{(13)}+a_{(23)})\geq 0\; \text{ and }\;\\ \begin{bmatrix} a_e+\overline{\om}a_{(123)}+a_{(123)}\om & \overline{\om}a_{(12)}+\om a_{(13)}+a_{(23)} \\ \om a_{(12)}+\overline{\om}a_{(13)}+a_{(23)} & a_{e}+\om a_{(123)}+\overline{\om}a_{(132)}  \end{bmatrix} \geq 0,
\end{align}
which is equivalent to the condition \eqref{eq-UUUCP}. Similarly, we get the equivalence between the condition $X^{T_A}\geq 0$ and \eqref{eq-UUUcoCP}.
\end{proof}

\section{Extremal positive maps in $\text{CovPosTP}(\overline{U},UU)$ and $\text{CovPosTP}(\overline{U},\overline{U}U)$} \label{sec-extpos}

This section is to give detailed proofs of Lemma \ref{lem-UUUext} and Lemma \ref{lem-UUbarUext}. For convenience, we assume $a_{e},a_{(12)},a_{(13)},a_{(23)}\in \Real$, $a_{(123)}=\overline{a_{(132)}}$, and write $r=\text{Re}(a_{(123)})$ and $s=\text{Im}(a_{(123)})$ throughout this section.

Let $\mathcal{P}_0$ be the set of all tuples $(a_e,a_{(12)},a_{(13)},a_{(23)},r,s)$ satisfying \eqref{eq-UUUpos1} and the TP condition
\begin{equation} \label{eq-UUUTP}
    d^2a_e+d(a_{(12)}+a_{(13)}+a_{(23)})+(a_{(123)}+a_{(132)})=1.
\end{equation}
Then $\mathcal{P}_0$ describes the condition $\sum_{\sigma}a_{\sigma}\Le_{\sigma}\in \text{CovPosTP}(\overline{U},UU)$ exactly, so $\mathcal{P}_0$ must be a convex and compact subset of $\Real^6$. For simplification of the condition \eqref{eq-UUUpos1}, let us consider a linear isomorphism
\begin{equation}
    \alpha:(a_e,a_{(12)},a_{(13)},a_{(23)},r,s)\mapsto (a_e,A,B,C,r,s)
\end{equation}
of $\Real^6$, where $A=a_{e}+a_{(12)}$, $B=a_e+a_{(13)}$, and $C=a_{(23)}+r$. Then $\mathcal{P}=\alpha(\mathcal{P}_0)$ is the set of all tuples $(a_e,A,B,C,r,s)\in \Real^6$ satisfying
\begin{equation}\label{eq-UUUpos2}
\left \{\begin{array}{ll}
(1)& A, B\geq 0,\\
(2)& AB\geq C^2+s^2,\\
(3)& A+B+C+r\geq a_e\geq |C-r|,\\
(4)& d(d-2)a_e+d(A+B+C)-(d-2)r=1.
\end{array} \right.
\end{equation}

Note that we have $\text{Ext}(\mathcal{P}_0)=\alpha^{-1}(\text{Ext}(\mathcal{P}))$. That is, it suffices to find the extreme points of $\mathcal{P}$ and restore the coefficients $(a_\sigma)_{\sigma\in \mathcal{S}_3}$ to get the corresponding extremal positive $(\overline{U},UU)$-covariant maps.

\begin{lemma} \label{lem-UUUext2}
Let $\mathcal{S}$ be the set of tuples $(A,B,C,s)$ satisfying (1) and (2) of \eqref{eq-UUUpos2}. Then $\mathcal{S}$ is a convex cone in $\Real^4$. Moreover, if $x_0=x_1+x_2$ in $\mathcal{S}$ with $x_i=(A_i,B_i,C_i,s_i)$ and if $A_0B_0=C_0^2+s_0^2$, then $x_1=\lambda_1 x_0, x_2=\lambda_2 x_0$ for some $\lambda_1,\lambda_2\geq 0$. In other words, the half-line $\Real_{+} x_0$ is an extremal ray of $\mathcal{S}$.
\end{lemma}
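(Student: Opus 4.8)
The plan is to identify $\mathcal{S}$ with the cone of positive semidefinite $2\times 2$ Hermitian matrices. First I would introduce the real-linear map $\Theta$ sending $(A,B,C,s)\in\Real^4$ to the Hermitian matrix $\left[\begin{smallmatrix} A & C+is \\ C-is & B\end{smallmatrix}\right]$; this is an isomorphism of $\Real^4$ onto the real vector space of self-adjoint $2\times 2$ complex matrices. By the principal-minor criterion, $\Theta(A,B,C,s)\geq 0$ precisely when $A\geq 0$, $B\geq 0$, and $AB\geq|C+is|^2=C^2+s^2$, so $\Theta$ restricts to a bijection of $\mathcal{S}$ onto the positive cone of $M_2(\Comp)$. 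Since that cone is closed under sums and nonnegative scalings, this yields at once that $\mathcal{S}$ is a convex cone; this is the same $2\times2$ positivity bookkeeping already used in the proof of Lemma~\ref{lem-UUUpos}, so I would only sketch it.

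For the extremal-ray statement I would discard the trivial case $x_0=0$ (where necessarily $x_1=x_2=0$, since $A_i,B_i\geq 0$ and $A_i+A_i{=}B_1{+}B_2{=}0$ force $A_i=B_i=0$ and then $C_i=s_i=0$), and assume $x_0\neq 0$. Writing $M_i=\Theta(x_i)$, we have $M_0=M_1+M_2$ with all three matrices positive semidefinite. The hypothesis $A_0B_0=C_0^2+s_0^2$ says exactly that $\det M_0=0$, hence $M_0$ is a nonzero rank-one positive matrix, say $M_0=vv^*$ with $0\neq v\in\Comp^2$. The key step is to propagate the kernel: for any $w$ in the one-dimensional space $\ker M_0=v^{\perp}$, from $0=w^*M_0w=w^*M_1w+w^*M_2w$ with both summands nonnegative we get $w^*M_iw=0$, and a positive semidefinite matrix killing the quadratic form at $w$ must satisfy $M_iw=0$. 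Thus $v^{\perp}\subseteq\ker M_i$, so each $M_i$ has rank $\leq 1$ with range in $\Comp v$, i.e. $M_i=\lambda_i\,vv^*=\lambda_i M_0$ for some $\lambda_i\geq 0$. Applying $\Theta^{-1}$ gives $x_i=\lambda_i x_0$, which is precisely the assertion that $\Real_{+}\cdot x_0$ is an extremal ray of $\mathcal{S}$.

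I do not expect a genuine obstacle here. The two ingredients — the principal-minor characterization of positivity of a $2\times2$ Hermitian matrix, and the elementary fact that a positive semidefinite $M$ with $w^*Mw=0$ satisfies $Mw=0$ — are entirely standard, and the only mild care needed is to keep all three defining inequalities of $\mathcal{S}$ in play when describing it as the positive cone. This matrix viewpoint also dovetails with how positivity of covariant maps is handled elsewhere in the paper, in particular in the proofs of Lemma~\ref{lem-UUUpos} and Lemma~\ref{lem-UUUext}, so it integrates cleanly into the surrounding argument.
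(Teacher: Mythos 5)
Your proof is correct, and it takes a genuinely different route from the paper's. The paper verifies closure of $\mathcal{S}$ under addition by hand: it picks $C_i'\geq|C_i|$, $s_i'\geq|s_i|$ with $A_iB_i=(C_i')^2+(s_i')^2$ and chains the AM--GM and Cauchy--Schwarz inequalities to get $(A_1+A_2)(B_1+B_2)\geq(C_1+C_2)^2+(s_1+s_2)^2$; the extremal-ray claim is then dispatched by ``investigating the equality condition carefully in the above inequality,'' which is explicitly left to the reader. You instead transport everything through the linear isomorphism $\Theta(A,B,C,s)=\left[\begin{smallmatrix} A & C+is\\ C-is & B\end{smallmatrix}\right]$ onto the cone of positive semidefinite $2\times2$ Hermitian matrices, after which the cone property is immediate and the extremality of $\Real_+\cdot x_0$ becomes the standard fact that a rank-one positive matrix generates an extremal ray of the PSD cone, proved by propagating the kernel via $w^*M_iw=0\Rightarrow M_iw=0$. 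What your route buys is a complete, explicit treatment of the equality case the paper omits, plus a conceptual explanation (condition (2) of \eqref{eq-UUUpos2} with equality is exactly $\det\Theta(x_0)=0$); what the paper's route buys is that it stays entirely inside the coordinates $(A,B,C,s)$ used in the rest of the classification in Lemma \ref{lem-UUUext}, at the cost of a less transparent equality analysis. One cosmetic slip: in your treatment of the trivial case $x_0=0$ you wrote $A_i+A_i=B_1+B_2=0$ where you clearly mean $A_1+A_2=B_1+B_2=0$; the argument itself (nonnegativity forces $A_i=B_i=0$, then $A_iB_i\geq C_i^2+s_i^2$ forces $C_i=s_i=0$) is fine.
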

\begin{proof}
It is straightforward that $x\in \mathcal{S}$ implies $\lambda x\in \mathcal{S}$ for all $\lambda \geq 0$. Let us write $x_i=(A_i,B_i,C_i,s_i)\in \mathcal{S}$ for $i=1,2$. Then $A_1+A_2\geq 0$ and $B_1+B_2\geq 0$, so the last thing to check is 
\begin{equation}
(A_1+A_2)\cdot (B_1+B_2)\geq (C_1+C_2)^2+(s_1+s_2)^2. 
\end{equation}

Let us choose $C_i'\geq |C_i|$, $s_i'\geq |s_i|$ such that $A_iB_i=(C_i')^2+(s_i')^2$. Then, indeed, we have
\begin{align}
    &(A_1+A_2)(B_1+B_2)\geq A_1B_1+2\sqrt{A_1B_1A_2B_2}+A_2B_2\\
    &=(C_1')^2+(s_1')^2+2\sqrt{((C_1')^2+(s_1')^2)((C_2')^2+(s_2')^2)}+(C_2')^2+(s_2')^2\\
    &\geq(C_1')^2+(s_1')^2+2(C_1'C_2'+s_1's_2')+(C_2')^2+(s_2')^2\\
    &=(C_1'+C_2')^2+(s_1'+s_2')^2\geq (C_1+C_2)^2+(s_1+s_2)^2.
\end{align}
by applying the AM-GM inequality and the Cauchy-Schwarz inequality. Therefore, $x_1+x_2\in \mathcal{S}$, which proves that $\mathcal{S}$ is a convex cone. The latter statement follows by investigating the equality condition carefully in the above inequality, which is left to the reader.
\end{proof}

\begin{proof}[\textbf{Proof of Lemma \ref{lem-UUUext}}]
It is sufficient to show that all the extreme points $\mathbf{x}=(a_e,A,B,C,r,s)$ of $\mathcal{P}$ are classified into the following three types up to normalizing constants: for $A,B\geq 0$ and $AB\geq C^2$,
\[\begin{array}{ll}
\text{Type \MakeUppercase{\romannumeral 1}}'&(1,0,0,0,1,0),\\
\text{Type \MakeUppercase{\romannumeral 2}}'&(0,A,B,C,C,\pm \sqrt{AB-C^2}),\\
\text{Type \MakeUppercase{\romannumeral 3}}'&(\frac{A+B+2C}{2},A,B,C,-\frac{A+B}{2},\pm \sqrt{AB-C^2}).
\end{array}\]

If $AB>C^2+s^2$, then we can choose $s'>|s|$ such that $AB=C^2+(s')^2$. In this case, $\mathbf{x}_{\pm}^{(0)}=(a_e,A,B,C,r,\pm s')\in \mathcal{P}$, and $\mathbf{x}$ is a (nontrivial) convex combination of $\mathbf{x}_+^{(0)}$ and $\mathbf{x}_-^{(0)}$. Thus, $\mathbf{x}$ is not extremal in $\mathcal{P}$.

From now on, we assume $AB=C^2+s^2$ (i.e., $s=\pm \sqrt{AB-C^2}$) and divide the condition (3) of \eqref{eq-UUUpos2} into the following cases.

\textbf{[Case 1]} $A+B+C+r \geq a_e>|C-r|$. Then for sufficiently small $\delta>0$,
\begin{equation}
     \footnotesize \mathbf{x}_{\pm}^{(1)}=\left(a_e\mp \frac{2(A+B+C)}{d-2}\delta, A\pm A\delta, B\pm B\delta, C\pm C\delta, r\mp \frac{d(A+B+C)}{d-2}\delta, s\pm s\delta\right)
\end{equation}
are elements of $\mathcal{P}$, and $\mathbf{x}= (\mathbf{x}_+^{(1)}+\mathbf{x}_-^{(1)})/2$. Therefore, $\mathbf{x}\notin {\rm \text{Ext}}(\mathcal{P})$.

\textbf{[Case 2]} $A+B+C+r>a_e=|C-r|>0$. Here we consider only the case $C>r$ since the other case $C<r$ can be argued similarly. Then for sufficiently small $\delta>0$,
\begin{equation}
     \mathbf{x}_{\pm}^{(2)}=\left(a_e\pm (C-k)\delta, A\pm A\delta, B\pm B\delta, C\pm C\delta, r\pm k\delta, s\pm s\delta\right)
\end{equation}
are elements of $\mathcal{P}$, where $k\in \Real$ satisfies \begin{equation}
    d(d-2)(C-k)+d(A+B+C)-(d-2)k=0
\end{equation}
so that the condition $(4)$ of \eqref{eq-UUUpos2} holds for $\mathbf{x}_{\pm}^{(2)}$. Since $\mathbf{x}= (\mathbf{x}_+^{(2)}+\mathbf{x}_-^{(2)})/2$, it is not extremal.

\textbf{[Case 3]} $A+B+C+r\geq a_e=|C-r|=0$, so $C=r$. We claim that $\mathbf{x}=(0,A,B,C,C,s)\in {\rm \text{Ext}}(\mathcal{P})$ corresponding to $\text{Type \MakeUppercase{\romannumeral 2}}'$. Suppose that $\mathbf{x}$ is a convex combination of $\mathbf{x}_{\pm}^{(3)}=(a_{\pm}, A_{\pm}, B_{\pm},C_{\pm}, r_{\pm}, s_{\pm})\in \mathcal{P}$. Then the condition $a_e=0$ and $a_\pm\geq 0$ imply $a_{\pm}=0$, which again forces $|C_{\pm}-r_{\pm}|=0$. Therefore, Lemma \ref{lem-UUUext2} implies that $\mathbf{x}_{\pm}^{(3)}=(0, A_{\pm}, B_{\pm},C_{\pm}, C_{\pm}, s_{\pm})=\lambda_{\pm}\mathbf{x}$ for some $\lambda_{\pm}\geq 0$. Now the TP condition \eqref{eq-UUUpos2} $(4)$ implies $\lambda_{\pm}=1$, so $\mathbf{x}=\mathbf{x}_{\pm}^{(3)}$.

\textbf{[Case 4]} $A+B+C+r=a_e=C-r\geq 0$. Then $r=-\frac{A+B}{2}$ and $\mathbf{x}=(\frac{A+B+2C}{2}, A,B,C,-\frac{A+B}{2},s)$. Here we claim that $\mathbf{x}\in {\rm \text{Ext}}(\mathcal{P})$ which corresponds to $\text{Type \MakeUppercase{\romannumeral 3}}'$ (note that $A+B\geq 2\sqrt{AB}=2\sqrt{C^2+s^2}\geq 2|C|$, so $r=-\frac{A+B}{2}$ conversely implies $C\geq r$). If $\mathbf{x}$ is a convex combination of $\mathbf{x}_{\pm}^{(4)}=(a_{\pm}, A_{\pm}, B_{\pm},C_{\pm}, r_{\pm}, s_{\pm})\in \mathcal{P}$, then the condition \eqref{eq-UUUpos2} (3) for $\mathbf{x}_\pm^{(4)}$ implies $A_\pm+B_\pm+C_\pm+r_\pm=a_\pm=|C_\pm-r_\pm|$. We may assume $A_+\geq A\geq A_-$ without loss of generality, so Lemma \ref{lem-UUUext2} implies
\begin{equation}
    \footnotesize
    \mathbf{x}_+^{(4)}=\left(\frac{A+B+2C}{2}+\delta', A+A\delta, B+B\delta, C+C\delta, -\frac{A+B}{2}+\delta'',s+s\delta\right) 
\end{equation}
for some $\delta\geq 0$, $\delta',\delta''\in \Real$, and $\delta'=(A+B+C)\delta+\delta''$ from $A_+ +B_+ +C_+ +r_+=a_+$. Now for the case $a_+=r_+-C_+\geq 0$, we have
\begin{equation}
    0\leq A+B+2C=-(A+B+2C)\delta\leq 0.
\end{equation}
However, this says $A+B=-2C$ and $\mathbf{x}=(0,A,B,C,C,s)$, which can be absorbed into \textit{Case 3}. For the case $a_+=C_+-r_+$, we have $\delta''=-\frac{A+B}{2}\delta$ and $\delta'=\frac{A+B+2C}{2}\delta$. However, then the TP condition \eqref{eq-UUUpos2} (4) implies
\begin{equation}
    \left(d(d-2)\frac{A+B+2C}{2}+d(A+B+C) +(d-2)\frac{A+B}{2}\right)\delta=0,
\end{equation}
which is possible only if $\delta=0$. Therefore, $\mathbf{x}=\mathbf{x}_+^{(4)}=\mathbf{x}_-^{(4)}$.

\textbf{[Case 5]} $A+B+C+r=a_e=r-C\geq 0$. Then $A+B=-2C$, and the previous inequality $A+B\geq 2\sqrt{AB}\geq 2|C|$ implies $A=B=-C\geq 0$ and $s=0$. Thus, $\mathbf{x}=(r-C,-C,-C,C,r,0)$ with $C\leq 0$ and $r\geq C$. Then our problem is divided into the following three subcases.
\begin{itemize}
    \item If $C<0$ and $r>C$, then $\mathbf{x}\notin {\rm \text{Ext}}(\mathcal{P})$ since $\mathbf{x}=(\mathbf{x}_+^{(5)}+\mathbf{x}_-^{(5)})/2$, where
    \begin{equation}
        \footnotesize
        \mathbf{x}_{\pm}^{(5)}=\left(r-C\mp\frac{2}{d-2}\delta, -C\pm\delta, -C\pm\delta, C\mp\delta, r\mp\frac{d}{d-2}\delta,0 \right)\in \mathcal{P}
    \end{equation}
    for sufficiently small $\delta>0$.
    
    \item If $r=C$, then $\mathbf{x}=(0,-C,-C,C,C,0)$ is extremal since it can be absorbed into \textit{Case 3}.
    
    \item If $C=0$, then $\mathbf{x}=r(1,0,0,0,1,0)$ is indeed extremal (corresponding to $\text{Type \MakeUppercase{\romannumeral 1}}'$) since the point $(A,B,C,s)=(0,0,0,0)$ is an extreme point of $\mathcal{S}$ in Lemma \ref{lem-UUUext2} and since $r$ is uniquely determined by the TP condition \eqref{eq-UUUpos2} (4).
\end{itemize}
\end{proof}

Now we shall prove Lemma \ref{lem-UUbarUext} using similar arguments. Let $\mathcal{Q}_0$ be the set of all tuples $(a_e,a_{(12)},a_{(13)},a_{(23)},r,s)$ satisfying \eqref{eq-UUbarUpos} and \eqref{eq-UUUTP}, and then consider a linear isomorphism
\begin{equation}
    \beta:(a_e,a_{(12)},a_{(13)},a_{(23)},r,s)\mapsto (A,B,C,p,q,s)
\end{equation}
of $\Real^6$, where $\begin{cases}A=\sum_{\sigma\in \mathcal{S}_3}a_{\sigma}, \; B=\frac{a_e}{d-1}+a_{(23)},\; C=a_{(23)}+r,\\ p=a_e+a_{(12)},\; q=a_e+a_{(13)}.\end{cases}$ Then $\mathcal{Q}=\beta(\mathcal{Q}_0)$ becomes the set of all tuples $(A,B,C,p,q,s)\in \Real^6$ satisfying \small
\begin{equation}\label{eq-UUbarUpos2}
\left \{\begin{array}{ll}
(1)& A, B, p, q\geq 0,\\
(2)& AB\geq C^2+s^2,\\
(3)& A+B-2C\leq p+q,\\
(4)& (-d^2+d+1)A-(d-1)^2 B+2d(d-1)C+(d^2-1)(p+q)=1.
\end{array} \right.
\end{equation}
\normalsize

\begin{proof}[\textbf{Proof of Lemma \ref{lem-UUbarUext}}]
It is sufficient to show that the extreme points $\mathbf{y}=(A,B,C,p,q,s)$ of $\mathcal{Q}$ are classified into the following four types up to normalizing constants: for $A,B\geq 0$ and $AB\geq C^2$,
\[\begin{array}{ll}
\text{Type \MakeUppercase{\romannumeral 1}}'&(0,0,0,1,0,0),\\
\text{Type \MakeUppercase{\romannumeral 2}}'&(0,0,0,0,1,0),\\
\text{Type \MakeUppercase{\romannumeral 3}}'&(A,B,C,A+B-2C,0,\pm \sqrt{AB-C^2}),\\
\text{Type \MakeUppercase{\romannumeral 4}}'&(A,B,C,0,A+B-2C,\pm \sqrt{AB-C^2}).
\end{array}\]
As in the proof of Lemma \ref{lem-UUUext}, we may assume $AB=C^2+s^2$. Furthermore, we may assume $p=0$ or $q=0$ since $\mathbf{y}$ is a convex combination of $\mathbf{y}_\pm^{(0)}\in \mathcal{Q}$, where $\mathbf{y}_+^{(0)}=(A,B,C,p+q,0,s)$ and $\mathbf{y}_-^{(0)}=(A,B,C,0,p+q,s)$. Let us first assume $q=0$ and divide the condition (3) of \eqref{eq-UUbarUpos2} into the following three cases.

\textbf{[Case 1]} $(A,B)\neq (0,0)$ and $A+B-2C < p$. Then for sufficiently small $\delta>0$,
\begin{equation}     
    \mathbf{y}_{\pm}^{(1)}=\left(A\pm A\delta, B\pm B\delta, C\pm C\delta, p\pm \delta',0,s\pm s\delta \right)\in \mathcal{Q}
\end{equation}
where $\delta'\in \Real$ satisfies
\begin{equation}
    \left((-d^2+d+1)A-(d-1)^2B+2d(d-1)C\right)\delta+(d^2-1)\delta'=0,
\end{equation}
so that the condition (4) of \eqref{eq-UUbarUpos2} holds for $\mathbf{y}_\pm^{(1)}$. Since $\mathbf{y}=(\mathbf{y}_+^{(1)}+\mathbf{y}_-^{(1)})/2$ and $\mathbf{y}_+^{(1)}\neq \mathbf{y}_-^{(1)}$, we have $\mathbf{y}\notin \text{Ext}(\mathcal{Q})$.

\textbf{[Case 2]} $A=B=0$ (hence $C=s=0$). Then $\mathbf{y}=p(0,0,0,1,0,0)$ is extremal in $\mathcal{Q}$ (corresponding to $\text{Type \MakeUppercase{\romannumeral 1}}'$) since $(A,B,C,s)=(0,0,0,0)$ is an extreme point of $\mathcal{S}$ in Lemma \ref{lem-UUUext2} and since $p$ is uniquely determined by \eqref{eq-UUbarUpos2} (4).

\textbf{[Case 3]} $A+B-2C=p$. In this case, we claim that $\mathbf{y}=(A,B,C,A+B-2C,0,s)\in \text{Ext}(\mathcal{Q})$, which corresponds to $\text{Type \MakeUppercase{\romannumeral 3}}'$. Indeed, if $\mathbf{y}$ is a convex combination of $\mathbf{y}_\pm^{(2)}=(A_\pm,B_\pm,C_\pm,p_\pm,q_\pm,s_\pm)\in \mathcal{Q}$, then the conditions $q=0$ and $q_\pm\geq 0$ imply $q_\pm=0$. Moreover, the conditions $A+B-2C=p$ and $A_\pm+B_\pm-2C_\pm \leq p_\pm$ imply $A_\pm+B_\pm-2C_\pm = p_\pm$. Now applying Lemma \ref{lem-UUUext2}, we can write
\begin{equation}
    \mathbf{y}_{+}^{(2)}=(A(1+\delta), B(1+\delta), C(1+\delta), (A+B-2C)(1+\delta), 0, s(1+\delta)) 
\end{equation}
for some $\delta\in \Real$. On the other hand, the TP condition \eqref{eq-UUbarUpos2} (4) for $\mathbf{y}_+^{(2)}$ gives
\begin{equation} \label{eq-UUbarUcase2}
    \left(dA+2(d-1)B-2(d-1)C\right)\delta=0.
\end{equation}
However,
\begin{equation}
    dA+2(d-1)B=A+(d-1)B+(d-1)(A+B)\geq 2(d-1)C
\end{equation}
 since $A+B\geq 2C$, and the equality above holds only if $A=B=C=p=s=0$ which is impossible. Therefore, \eqref{eq-UUbarUcase2} holds only if $\delta=0$, and hence we have $\mathbf{y}=\mathbf{y}_+^{(2)}=\mathbf{y}_-^{(2)}$.

Finally, we can proceed analogously when $p=0$ and get the tuples of $\text{Type \MakeUppercase{\romannumeral 2}}'$ and $\text{Type \MakeUppercase{\romannumeral 4}}'$ as extreme points of $\mathcal{Q}$.
\end{proof}

\section{Proof of Theorem \ref{thm-qOOO} when $d=2$} \label{sec-d=2}
When $d=2$, we have an additional relation
\begin{equation}
    V_e-V_{(12)}-V_{(13)}-V_{(23)}+V_{(123)}+V_{(132)}=0.
\end{equation}
Therefore, $\set{V_{\sigma}}_{\sigma\in \mathcal{S}_3}$ is no longer linearly independent, and both the spaces $\text{Inv}(U^{\otimes 3})=\text{span}\set{V_{\sigma}:\sigma\in \mathcal{S}_3}$ and $\text{Inv}(U\otimes \overline{U}\otimes U)=\text{span}\set{T_{\sigma}:\sigma\in \mathcal{S}_3}$ are 5-dimensional. In particular, we have $\text{Inv}(O_+^{\otimes 3})=\text{Inv}(U\otimes \overline{U} \otimes U)$ in this case.

We can write a general element in $\text{Cov}(\overline{U},\overline{U}U)$ as $\mathcal{M}=\sum_{\sigma\in \mathcal{S}_3\backslash \set{e}} a_{\sigma}\mathcal{M}_{\sigma}$. Then $\mathcal{M}$ is positive if and only if
\begin{equation}\label{eq-UUbarUpos3}
\left\{\begin{array}{ll}
a_{(12)}, a_{(13)}, a_{(23)}\geq 0\;\text{ and }\; a_{(132)}=\overline{a_{(123)}},\\
a_{(12)}+a_{(13)}+a_{(23)}+a_{(123)}+a_{(132)}\geq 0,\\
(a_{(12)}+a_{(13)}+a_{(23)}+a_{(123)}+a_{(132)})a_{(23)}\geq|a_{(23)}+a_{(123)}|^2,
\end{array}\right.
\end{equation}
by following the same proof in Lemma \ref{lem-UUbarUpos}. Now let us write $(r,s)=(\text{Re}(a_{(123)}),\text{Im}(a_{(123)}))$ for convenience and consider a linear isomorphism
\begin{equation}
    \tilde{\beta}:(a_{(12)}, a_{(13)},a_{(23)},r,s)\mapsto (a_{(12)},A,B,C,s),
\end{equation}
of $\Real^5$, where $A=a_{(12)}+a_{(13)}+a_{(23)}+2r$, $B=a_{(23)}$, and $C=a_{(23)}+r$. Then the set $\widetilde{\mathcal{Q}}=\set{\tilde{\beta}(a_{(12)},a_{(13)},a_{(23)},r,s):\mathcal{M}\in \text{CovPosTP}(\overline{U},\overline{U}U)}$ is equal to the set of tuples $(a_{(12)},A,B,C,s)\in \Real^5$ satisfying
\begin{equation}\label{eq-UUbarUpos4}
\left\{\begin{array}{ll}
(1)& A,B\geq 0,\\
(2)& AB\geq C^2+s^2,\\
(3)& 0\leq a_{(12)}\leq A+B-2C,\\
(4)& A+B-C=\frac{1}{2}.
\end{array}\right.
\end{equation}

In order to find the extreme points $\mathbf{y}=(a_{(12)},A,B,C,s)$ of $\widetilde{\mathcal{Q}}$, note that we still have $AB=C^2+s^2$ as in the proof of Lemma \ref{lem-UUbarUext}. Moreover, we have $a_{(12)}=0$ or $A+B-2C$ since $\mathbf{y}$ is a convex combination of $\mathbf{y}_+=(A+B-2C,A,B,C,s)$ and $\mathbf{y}_-=(0,A,B,C,s)$. Therefore, we can list all possible extreme points of $\widetilde{\mathcal{Q}}$ in the following two types:
\[\begin{array}{ll}
\text{Type \MakeUppercase{\romannumeral 1}}'& (A+B-2C,A,B,C,\pm \sqrt{AB-C^2}),\\
\text{Type \MakeUppercase{\romannumeral 2}}'& (0,A,B,C,\pm \sqrt{AB-C^2}),
\end{array}\]
for $A,B\geq 0$, $AB\geq C^2$, and $A+B-C=\frac{1}{2}$. Moreover, any extreme point of $\text{Type \MakeUppercase{\romannumeral 1}}'$ corresponds to a tuple \begin{equation}
    (a_{(12)},a_{(13)},a_{(23)},r,s)=(A+B-2C, 0, B,C-B,\pm \sqrt{AB-C^2}),
\end{equation}
so the associated linear map $\mathcal{M}=\sum_{\sigma\in \mathcal{S}_3\setminus \left\{e\right\}}a_{\sigma}\mathcal{M}_{\sigma}$ is CP by Lemma \ref{lem-UUbarUPPT} (note that Lemma \ref{lem-UUbarUPPT} (1) still gives a sufficient condition for $\mathcal{M}$ to be CP when $d=2$). Similarly, any extreme point of $\text{Type \MakeUppercase{\romannumeral 2}}'$ corresponds a CCP map. In other words, every element in $\text{Ext}(\text{CovPosTP}(\overline{U},\overline{U}U))$ is CP or CCP, thus POS=DEC holds in $\text{CovPos}_1(\overline{U}U,\overline{U})$. This completes the proof of Theorem \ref{thm-qOOO} by Corollary \ref{cor-equiv}.

\section{Quantum orthogonal symmetry}\label{sec:FOQG}

Within the framework of \textit{compact quantum groups}, the orthogonal group $\mathcal{O}_d$ is understood as the space $C(\mathcal{O}_d)$ of continuous functions on $\mathcal{O}_d$ endowed with the co-multiplication $\Delta:C(\mathcal{O}_d)\rightarrow C(\mathcal{O}_d\times \mathcal{O}_d)$ given by
\begin{equation}
    (\Delta f)(x,y)=f(xy)
\end{equation}
for all $x,y\in \mathcal{O}_d$ and $f\in C(\mathcal{O}_d)$. 
Moreover, there exists a family of continuous functions $(\pi_{ij})_{1\leq i,j\leq d}$ generating $C(\mathcal{O}_d)$ and
\begin{equation}
\Delta(\pi_{ij})=\displaystyle \sum_{k=1}^d \pi_{ik}\otimes \pi_{kj}\in C(\mathcal{O}_d)\otimes_{min}C(\mathcal{O}_d)\cong C(\mathcal{O}_d\times \mathcal{O}_d)    
\end{equation}
for all $1\leq i,j\leq d$, where $\otimes_{min}$ means the minimal tensor product between $C^*$-algebras.

The \textit{free orthogonal quantum group} $\mathcal{O}_d^+$ is a liberation of $\mathcal{O}_d$ in the sense that the space $C(\mathcal{O}_d^+)$ of `non-commutative' continuous functions on $\mathcal{O}_d^+$ is the universal unital $C^*$-algebra generated by $d^2$ self-adjoint operators $u_{ij}$ satisfying that $u=\displaystyle \sum_{i,j=1}^d e_{ij}\otimes u_{ij}$ is a unitary, i.e. $u^*u=uu^*=\text{Id}_d\otimes 1$ in $M_{d}(\Comp)\otimes C(\mathcal{O}_d^+)$. The quantum group structure is encoded in the unital $*$-homomorphism $\Delta:C(\mathcal{O}_d^+)\rightarrow C(\mathcal{O}_d^+)\otimes_{min}C(\mathcal{O}_d^+)$ determined by
\begin{equation*}
\Delta(u_{ij})=\sum_{k=1}^d u_{ik}\otimes u_{kj}.    
\end{equation*}
Then $u=\displaystyle \sum_{i,j=1}^d e_{ij}\otimes u_{ij}$ is the standard unitary representation of $\mathcal{O}_d^+$ satisfying $u^c=\displaystyle \sum_{i,j=1}^d e_{ij}\otimes u_{ij}^*=u$ in the sense of \cite{Wo87,Ba96}. The $3$-fold tensor representation $u\tp u\tp u\in M_d(\Comp)^{\otimes 3}\otimes C(\mathcal{O}_d^+)$ of $u$ is defined by 
\begin{equation}
u\tp u\tp u=\displaystyle \sum_{i_1,j_1,i_2,j_2,i_3,j_3=1}^d e_{i_1j_1}\otimes e_{i_2j_2}\otimes e_{i_3j_3}\otimes u_{i_1j_1}u_{i_2j_2}u_{i_3j_3}.    
\end{equation}
Then the space $\text{Inv}(O_+^{\otimes 3})$ in Section \ref{sec:Q.orthogonal} is understood as the space $\text{Inv}(u\tp u \tp u)$ of operators $X\in M_d(\Comp)^{\otimes 3}$ satisfying
\begin{align} \label{eq-qOOOinv}
    (u\tp u \tp u)\cdot (X\otimes 1)=(X\otimes 1)\cdot (u\tp u \tp u)
\end{align}
in view of \cite{LeYo22}. To sketch a proof of this fact, we can observe that the 5 operators $T_{\sigma}$ ($\sigma\in \mathcal{S}_3\backslash \set{(13)}$) in \eqref{eq-OOOinv} are linearly independent, and the operators $T_{\sigma}$ satisfy \eqref{eq-qOOOinv} using the identity
\begin{equation}
    (u\tp u)(|\Om_d\ra\otimes 1)=|\Om_d\ra\otimes 1.
\end{equation}
Thus, $\text{Inv}(O_+^{\otimes 3})\subseteq \text{Inv}(u\tp u \tp u)$. Moreover, the space $\text{Inv}(u\tp u \tp u)$ should be of dimension five thanks to the representation theory of $\mathcal{O}_d^+$ (see Corollary 6.4.12 and Corollary 5.3.5 of \cite{tim08}). Hence, we have $\text{Inv}(O_+^{\otimes 3})= \text{Inv}(u\tp u \tp u)$.

\bibliographystyle{alpha}
\bibliography{PJPY23}

\newcommand{\etalchar}[1]{$^{#1}$}
\def\cprime{$'$}
\begin{thebibliography}{BvDHT99}

\bibitem[AN14]{Al14}
Muneerah Al~Nuwairan.
\newblock The extreme points of {SU}(2)-irreducibly covariant channels.
\newblock {\em Internat. J. Math.}, 25(6):1450048, 30, 2014.

\bibitem[Ban96]{Ba96}
Teodor Banica.
\newblock Th\'{e}orie des repr\'{e}sentations du groupe quantique compact libre
  {${\rm O}(n)$}.
\newblock {\em C. R. Acad. Sci. Paris S\'{e}r. I Math.}, 322(3):241--244, 1996.

\bibitem[BBC{\etalchar{+}}93]{BBC93}
Charles~H. Bennett, Gilles Brassard, Claude Cr\'{e}peau, Richard Jozsa, Asher
  Peres, and William~K. Wootters.
\newblock Teleporting an unknown quantum state via dual classical and
  {E}instein-{P}odolsky-{R}osen channels.
\newblock {\em Phys. Rev. Lett.}, 70(13):1895--1899, 1993.

\bibitem[BCS20]{BCS20}
Ivan Bardet, Beno\^{i}t Collins, and Gunjan Sapra.
\newblock Characterization of equivariant maps and application to entanglement
  detection.
\newblock {\em Ann. Henri Poincar\'{e}}, 21(10):3385--3406, 2020.

\bibitem[Bel64]{Be64}
J.~S. Bell.
\newblock On the {E}instein {P}odolsky {R}osen paradox.
\newblock {\em Phys. Phys. Fiz.}, 1(3):195--200, 1964.

\bibitem[BPM{\etalchar{+}}97]{BPMEWZ}
Dik Bouwmeester, Jian-Wei Pan, Klaus Mattle, Manfred Eibl, Harald Weinfurter,
  and Anton Zeilinger.
\newblock Experimental quantum teleportation.
\newblock {\em Nature}, 390(6660):575--579, 1997.

\bibitem[Bra03]{Br03}
Gilles Brassard.
\newblock Quantum communication complexity.
\newblock volume~33, pages 1593--1616. 2003.
\newblock Special issue dedicated to David Mermin, Part II.

\bibitem[BSST99]{BSSJT}
Charles~H. Bennett, Peter~W. Shor, John~A. Smolin, and Ashish~V. Thapliyal.
\newblock Entanglement-assisted classical capacity of noisy quantum channels.
\newblock {\em Phys. Rev. Lett.}, 83:3081--3084, Oct 1999.

\bibitem[BvDHT99]{BvDHT}
Harry Buhrman, Wim van Dam, Peter H\o{}yer, and Alain Tapp.
\newblock Multiparty quantum communication complexity.
\newblock {\em Phys. Rev. A}, 60:2737--2741, Oct 1999.

\bibitem[BW92]{BeWi92}
Charles~H. Bennett and Stephen~J. Wiesner.
\newblock Communication via one- and two-particle operators on
  {E}instein-{P}odolsky-{R}osen states.
\newblock {\em Phys. Rev. Lett.}, 69(20):2881--2884, 1992.

\bibitem[C{\DJ}13]{CDD}
Lin Chen and Dragomir~{\v{Z}} {\DJ}okovi{\'c}.
\newblock Separability problem for multipartite states of rank at most 4.
\newblock {\em J. Phys. A}, 46(27):275304, 24, 2013.

\bibitem[Cho82]{Ch80}
M.D. Choi.
\newblock Positive linear-maps.
\newblock In {\em in Operator Algebras and Applications (Kingston, 1980),
  Proc.Sympos.Pure.Math.}, volume~38, pages 583--590. Amer.Math.Soc., 1982.

\bibitem[CKK{\etalchar{+}}21]{CKKLY21}
Euijung Chang, Jaeyoung Kim, Hyesun Kwak, Hun~Hee Lee, and Sang-Gyun Youn.
\newblock Irreducibly $ su (2) $-covariant quantum channels of low rank.
\newblock {\em to appear in Rev. Math. Phys., arXiv preprint arXiv:2105.00709},
  2021.

\bibitem[COS18]{COS18}
Beno\^{i}t Collins, Hiroyuki Osaka, and Gunjan Sapra.
\newblock On a family of linear maps from {$M_n(\Bbb C)$} to {$M_{n^2}(\Bbb
  C)$}.
\newblock {\em Linear Algebra Appl.}, 555:398--411, 2018.

\bibitem[DPR07]{DPR07b}
A.~R.~Usha Devi, R.~Prabhu, and A.~K. Rajagopal.
\newblock Characterizing multiparticle entanglement in symmetric $n$-qubit
  states via negativity of covariance matrices.
\newblock {\em Phys. Rev. Lett.}, 98:060501, Feb 2007.

\bibitem[EK00]{EK00}
Myoung-Hoe Eom and Seung-Hyeok Kye.
\newblock Duality for positive linear maps in matrix algebras.
\newblock {\em Math. Scand.}, 86(1):130--142, 2000.

\bibitem[Eke91]{Ek91}
Artur~K. Ekert.
\newblock Quantum cryptography based on {B}ell's theorem.
\newblock {\em Phys. Rev. Lett.}, 67(6):661--663, 1991.

\bibitem[EW01]{EW01}
T.~Eggeling and R.~F. Werner.
\newblock Separability properties of tripartite states with
  ${U}\ensuremath{\bigotimes}{U}\ensuremath{\bigotimes}{U}$ symmetry.
\newblock {\em Phys. Rev. A}, 63:042111, Mar 2001.

\bibitem[G\"11]{Gu11}
Otfried G\"{u}hne.
\newblock Entanglement criteria and full separability of multi-qubit quantum
  states.
\newblock {\em Phys. Lett. A}, 375(3):406--410, 2011.

\bibitem[GBW21]{GBW21}
Martina Gschwendtner, Andreas Bluhm, and Andreas Winter.
\newblock Programmability of covariant quantum channels.
\newblock {\em Quant.}, 5(6):488, 2021.

\bibitem[Gha10]{Gh10}
Sevag Gharibian.
\newblock Strong {NP}-hardness of the quantum separability problem.
\newblock {\em Quantum Inf. Comput.}, 10(3-4):343--360, 2010.

\bibitem[Gur03]{Gu03}
Leonid Gurvits.
\newblock Classical deterministic complexity of {E}dmond's problem and quantum
  entanglement.
\newblock In {\em Proceedings of the {T}hirty-{F}ifth {A}nnual {ACM}
  {S}ymposium on {T}heory of {C}omputing}, pages 10--19. ACM, New York, 2003.

\bibitem[HH99]{HoHo99}
Micha\l{} Horodecki and Pawe\l{} Horodecki.
\newblock Reduction criterion of separability and limits for a class of
  distillation protocols.
\newblock {\em Phys. Rev. A}, 59:4206--4216, Jun 1999.

\bibitem[HHH96]{HHH96}
Micha\l Horodecki, Pawe\l Horodecki, and Ryszard Horodecki.
\newblock Separability of mixed states: necessary and sufficient conditions.
\newblock {\em Phys. Lett. A}, 223(1-2):1--8, 1996.

\bibitem[HHH99]{HHH99}
Pawe\l{} Horodecki, Micha\l{} Horodecki, and Ryszard Horodecki.
\newblock Bound entanglement can be activated.
\newblock {\em Phys. Rev. Lett.}, 82:1056--1059, Feb 1999.

\bibitem[HHHO05]{HHHO05}
Karol Horodecki, Micha\l{} Horodecki, Pawe\l{} Horodecki, and Jonathan
  Oppenheim.
\newblock Secure key from bound entanglement.
\newblock {\em Phys. Rev. Lett.}, 94:160502, Apr 2005.

\bibitem[HHHO09]{HHHO09}
Karol Horodecki, Micha\l Horodecki, Pawe\l Horodecki, and Jonathan Oppenheim.
\newblock General paradigm for distilling classical key from quantum states.
\newblock {\em IEEE Trans. Inform. Theory}, 55(4):1898--1929, 2009.

\bibitem[HK16a]{HK16}
Kil-Chan Ha and Seung-Hyeok Kye.
\newblock Construction of three-qubit genuine entanglement with bipartite
  positive partial transposes.
\newblock {\em Phys. Rev. A}, 93:032315, Mar 2016.

\bibitem[HK16b]{HaKy16}
Kyung~Hoon Han and Seung-Hyeok Kye.
\newblock Construction of multi-qubit optimal genuine entanglement witnesses.
\newblock {\em J. Phys. A}, 49(17):17503, 16, 2016.

\bibitem[HPHH08]{HPHH08}
Karol Horodecki, {\L}ukasz Pankowski, Micha{\l} Horodecki, and Pawe{\l}
  Horodecki.
\newblock Low-dimensional bound entanglement with one-way distillable
  cryptographic key.
\newblock {\em IEEE Trans. Inform. Theory}, 54(6):2621--2625, 2008.

\bibitem[HSR03]{HSR03}
Michael Horodecki, Peter~W. Shor, and Mary~Beth Ruskai.
\newblock Entanglement breaking channels.
\newblock {\em Rev. Math. Phys.}, 15(6):629--641, 2003.

\bibitem[JSW{\etalchar{+}}00]{JSWWZ00}
Thomas Jennewein, Christoph Simon, Gregor Weihs, Harald Weinfurter, and Anton
  Zeilinger.
\newblock Quantum cryptography with entangled photons.
\newblock {\em Phys. Rev. Lett.}, 84:4729--4732, May 2000.

\bibitem[Kay11]{Ka11}
Alastair Kay.
\newblock Optimal detection of entanglement in greenberger-horne-zeilinger
  states.
\newblock {\em Phys. Rev. A}, 83:020303, Feb 2011.

\bibitem[KCL05]{KCL05}
J.~K. Korbicz, J.~I. Cirac, and M.~Lewenstein.
\newblock Spin squeezing inequalities and entanglement of $n$ qubit states.
\newblock {\em Phys. Rev. Lett.}, 95:120502, Sep 2005.

\bibitem[KMS20]{KMS20}
Piotr Kopszak, Marek Mozrzymas, and Micha\l Studzi\'{n}ski.
\newblock Positive maps from irreducibly covariant operators.
\newblock {\em J. Phys. A}, 53(39):395306, 33, 2020.

\bibitem[Kye23]{Kye23}
Seung-Hyeok Kye.
\newblock Compositions and tensor products of linear maps between matrix
  algebras.
\newblock {\em Linear Algebra Appl.}, 658:283--309, 2023.

\bibitem[Las10]{La10}
Jean~Bernard Lasserre.
\newblock {\em Moments, positive polynomials and their applications}, volume~1
  of {\em Imperial College Press Optimization Series}.
\newblock Imperial College Press, London, 2010.

\bibitem[LY22]{LeYo22}
Hun~Hee Lee and Sang-Gyun Youn.
\newblock Quantum channels with quantum group symmetry.
\newblock {\em Comm. Math. Phys.}, 389(3):1303--1329, 2022.

\bibitem[Mas06]{Ma06}
Llu\'{\i}s Masanes.
\newblock All bipartite entangled states are useful for information processing.
\newblock {\em Phys. Rev. Lett.}, 96:150501, Apr 2006.

\bibitem[MRS15]{MRS15}
Marek Mozrzymas, Adam Rutkowski, and Micha\l Studzi\'{n}ski.
\newblock Using non-positive maps to characterize entanglement witnesses.
\newblock {\em J. Phys. A}, 48(39):395302, 11, 2015.

\bibitem[MWKZ96]{MWKZ96}
Klaus Mattle, Harald Weinfurter, Paul~G. Kwiat, and Anton Zeilinger.
\newblock Dense coding in experimental quantum communication.
\newblock {\em Phys. Rev. Lett.}, 76:4656--4659, Jun 1996.

\bibitem[NPW{\etalchar{+}}00]{NPWBK00}
D.~S. Naik, C.~G. Peterson, A.~G. White, A.~J. Berglund, and P.~G. Kwiat.
\newblock Entangled state quantum cryptography: Eavesdropping on the ekert
  protocol.
\newblock {\em Phys. Rev. Lett.}, 84:4733--4736, May 2000.

\bibitem[NZ16]{NiZh16}
Jiawang Nie and Xinzhen Zhang.
\newblock Positive maps and separable matrices.
\newblock {\em SIAM J. Optim.}, 26(2):1236--1256, 2016.

\bibitem[Pau02]{Pau02}
Vern Paulsen.
\newblock {\em Completely bounded maps and operator algebras}, volume~78 of
  {\em Cambridge Studies in Advanced Mathematics}.
\newblock Cambridge University Press, Cambridge, 2002.

\bibitem[Per96]{Pe96}
Asher Peres.
\newblock Separability criterion for density matrices.
\newblock {\em Phys. Rev. Lett.}, 77(8):1413--1415, 1996.

\bibitem[SDN22]{SDN22}
Satvik Singh, Nilanjana Datta, and Ion Nechita.
\newblock Ergodic theory of diagonal orthogonal covariant quantum channels,
  2022.

\bibitem[SN21]{SiNe21}
Satvik Singh and Ion Nechita.
\newblock Diagonal unitary and orthogonal symmetries in quantum theory.
\newblock {\em {Quantum}}, 5:519, August 2021.

\bibitem[SN22]{SiNe22b}
Satvik Singh and Ion Nechita.
\newblock {The PPT$^2$ Conjecture Holds for All Choi-Type Maps}.
\newblock {\em Annales Henri Poincare}, 23(9):3311--3329, 2022.

\bibitem[Sr82]{Sto82}
Erling St\o~rmer.
\newblock Decomposable positive maps on {$C^{\ast} $}-algebras.
\newblock {\em Proc. Amer. Math. Soc.}, 86(3):402--404, 1982.

\bibitem[TBZG00]{TBZG00}
W.~Tittel, J.~Brendel, H.~Zbinden, and N.~Gisin.
\newblock Quantum cryptography using entangled photons in energy-time bell
  states.
\newblock {\em Phys. Rev. Lett.}, 84:4737--4740, May 2000.

\bibitem[TG09]{TG09}
G\'eza T\'oth and Otfried G\"uhne.
\newblock Entanglement and permutational symmetry.
\newblock {\em Phys. Rev. Lett.}, 102:170503, May 2009.

\bibitem[Tim08]{tim08}
Thomas Timmermann.
\newblock {\em An Invitation to Quantum Groups and Duality}.
\newblock European Mathematical Society, 2008.

\bibitem[VW01]{VW01}
K.~G.~H. Vollbrecht and R.~F. Werner.
\newblock Entanglement measures under symmetry.
\newblock {\em Phys. Rev. A}, 64:062307, Nov 2001.

\bibitem[VW02]{VW02}
Karl Gerd~H. Vollbrecht and Michael~M. Wolf.
\newblock Activating distillation with an infinitesimal amount of bound
  entanglement.
\newblock {\em Phys. Rev. Lett.}, 88:247901, May 2002.

\bibitem[Wan95]{Wa95}
Shuzhou Wang.
\newblock Free products of compact quantum groups.
\newblock {\em Comm. Math. Phys.}, 167(3):671--692, 1995.

\bibitem[Wat18]{wat2018}
John Watrous.
\newblock {\em The Theory of Quantum Information}.
\newblock Cambridge University Press, 2018.

\bibitem[Wer89]{We89}
Reinhard~F. Werner.
\newblock Quantum states with einstein-podolsky-rosen correlations admitting a
  hidden-variable model.
\newblock {\em Phys. Rev. A}, 40:4277--4281, Oct 1989.

\bibitem[Wor76]{Wo76}
S.~L. Woronowicz.
\newblock Nonextendible positive maps.
\newblock {\em Comm. Math. Phys.}, 51(3):243--282, 1976.

\bibitem[Wor87]{Wo87}
S.~L. Woronowicz.
\newblock Compact matrix pseudogroups.
\newblock {\em Comm. Math. Phys.}, 111(4):613--665, 1987.

\end{thebibliography}

\end{document}